\newcommand{\emaillogo}{\tikz[baseline=-0.05em,scale=0.4]{
  \draw[thick,line width=0.5pt] (0,0) rectangle (0.5,0.35);
  \draw[thick,line width=0.5pt] (0,0.35) -- (0.25,0.2) -- (0.5,0.35);
}}
\definecolor{darkblue}{RGB}{40, 85, 120} 
\theoremstyle{theorem}
\newtheorem{theorem}{Theorem}
\newtheorem{lemma}{Lemma}
\newtheorem{definition}{Definition}
\newtheorem{proposition}{Proposition}
\theoremstyle{remark}
\newtheorem{remark}{Remark}[section]
\renewcommand\normalsize{\fontsize{10.5pt}{14pt}\selectfont}
\DeclareMathAlphabet{\mathcal}{OMS}{cmsy}{m}{n}
\newcommand\eB           {\EuScript{B}}
\newcommand\eC           {\EuScript{C}}
\newcommand\eD           {\EuScript{D}}
\newcommand\eI           {\EuScript{I}}
\newcommand\eM          {\EuScript{M}}
\newcommand\eS         {\EuScript{S}}
\newcommand{\Av}{\mathbb{A}}
\newcommand{\Bf}{\mathbb{B}}
\newcommand{\Zbb}{\mathbb{Z}}
\newcommand{\Cbb}{\mathbb{C}}
\newcommand{\Hbb}{\mathbb{H}}
\newcommand{\one}{\mathbb{1}}
\newcommand{\Rep}{\mathsf{Rep}}
\newcommand{\Mod}{\mathsf{Mod}}
\newcommand{\Fib}{\mathsf{Fib}}
\newcommand{\XR}{\overset{\rightarrow}{X}}
\newcommand{\XL}{\overset{\leftarrow}{X}}
\newcommand{\ZR}{\overset{\rightarrow}{Z}}
\newcommand{\ZL}{\overset{\leftarrow}{Z}}
\newcommand{\cone}{(1)}
\newcommand{\ctwo}{(2)}
\newcommand{\cthree}{(3)}
\newcommand{\cfour}{(4)}
\newcommand\Tr{\operatorname{Tr}}
\newcommand{\id}{\operatorname{id}}
\newcommand{\Irr}{\operatorname{Irr}}
\newcommand{\Hom}{\operatorname{Hom}}
\newcommand{\End}{\operatorname{End}}
\newcommand{\Cocom}{\operatorname{Cocom}}
\newcommand{\Fun}{\mathsf{Fun}}
\newcommand{\Vect}{\mathsf{Vect}}
\newcommand{\cA}{\text{\usefont{OMS}{cmsy}{m}{n}A}}
\newcommand{\cF}{\text{\usefont{OMS}{cmsy}{m}{n}F}}
\newcommand{\cH}{\text{\usefont{OMS}{cmsy}{m}{n}H}}
\renewcommand{\tableofcontents}{%
  \begingroup
  \setlength{\parskip}{1pt}%
  \renewcommand{\baselinestretch}{0.9}\normalsize
  \@starttoc{toc}%
  \endgroup
}
\begin{document}

\flushbottom

\title{\fontsize{15pt}{15pt}Weak Hopf non-invertible symmetry-protected topological spin liquid and lattice realization of (1+1)D symmetry topological field theory}

\author[a,b,\emaillogo]{Zhian Jia\orcidlink{0000-0001-8588-173X}}

\affiliation[a]{Centre for Quantum Technologies, National University of Singapore, SG 117543, Singapore}
\affiliation[b]{Department of Physics, National University of Singapore, SG 117543, Singapore}
\affiliation[\emaillogo]{Email: \href{mailto:giannjia@foxmail.com}{giannjia@foxmail.com}}

\abstract{We propose weak Hopf symmetry as a general framework to explore (1+1)D topological phases that exhibit non-invertible symmetries. Inspired by the Symmetry Topological Field Theory (SymTFT) description of quantum phases with non-invertible symmetry, we construct a lattice model by introducing two distinct topological boundary conditions for a weak Hopf lattice gauge theory. One boundary encodes the topological symmetry information, while the other incorporates the non-topological dynamics. The resulting model is termed the cluster ladder model. We demonstrate that the cluster state model is a special case of this broader class of lattice models exhibiting weak Hopf symmetry $ H \times \hat{H} $, where $ H $ is a weak Hopf algebra and $ \hat{H} $ is its dual weak Hopf algebra. On a closed manifold, the symmetry reduces to $ \operatorname{Cocom}(H) \times \operatorname{Cocom}(\hat{H}) $, corresponding to the cocommutative subalgebras of $ H \times \hat{H} $. An essential weak Hopf sub-symmetry is $ \operatorname{Cocom}(H) \times \mathsf{Rep}(H) $, which, in the finite group case, reduces to the familiar symmetry $ G \times \mathsf{Rep}(G) $. To exactly solve the lattice model, we introduce a weak Hopf tensor network. Furthermore, we demonstrate how to construct the lattice realization of an arbitrary fusion category symmetry $ \EuScript{S} $ via combining Tannaka–Krein reconstruction or weak Hopf tube algebra and the cluster ladder model.}

\keywords{Anyon, Quantum Symmetry, Topological State of Matter, Topological Field Theory, Symmetry-Protected Topological (SPT) Phase}

\maketitle

\section{Introduction}

The symmetry-protected topological (SPT) phases \cite{Haldane1983nonlinear,haldane1983continuum,Affleck1987AKLT,Gu2009SPT,Chen2011classification,Pollmann2012symmetry,Chen20112dSPT,Pollmann2012detection,Schuch2011classifyingSPT,Chen2012SPTboson,Zeng2015} have garnered significant attention in recent years. The ground state of an SPT phase is short-range entangled, and its topological order is characterized as an invertible topological order \cite{Wen2017zoo,freed2021reflection}. This implies that, on any closed manifold, SPT phases have a unique ground state.
Distinct SPT phases with a given symmetry cannot be smoothly deformed into each other without undergoing a phase transition, provided the deformation preserves the symmetry. However, if the symmetry is completely broken during the deformation, the SPT states can all be smoothly deformed into a product state.
The SPT phases are of particular interest due to their boundary behavior, the boundary of a SPT state is either gapless or degenerate. 
The boundary effective theory of a non-trivial SPT state always exhibits a pure gauge anomaly or a mixed gauge-gravity anomaly with respect to the symmetry group \cite{Wen2013SPT}.
The classification of \( d \)-dimensional bosonic SPT phases with an internal symmetry group \( G \) is described by the group cohomology \( H^{d+1}(G, U(1)) \) \cite{Wen2013SPT, Chen2013SPTclassification}\footnote{In this work, \( d \) denotes the spatial dimension, while \( D \) represents the spacetime dimension.}. However, it is known that some bosonic SPT phases cannot be captured by this cohomology-based framework \cite{Vishwanath2013physics, Wang2013SPT, Burnell2014spt}. This limitation has motivated a broader classification approach based on cobordism theory, in which SPT phases are characterized by the cobordism group \( \operatorname{Hom}(\Omega_{d+1}^G, U(1)) \). Here, \( \Omega_{d+1}^G \) denotes the bordism group of \((d+1)\)-dimensional manifolds with a \( G \)-structure \cite{kapustin2014SPT, kapustin2014bosonicSPT, freed2021reflection, yonekura2019cobordism}.
For fermionic SPT phases, a complete classification of non-interacting cases has been achieved via K-theory \cite{kitaev2009periodic, ryu2010topological}. In the case of interacting fermionic SPT phases, alternative frameworks based on group supercohomology \cite{Gu2014SPTfermion} and cobordism theory \cite{kapustin2015fermionic} have been proposed.
In Ref.~\cite{kong2020classification}, a classification based on minimal modular extensions of unitary braided fusion categories is proposed, which classifies both bosonic and fermionic SPT phases.

On the other hand, significant advancements have been made in the study of generalized symmetries, which encompass new and novel forms of symmetries. Notable examples include high-form symmetries \cite{gaiotto2015generalized, kapustin2017higher, gomes2023introduction, Bhardwaj2024lecture}, non-invertible symmetries \cite{cordova2022snowmass, brennan2023introduction, mcgreevy2023generalized, luo2023lecture, shao2024whats, SchaferNameki2024ICTP, Bhardwaj2024lecture, delcamp2024higher}, fusion category symmetries \cite{Frohlich2004kramers, fuchs2007topological, frohlich2010defect, bhardwaj2018finite, chang2019topological, thorngren2019fusion, thorngren2021fusion, komargodski2021symmetries, inamura2022lattice, inamura2023fermionization}, algebraic higher symmetries \cite{Kong2020algebraic}, and Hopf and weak Hopf symmetries \cite{bais2003hopf, Buerschaper2013a, chen2021ribbon, ciamprone2024weakquasihopfalgebrasctensor,jia2023boundary, Jia2023weak, jia2024weakTube, jia2024generalized}. Category theory provides a general framework for us to investigate these generalized symmetries. The SPT phases protected by these generalized symmetries have attracted considerable attention \cite{inamura2021topological,Kong2020algebraic, SchaferNameki2024ICTP, huang2023topologicalholo, bhardwaj2024lattice, freed2024topSymTFT, gaiotto2021orbifold, bhardwaj2023generalizedcharge, apruzzi2023symmetry, bhardwaj2024gappedphases, Zhang2024anomaly, Ji2020categoricalsym}.

The concept of categorical symmetry can be understood through a holographic perspective \cite{Kong2020algebraic, Ji2020categoricalsym, chatterjee2024TopHolo}: an \( n \)-dimensional categorical symmetry corresponds to a topological order in one higher dimension. This approach, now referred to as symmetry topological field theory (SymTFT) or topological holography, provides a comprehensive framework for understanding non-invertible or categorical symmetries in both gapped and gapless phases \cite{Kong2020algebraic, SchaferNameki2024ICTP, huang2023topologicalholo, bhardwaj2024lattice, freed2024topSymTFT, gaiotto2021orbifold, bhardwaj2023generalizedcharge, apruzzi2023symmetry, bhardwaj2024gappedphases, Zhang2024anomaly, Ji2020categoricalsym, bhardwaj2024clubsandwich, bhardwaj2024hassediagramsgaplessspt}.

Consider an \( n \)-dimensional topological order,
the $(n-1)$-dimensional defects can be regarded as 1-morphisms, $(n-2)$-dimensional defects between $(n-1)$-dimensional defects can be regarded as 2-morphisms, and so on. Thus, the corresponding topological order is described by a fusion $n$-category \cite{kong2014braided,kong2017boundary,Kong2020algebraic,johnson2022classification}.
We can combine all $d$-dimensional topological orders to form a category $\mathbf{TO}_{d}$. For this category, we need to include domain walls between different topological orders and the stacking operation of topological orders to ensure that $\mathbf{TO}_{d}$ is closed.
From Hamiltonian point of view, consider the moduli space $\mathfrak{M}_{nd}$ formed by all $nd$ gapped liquid Hamiltonians. An elements in $\pi_0(\mathfrak{M}_{nd})$ is a topological order \cite{zeng2015quantum,Swingle2016,Hsin2023moduli}.
Two topological orders are considered equivalent if they can be smoothly deformed into each other through a continuous path of Hamiltonians without closing the energy gap.
These equivalence classes of topological orders are described first fundamental group $\pi_1(\mathfrak{M})$. From algebraic point of view, two topological orders are equivalent if they can be connected by some invertible domain wall, the invertible domain walls are classified by  $\pi_1(\mathfrak{M})$.
The $(n-1)$-dimensional defects can fuse but not braid. If we exclude these defects, we obtain a braided fusion $(n-1)$-category of excitations. This braided fusion category is the looping of the $n$-fusion category of topological order \cite{Kong2020algebraic,johnson2022classification}.
A special type of topological order is called an invertible topological order $\eC$, which is invertible under the stacking operation. Specifically, there exists a topological order $\eD$ such that the stacking of $\eC$ and $\eD$ yields the trivial topological order: $\eC \boxtimes \eD \simeq \eI$. The SPT phases possess exclusively invertible topological order, which guarantees the unique ground state of the SPT phase on any closed manifold.

For an \( n \)-dimensional system, its categorical symmetry is characterized by a fusion \( n \)-category \( \mathcal{S} \). From a topological holography perspective, this fusion \( n \)-symmetry \( \mathcal{S} \) corresponds to an \( (n+1) \)-dimensional topological order, which is described by the Drinfeld center \( \mathcal{Z}(\mathcal{S}) \). Using the SymTFT sandwich construction \cite{Kong2020algebraic, SchaferNameki2024ICTP, huang2023topologicalholo, bhardwaj2024lattice, freed2024topSymTFT, gaiotto2021orbifold, bhardwaj2023generalizedcharge, apruzzi2023symmetry, bhardwaj2024gappedphases, Zhang2024anomaly, Ji2020categoricalsym,bhardwaj2024clubsandwich,bhardwaj2024hassediagramsgaplessspt}, this relationship can be understood as follows: the symmetry \( \mathcal{S} = \mathcal{B}_{\text{sym}} \) represents a topological boundary condition of the topological field theory \( \mathcal{Z}(\mathcal{S}) \). Additionally, there may exist another boundary, \( \mathcal{B}_{\text{phys}} \), of \( \mathcal{Z}(\mathcal{S}) \), which encodes the non-topological aspects of the system. The boundary \( \mathcal{B}_{\text{phys}} \) can be either gapped or gapless.
For an \( n = 1 \)-dimensional system, the categorical symmetry is described by a fusion category \( \mathcal{S} \) (fusion 1-category). The symmetry is considered anomaly-free if there exists a 1-dimensional \( \mathcal{S} \)-SPT phase; otherwise, it is referred to as anomalous. For an anomaly-free \( \mathcal{S} \), the \( \mathcal{S} \)-SPT phases are classified by fiber functors \( F: \mathcal{S} \to \mathsf{Vect} \), which are tensor functors mapping \( \mathcal{S} \) to the category of finite-dimensional vector spaces \cite{thorngren2019fusion}.

An interesting and challenging question is whether it is possible to construct a lattice model that realizes a given \( \mathcal{S} \)-SPT phase \cite{aasen2016topological, aasen2020topological, inamura2022lattice, Inamura2024fusionSuface, eck2024generalizationskitaevshoneycombmodel, fechisin2023noninvertible, bhardwaj2024lattice, bhardwaj2024gappedphases, Feiguin2007interacting, trebst2008short, buican2017anyonic, Lootens2023dualityHamiltonian, Lootens2024duality, Seiberg2024noninvertible, Seifnashri2024cluster, inamura202411dsptphasesfusion, cordova2024noninvertiblesymmetriesfinitegroup}. Despite the numerous examples of \( \mathcal{S} \)-symmetric phases, a systematic approach for constructing such models remains elusive. In this work, we address this question by exploring it within the framework of weak Hopf symmetry.

For a $(1+1)$D system, the categorical symmetry is described by a (multi)fusion category $\mathcal{S}$. It has been proven that any (multi)fusion category can be realized as the representation category of a weak Hopf algebra~\cite{szlachanyi2000finite,ostrik2003module,etingof2005fusion}. Motivated by this result, we introduce weak Hopf symmetries and use them to construct lattice models by generalizing the cluster state model.

The cluster state model is a widely studied example of lattice models exhibiting SPT phases. For the standard $\mathbb{Z}_2$ cluster state, the Hamiltonian takes the form  
\begin{equation}\label{eq:ClusterHamiltonianNormal}
    H_{\mathrm{cluster}} = -\sum_i Z_{i-1} X_i Z_{i+1},
\end{equation}
which is known to host $\mathbb{Z}_2 \times \mathbb{Z}_2$ SPT order~\cite{son2012topological}.
By applying Hadamard gates to all even vertices of the lattice, one can map this system to the Calderbank-Shor-Steane (CSS)-type cluster state~\cite{brell2015generalized,fechisin2023noninvertible,jia2024generalized}. The stabilizers for the transformed state are of the CSS-type, specifically $X_{i-1} X_i X_{i+1}$ for odd sites ($i \in 2\mathbb{N}+1$) and $Z_{i-1} Z_i Z_{i+1}$ for even sites ($i \in 2\mathbb{N}$). This leads to a modified Hamiltonian of the form
\begin{equation}\label{eq:ClusterHamiltonianCSS}
    H = -\sum_{i \in 2\mathbb{N}} Z_{i-1} Z_i Z_{i+1} - \sum_{i \in 2\mathbb{N}+1} X_{i-1} X_i X_{i+1}.
\end{equation}
Recent studies have revealed that the cluster state model exhibits non-invertible global symmetries. In particular, the symmetry can be described by the fusion category $\Rep(D_8)$, where $D_8$ is the dihedral group of order 8~\cite{Seifnashri2024cluster}.  

Since the qubit-based cluster state corresponds to systems valued in the Abelian group $\mathbb{Z}_2$, a natural generalization is to consider systems defined over arbitrary finite groups $G$~\cite{brell2015generalized}. This generalization involves introducing higher-dimensional analogs of Pauli operators, similar to those used in Kitaev's quantum double model~\cite{Kitaev2003,albert2021spin}.  
The SPT order and non-invertible symmetries of such finite-group cluster states were recently analyzed in Ref.~\cite{fechisin2023noninvertible}. Using tensor network techniques, it was demonstrated that the symmetry of a $G$-valued cluster state is described by $G \times \Rep(G)$, where $\Rep(G)$ is the representation category of $G$.  
We generalize these results to the Hopf algebra setting in Ref.~\cite{jia2024generalized}. 
Furthermore, Ref.~\cite{jia2024generalized} demonstrates that the cluster state model can be understood as a specific realization of a quantum double ladder model, which in turn provides a pathway for extending the construction to more general weak Hopf algebras (and even to weak quasi-Hopf algebras).

In this work, we systematically construct lattice models that realize an \( \eS \)-symmetric phase (more precisely the phase with $\eS$ fusion algebra symmetry). To achieve this, we introduce a general framework for weak Hopf symmetric phases and demonstrate how such phases can be realized via SymTFT. We also show that the (multi)fusion symmetry can be interpreted as a dual weak Hopf symmetry.
Building on the observation that a cluster state SPT phase can be viewed as a thin quantum double model with one smooth and one rough boundary \cite{jia2024generalized}, we introduce the \emph{weak Hopf cluster state model}. Inspired by this construction, we further propose a generalized model, termed the \emph{weak Hopf cluster ladder model}, where the two topological boundaries are chosen to be distinct and do not allow any non-trivial anyonic tunneling channels between them. 
A crucial observation is that the symmetry of the (1+1)D cluster ladder model is given by \( H \times \hat{H} \), where \( H \) represents a weak Hopf symmetry and \( \hat{H} \) is its dual weak Hopf symmetry. Notably, the \( \mathbb{Z}_2 \times \mathbb{Z}_2 \) symmetry of the \( \mathbb{Z}_2 \)-cluster state, the \( G \times \Rep(G) \) symmetry of the \( G \)-cluster state model, and the \( \Cocom(H) \times \Rep(H) \) symmetry are all special cases of this general framework.
Also note that our model is quite different from that based on anyonic chains \cite{Feiguin2007interacting, trebst2008short, buican2017anyonic, bhardwaj2024lattice, bhardwaj2024gappedphases}, where the Hilbert space is constructed from the fusion tree and thus does not have a well-defined local space or tensor product structure \footnote{While a tensor product structure can be assigned to each edge, followed by a projection to enforce the two labels on the edge to be identical, this approach for introducing tensor product structure is not so natural.}.

The paper is organized as follows. In Section~\ref{sec:groupcase}, we use finite groups as examples to illustrate the key ideas of our lattice model. For the $\mathbb{Z}_p$ group with $p$ a prime number, there exists a single type of cluster ladder model, which corresponds to the $\mathbb{Z}_p$ cluster state model. In contrast, for the $S_3$ group, two distinct types of cluster ladder models that realize SPT phases arise: one corresponds to the cluster state model, while the other exhibits topological boundary conditions that differ from the conventional smooth and rough boundaries.
In Section~\ref{sec:WHAsymmetry}, we provide a detailed discussion of weak Hopf symmetries, introducing the necessary mathematical formalism and results. We begin with the case of invertible symmetries, which correspond to group algebra Hopf symmetries, and then extend the analysis to general weak Hopf symmetries and comodule algebra symmetries. Furthermore, we demonstrate how fusion category symmetries can be interpreted as dual weak Hopf symmetries. Finally, we explore the implications of weak Hopf symmetries for single-qudit systems.  
Section~\ref{sec:tensor-network} introduces weak Hopf tensor network states, which serve as a powerful tool for analyzing and solving our lattice model.
In Section~\ref{sec:latticeI}, we present the weak Hopf generalization of the cluster state model. The model is solved using weak Hopf tensor network states. We explore the weak Hopf symmetry of the model on both closed and open manifolds, demonstrating that there is a $H \times \hat{H}$ symmetry for open manifolds, while for closed manifolds, the symmetry is given by $\Cocom(H) \times \Cocom(\hat{H})$.  
Section~\ref{sec:latticeII} extends the cluster state model to incorporate general boundary conditions. The lattice model is constructed using comodule algebras and can also be solved via weak Hopf tensor network states.  
Finally, Section~\ref{sec:Discussion} provides a discussion of the results and outlines several open problems for future research.

\section{Warm-up example: finite-group quantum cluster ladder}
\label{sec:groupcase}
In Ref.~\cite{jia2024generalized}, it is demonstrated that the cluster state model is indeed a quantum double ladder model with rough and smooth boundaries. The aim of this paper is to generalize the construction presented therein to arbitrary weak Hopf algebras, encompassing all possible non-invertible symmetries in (1+1)D.  
Before delving into the cluster ladder model realization of non-invertible symmetry for the most general weak Hopf case, let us first consider a warm-up example of the finite-group cluster ladder model, focusing specifically on the \(\mathbb{Z}_p\) and \(S_3\) quantum cluster ladder models. The finite-group cluster state model is also discussed in Refs.~\cite{brell2015generalized,fechisin2023noninvertible}. The macroscopic theory of the corresponding SymTFT is given in Ref.~\cite{huang2023topologicalholo}. Here, we emphasize the comodule algebra approach and discuss it in a more general setting via the so-called quantum cluster ladder model.

The group-based Pauli X operators are defined as 
\begin{equation}
    \XR_g|h\rangle = |gh\rangle,\quad \XL_g|h\rangle =|hg^{-1}\rangle.
\end{equation}
For some irreducible representation $\Gamma \in \Rep(G)$, the group-based Pauli Z operators are defined by
\begin{equation}
    Z_{\Gamma} = \sum_{g\in G} \Gamma(g) \otimes |g\rangle \langle g|,\quad Z^{\ddagger}_{\Gamma}=\sum_{g\in G} \Gamma(g^{-1})|g\rangle \langle g|.
\end{equation}
If we take trace of the representation matrix, we obtain
\begin{equation}
    \ZR_{\chi_{\Gamma}}=\sum_{g\in G} \chi_{\Gamma}(g)|g\rangle\langle g|, \quad \ZL_{\chi_{\Gamma}}=\sum_{g\in G} \chi_{\Gamma}(g^{-1})|g\rangle\langle g|,
\end{equation}
where $\chi_{\Gamma}$ is character of $\Gamma \in \Rep(G)$.

For a quantum double model $D(G)$ of finite group $G$, the topological boundary conditions are determined by a pair $(K,\omega)$ with $K$ a subgroup of $G$ and $\omega$ a two-cocycle \cite{Kitaev2003,Beigi2011the,Cong2017,Bombin2008family}.
The topological boundary conditions are equivalently classified by the comodule algebras over Hopf algebra $\Cbb[G]$ \cite{jia2023boundary,Jia2023weak}. Here for simplicity, we will only consider the case with trivial 2-cocycle.
The 1d cluster ladder is defined as the following lattice (with periodic boundary condition at two ends):
\begin{equation}\label{eq:ClusterLadderG}
\begin{aligned}
\begin{tikzpicture}
    \def\n{5}
    \def\s{1} 
    \fill[green!20] (0, 0) rectangle (\n*\s+\s, \s); 
    \foreach \i in {0,...,\n} {
        \draw[-stealth, line width=1.0pt,blue, midway] (\i*\s, 0) -- (\i*\s+\s, 0);
        \draw[-stealth, line width=1.0pt,red, midway] (\i*\s, \s) -- (\i*\s+\s, \s);
        \draw[-stealth,line width=1.0pt, midway] (\i*\s, 0) -- (\i*\s, \s);
    }
    \draw[-stealth, midway,line width=1.0pt] (\n*\s+\s, 0) -- (\n*\s+\s, \s);
          \draw[-stealth, white, line width=2pt, midway] (6, 0) -- (6, 1.02);
\end{tikzpicture}
\end{aligned}
\end{equation}
For the symmetry boundary (blue edges), we place $K=\Cbb[M]$ on each edge. For the physical boundary (red edges), we place $J=\Cbb[N]$ on each edge. In the bulk (black edges), we assign $W=\Cbb[G]$ to each edge.
This means that the total Hilbert space is $\mathcal{H}_{\rm tot}=\cH_{\rm bk}\otimes \cH_{\rm sym} \otimes \cH_{\rm phys}$ with $\cH_{\rm bk}=\otimes_j \Cbb[G]$, $\cH_{\rm sym} = \otimes_k \Cbb[M]$, and $\cH_{\rm phys}=\otimes_l \Cbb[N]$.

The SymTFT bulk operator is the face operator which is defined as
\begin{equation}
    \Bf_f = \sum_{\Gamma\in \operatorname{Irr}(G)}  \frac{d_{\Gamma}}{|G|}\Bf_f^{\chi_{\Gamma}}
\end{equation}
where 
\begin{equation}\label{eq:Bf}
	\Bf_f^{\chi_{\Gamma}}
	\big{|}	\begin{aligned}
		\begin{tikzpicture}
                    \fill[green!20] (-0.5, -0.5) rectangle ++(1,1); 
			\draw[-stealth, line width=1.0pt,red, midway] (-0.5,0.5) -- (0.5,0.5);
		\draw[-stealth,black] (-0.5,-0.5) -- (-0.5,0.5); 
		\draw[-stealth,black] (0.5,-0.5) -- (0.5,0.5); 
		\draw[-stealth, line width=1.0pt,blue, midway] (-0.5,-0.5) -- (0.5,-0.5); 
			\draw [fill = black] (0,0) circle (1.2pt);
			\node[ line width=0.2pt, dashed, draw opacity=0.5] (a) at (0.75,0){$x_1$};
			\node[ line width=0.2pt, dashed, draw opacity=0.5] (a) at (-0.75,0){$x_3$};
			\node[ line width=0.2pt, dashed, draw opacity=0.5] (a) at (0,-0.7){$x_4$};
			\node[ line width=0.2pt, dashed, draw opacity=0.5] (a) at (0,0.7){$x_2$};
		\end{tikzpicture}
	\end{aligned}   \big{ \rangle}     
	= 
	\chi_{\Gamma}(x_1^{-1}x_2x_3x_4^{-1})
	\big{|}	\begin{aligned}
	\begin{tikzpicture}
            \fill[green!20] (-0.5, -0.5) rectangle ++(1,1); 
		\draw[-stealth, line width=1.0pt,red, midway] (-0.5,0.5) -- (0.5,0.5);
		\draw[-stealth,black] (-0.5,-0.5) -- (-0.5,0.5); 
		\draw[-stealth,black] (0.5,-0.5) -- (0.5,0.5); 
		\draw[-stealth, line width=1.0pt,blue, midway] (-0.5,-0.5) -- (0.5,-0.5); 
		\draw [fill = black] (0,0) circle (1.2pt);
		\node[ line width=0.2pt, dashed, draw opacity=0.5] (a) at (0.75,0){$x_1$};
		\node[ line width=0.2pt, dashed, draw opacity=0.5] (a) at (-0.75,0){$x_3$};
		\node[ line width=0.2pt, dashed, draw opacity=0.5] (a) at (0,-0.7){$x_4$};
		\node[ line width=0.2pt, dashed, draw opacity=0.5] (a) at (0,0.7){$x_2$};
	\end{tikzpicture}
\end{aligned}   \big{ \rangle} .
\end{equation}
The symmetry boundary vertex operator is defined as
\begin{equation}\label{eq:Av}
	\Av_{v_s}
\big{|}	\begin{aligned}
		\begin{tikzpicture}
                    \fill[green!20] (-0.5, 0) rectangle ++(1,0.5); 
			\draw[-stealth, line width=1.0pt,blue, midway] (-0.5,0) -- (0,0);
			\draw[-stealth,black] (0,0) -- (0,0.5); 
			\draw[-stealth, line width=1.0pt,blue, midway] (0,0) -- (0.5,0); 
			\draw [fill = black] (0,0) circle (1.2pt);
			\node[ line width=0.2pt, dashed, draw opacity=0.5] (a) at (0.7,0){$x_1$};
			\node[ line width=0.2pt, dashed, draw opacity=0.5] (a) at (-0.7,0){$x_3$};
			\node[ line width=0.2pt, dashed, draw opacity=0.5] (a) at (0,0.7){$x_2$};
		\end{tikzpicture}
	\end{aligned}   \big{ \rangle}     
= 
\frac{1}{|M|} \sum_{g\in M}
\big{|}	\begin{aligned}
	\begin{tikzpicture}
                        \fill[green!20] (-0.5, 0) rectangle ++(1,0.5); 
		\draw[-stealth, line width=1.0pt,blue, midway] (-0.5,0) -- (0,0);
		\draw[-stealth,blue,black] (0,0) -- (0,0.5); 
		\draw[-stealth, line width=1.0pt,blue, midway] (0,0) -- (0.5,0); 
		\draw [fill = black] (0,0) circle (1.2pt);
		\node[ line width=0.2pt, dashed, draw opacity=0.5] (a) at (1,0){$x_1g^{-1}$};
		\node[ line width=0.2pt, dashed, draw opacity=0.5] (a) at (-1,0){$gx_3$};
		\node[ line width=0.2pt, dashed, draw opacity=0.5] (a) at (0,0.75){$x_2g^{-1}$};
	\end{tikzpicture}
\end{aligned}   \big{ \rangle} .
\end{equation}
The physical boundary vertex operator is defined similarly. Both the face and vertex operators are equipped with a counterclockwise ordering (around the face and around the vertex). Alternatively, we can also define the model with a clockwise ordering.
The Hamiltonian is of the form
\begin{equation}
    \mathbb{H}_G^{M,N}=\mathbb{H}_{\rm bk}+\mathbb{H}^M_{\rm sym} + \mathbb{H}^N_{\rm phys} 
\end{equation}
where $\mathbb{H}_{\rm bk}=\sum_f \Bf_f$, $\mathbb{H}_{\rm sym} =\sum_{v_s}\Av_{v_s}$ and $\mathbb{H}_{\rm phys}=\sum_{v_p} \Av_{v_p}$. This model realize the SymTFT with fusion category symmetry give by the boundary excitations over the symmetry boundary.

The topological charges of the 2d bulk are labeled by the irreducible representations of the quantum double \( D(G) \) \cite{Kitaev2003,Beigi2011the,Cong2017,Bombin2008family,jia2023boundary,Jia2023weak}. These representations are classified by two components: the conjugacy class \( [g] \) of \( G \), determining the magnetic charge, and the irreducible representations \( \pi \) of the centralizer \( C_G([g]) \), labeling the electric charge.
For symmetry boundaries $K=\Cbb[M]$, the boundary excitation is described by the fusion category 
\begin{equation}
    \eB_{\rm sym}={^{\Cbb[G]}_K}\Mod_K ,
\end{equation}
which is the category of \(\Cbb[G]\)-covariant \(K|K\)-bimodules \cite{andruskiewitsch2007module,jia2023boundary,Jia2023weak}. The boundary excitations are equivalently classified by pairs \((S, \pi)\), where \(S \in M\backslash G/M\) is a double coset and \(\pi\) is an irreducible representation of \(M \cap r M r^{-1}\), with \(r \in S\) \cite{Cong2017}. For physical boundaries $J=\Cbb[N]$, the excitations can be characterized in a similar manner.
Since we have chosen both boundaries to be topological boundaries, the cluster ladder model exhibits the following symmetry \footnote{This reflects the fact that, in a lattice model, one cannot distinguish which topological boundary is a symmetry boundary, as both boundaries are on equal footing.}:
\begin{equation}
    \eB_{\rm sym} \times \eB_{\rm phys} = {^{\Cbb[G]}_K}\Mod_K \times {_J}\Mod^{\Cbb[G]}_J,
\end{equation}
where \(\Cbb[G]\) is placed on the left and right, respectively, to emphasize that \(K\) is a left \(\Cbb[G]\)-comodule algebra and \(J\) is a right \(\Cbb[G]\)-comodule algebra.

The CSS-type cluster state model is a special case of the quantum cluster ladder model, featuring one rough boundary and one smooth boundary~\cite{jia2024generalized}. 
For this case, \( M = G \) and \( N = \{1\} \). The symmetry boundary excitation becomes \(\eB_{\rm sym} = \Rep(G)\), while the physical boundary excitation becomes \(\Vect_G\), the category of $G$-graded vector spaces. 
Notice that for the rough boundary edge, the boundary space reduces to \(\Cbb\), allowing us to effectively remove it. Additionally, \(\Av_{v_p}\) becomes the identity operator, so we can also remove the boundary vertex operator. The lattice then simplifies to:
\begin{equation}\label{eq:ClusterLatticeG}
\begin{aligned}
\begin{tikzpicture}
    \def\n{5}
    \def\s{1}
    \fill[green!20] (0, 0) rectangle (\n*\s+\s, \s); 
    \foreach \i in {0,...,\n} {
        \draw[-stealth, line width=1.0pt,blue, midway] (\i*\s, 0) -- (\i*\s+\s, 0);
        \draw[dotted, line width=1.0pt,red, midway] (\i*\s, \s) -- (\i*\s+\s, \s);
        \draw[-stealth,line width=1.0pt, midway] (\i*\s, 0) -- (\i*\s, \s);
    }
    \draw[-stealth, midway,line width=1.0pt] (\n*\s+\s, 0) -- (\n*\s+\s, \s);
          \draw[-stealth, white, line width=2pt, midway] (6, 0) -- (6, 1.02);
\end{tikzpicture}
\end{aligned}
\end{equation}
and the corresponding Hamiltonian becomes 
\begin{equation}
    \Hbb_{G}^{G,\{1\}}= -\sum_f \Bf_f^G-\sum_{v_s: \text{sym}} \Av^{G}_{v_s}.
\end{equation}
The symmetry of the model is 
\begin{equation}
     \eB_{\rm sym} \times \eB_{\rm phys} = \Rep(G)\times \Vect_G.
\end{equation}
Notice that choosing different configurations of orientations for each edge results in distinct cluster state models, and choosing a clockwise local ordering also yields a different model. However, all these models realize the same phase. The models discussed in Refs.~\cite{fechisin2023noninvertible,jia2024generalized} are examples of the clockwise model. In this paper, we follow the conventions in Refs.~\cite{Kitaev2003,jia2023boundary,Jia2023weak} by using the counterclockwise local ordering. This will be elaborated in more detail later when discussing weak Hopf cluster ladder models.

\subsection{Abelian cluster ladder model}

Consider the \(\mathbb{Z}_p\) quantum double model, where \(p\) is a prime number.\footnote{We consider this case since there are only two possible topological boundary conditions; the discussion can be generalized to a general Abelian group straightforwardly.}
The topological charges are given by irreducible representations of \( D(\mathbb{Z}_p) \):  
\begin{equation}
    a_{[g],\pi_h} = e^g \otimes m^h=:\varepsilon^{g,h}, g,h\in \Zbb_p.
\end{equation}  
Since \( \mathbb{Z}_p \) is Abelian, \( [g] \) consists of a single element \( \{g\} \) and \( C_{\mathbb{Z}_p}([g]) = \mathbb{Z}_p \). The irreducible representations of the Abelian group are also labeled by elements \( h \in \mathbb{Z}_p \).
The fusion rule reads
\begin{equation}
\varepsilon^{g,h}\otimes \varepsilon^{k,l}=  (e^g \otimes m^h) \otimes (e^{k} \otimes m^{l})= e^{g+k}\otimes m^{h+l}=\varepsilon^{g+k,h+l},
\end{equation}
namely, they form $\Zbb_p\times \Zbb_p$ fusion algebra. In a good choice of gauge, the fusion F-symbols are trivial (i.e., they equal to 1 when allowed by fusion), and the braiding R-symbols read
\begin{equation}
    R_{\varepsilon^{g,h},\varepsilon^{k,l}}=\exp(\frac{2\pi i}{p} gl).
\end{equation}

There are two possible topological boundary conditions of the \(D(\mathbb{Z}_p)\) quantum double model:
\begin{itemize}
    \item \emph{Rough boundary} (\emph{Dirichlet boundary}): This corresponds to the trivial subgroup \(M_r = \{0\}\)\footnote{Here we use \(0\) to denote the identity of the group.}. 
The Lagrangian algebra is \(\mathcal{A}_r = \mathbb{1} \oplus e \oplus \cdots \oplus e^{p-1}\) meaning all electric charges condense on this boundary. The boundary excitation is given by the unitary fusion category \(\Vect_{\mathbb{Z}_p} \simeq \Rep(\Cbb[\Zbb_p]^{\vee})\), where $\Cbb[\Zbb_p]^{\vee}$ denotes the dual algebra of $\Cbb[\Zbb_p]$.
    \item \emph{Smooth boundary} (\emph{Neumann boundary}): This corresponds to the subgroup \(N_s = \mathbb{Z}_p\). The Lagrangian algebra is \(\mathcal{A}_s = \mathbb{1} \oplus m \oplus \cdots \oplus m^{p-1}\), meaning all magnetic fluxes condense on this boundary. The boundary excitation is given by \(\Rep(\mathbb{Z}_p)\simeq \Rep(\Cbb[\Zbb_p])\).
\end{itemize}
In both cases, the associated 2-cocycles are trivial. The triviality of the 2-cocycle can be understood as follows:  
the case of \( H^2(\mathbb{Z}_1, \mathbb{C}^\times) = 0 \) is straightforwardly trivial. For \( H^2(\mathbb{Z}_p, \mathbb{C}^\times) \), we use the relation  
\(
H^n(G, \mathbb{Z}) \simeq H^{n-1}(G, \mathbb{Q}/\mathbb{Z}) \simeq H^{n-1}(G, \mathbb{C}^\times), \quad \text{for all } n \geq 2,
\)  
along with the fact that \( H^3(\mathbb{Z}_p, \mathbb{Z}) = 0 \).

For the Abelian group \(\mathbb{Z}_p\), the Weyl-Heisenberg \(X\) operator is given by
\[
X_g = \sum_{h \in G} |h+g\rangle \langle h|,
\]
where \(X_g = \XR_g\) and \(X_g^{\dagger} = \XL_g\). Recall that all irreducible representations (irreps) of \(\mathbb{Z}_p\) are one-dimensional, and \(\chi_{\Gamma}\) is a complex character function, with \(\Gamma_k\) labeled by a group element \(k \in \mathbb{Z}_p\). In this way, \(\chi_k(h) = \exp(2\pi i k h / p)\).  
The Weyl-Heisenberg \(Z\) operator is defined as  
\[
Z_k = \sum_{h \in G} \chi_k(h) |h\rangle \langle h|,
\]
where \(Z_k = \ZR_{\Gamma_k}\) and \(Z_k^{\dagger} = \ZL_{\Gamma_k}\).

The face operator in this case can be simplified as  $\Bf_f^{\chi_{\Gamma}}= \ZR_{k}\otimes \ZL_{k} \otimes \ZL_{k}$:
\begin{equation}\label{eq:Bf}
	\Bf_f^{\chi_{\Gamma}}
	\big{|}	\begin{aligned}
		\begin{tikzpicture}
                    \fill[green!20] (-0.5, -0.5) rectangle ++(1,1); 
		\draw[-stealth,black] (-0.5,-0.5) -- (-0.5,0.5); 
		\draw[-stealth,black] (0.5,-0.5) -- (0.5,0.5); 
		\draw[-stealth, line width=1.0pt,blue, midway] (-0.5,-0.5) -- (0.5,-0.5); 
			\draw [fill = black] (0,0) circle (1.2pt);
			\node[ line width=0.2pt, dashed, draw opacity=0.5] (a) at (0.75,0){$x_3$};
			\node[ line width=0.2pt, dashed, draw opacity=0.5] (a) at (-0.75,0){$x_1$};
			\node[ line width=0.2pt, dashed, draw opacity=0.5] (a) at (0,-0.7){$x_2$};
		\end{tikzpicture}
	\end{aligned}   \big{ \rangle}     
	= 
	\chi_{\Gamma}(x_1x_2^{-1}x_3^{-1})
	\big{|}	\begin{aligned}
	\begin{tikzpicture}
            \fill[green!20] (-0.5, -0.5) rectangle ++(1,1); 
		\draw[-latex,black] (-0.5,-0.5) -- (-0.5,0.5); 
		\draw[-latex,black] (0.5,-0.5) -- (0.5,0.5); 
		\draw[-stealth, line width=1.0pt,blue, midway] (-0.5,-0.5) -- (0.5,-0.5); 
		\draw [fill = black] (0,0) circle (1.2pt);
		\node[ line width=0.2pt, dashed, draw opacity=0.5] (a) at (0.75,0){$x_3$};
		\node[ line width=0.2pt, dashed, draw opacity=0.5] (a) at (-0.75,0){$x_1$};
		\node[ line width=0.2pt, dashed, draw opacity=0.5] (a) at (0,-0.7){$x_2$};
		\node[ line width=0.2pt, dashed, draw opacity=0.5] (a) at (0,0.7){$x_2$};
	\end{tikzpicture}
\end{aligned}   \big{ \rangle} .
\end{equation}
Thus, the face operator  is given by
\[
\Bf_f = \frac{1}{p} \sum_{k \in \mathbb{Z}_p} \ZR_k \otimes \ZL_k \otimes \ZL_k.
\]
The symmetry boundary operator becomes
\[
\Av_{v_s} = \frac{1}{p} \sum_{g \in \mathbb{Z}_p} \XL_g \otimes \XL_g \otimes \XR_g.
\]
This model is nothing but the cluster state model \cite{jia2024generalized, fechisin2023noninvertible, brell2015generalized}.
The symmetry of the model is \(\Rep(\mathbb{Z}_p) \times \Vect_{\mathbb{Z}_p}\), with the fusion algebra \(\operatorname{Gr}(\Rep(\mathbb{Z}_p)) = \operatorname{Gr}(\Vect_{\mathbb{Z}_p}) = \mathbb{Z}_p \) (we use $\operatorname{Gr}$ to denote Grothendieck group), which gives the usual \(\mathbb{Z}_p \times \mathbb{Z}_p\) cluster state model.

\subsection{$S_3$ cluster ladder model}

For non-Abelian group $S_3= \langle \sigma, \tau: \sigma^2=\tau^3=1, \sigma \tau = \tau^{-1}\sigma \rangle $ (we take $\sigma=(12)$ and $\tau=(123)$), the corresponding quantum double model $D(S_3)$ has eight topological charges described in the following table:
\begin{align*}
\begin{array}{|c|ccc|cc|ccc|}
\hline
      & A & B & C & D & E  & F & G & H   \\
\hline
\text{conjugacy class} ~[g] &  [1] &  [1] &  [1] &  [\sigma]  & [\sigma] &  [\tau] &  [\tau]  &  [\tau]    \\
 \text{irrep of the centralizer}     & \mathbb{1} & \mathrm{sign} & \pi & \omega_2^0 & \omega_2^1 &  \omega_3^0 & \omega_3^1 & \omega_3^2 \\
\hline
\end{array}
\end{align*}
Notice $C_{S_3}([1])=S_3$, we use $\mathbb{1}$ to denote the trivial irrep, $\mathrm{sign}$ to denote the sign representation, and $\pi$ to denote the two-dimensional representation of $S_3$.
For \( C_{S_3}([\sigma]) = \mathbb{Z}_2 \), we use \(\omega_2^0\) and \(\omega_2^1\) to denote the two irreducible representations (irreps) of \(\mathbb{Z}_2\) (\(\omega_2\) is the second root of unity).  
For \( C_{S_3}([\tau]) = \mathbb{Z}_3 \), we use \(\omega_3^k\), where \(k = 0, 1, 2\), to denote its three irreps (\(\omega_3\) is the third root of unity). Their fusion rules are given in Table~\ref{tab:DS3-fusion}  \cite{Beigi2011the,Cui2015universal}.

\begin{table}
\resizebox{\textwidth}{!}{
\begin{tabular}{|c|c|c|c|c|c|c|c|c|}
\hline $\otimes$ &$A$ &$B$ &$C$ &$D$ &$E$ &$F$ &$G$ &$H$\\ \hline
$A$ &$A$ &$B$ &$C$ &$D$ &$E$& $F$ &$G$ &$H$\\ \hline
$B$ &$B$ &$A$ &$C$& $E$ &$D$ &$F$ &$G$ &$H$\\ \hline
$C$ &$C$ &$C$ &$A\oplus B\oplus C$& $D\oplus E$ &$D\oplus E$ & $G\oplus H$& $F\oplus H$ &$F\oplus G$\\ \hline
\multirow{2}{*}{$D$} &\multirow{2}{*}{$D$} &\multirow{2}{*}{$E$} &\multirow{2}{*}{$D\oplus E$}& $A\oplus C\oplus  F$ & $B\oplus C\oplus F$ & \multirow{2}{*}{$D\oplus E$} & \multirow{2}{*}{$D\oplus E$} & \multirow{2}{*}{$D\oplus E$} \\
& & & & $\oplus G\oplus H$ & $\oplus G\oplus H$ & & &  \\ \hline
\multirow{2}{*}{$E$} &\multirow{2}{*}{$E$}& \multirow{2}{*}{$D$}& \multirow{2}{*}{$D\oplus E$} & $B\oplus C\oplus F$ & $A\oplus C\oplus F$ & \multirow{2}{*}{$D\oplus E$} &\multirow{2}{*}{$D\oplus E$} & \multirow{2}{*}{$D\oplus E$} \\
& & & & $\oplus G\oplus H$ & $\oplus G\oplus H$ & & &  \\  \hline
$F$ &$F$ & $F$& $G\oplus H$& $D\oplus E$ & $D\oplus E$ & $A\oplus B\oplus F$ & $H\oplus C$ & $G\oplus C$ \\ \hline
$G$ &$G$ & $G$& $F\oplus H$ & $D\oplus E$ & $D\oplus E$ & $H\oplus C$ & $A\oplus B\oplus G$ & $F\oplus C$ \\ \hline
$H$ &$H$ & $H$& $F\oplus G$ & $D\oplus E$ & $D\oplus E$ & $G\oplus C$ & $F\oplus C$ & $A\oplus B\oplus H$\\\hline
\end{tabular}
}
\caption{Fusion rules of $D(S_3)$ quantum double model.}\label{tab:DS3-fusion}
\end{table}

The $S$ and $T$ matrices of $D(S_3)$ topological phase are of the form:
\begin{equation}
\label{eq:ds3-S}
S = \frac{1}{6}
\begin{bmatrix}
1 & 1 & 2 & 3 & 3 & 2 & 2 & 2 \\
1 & 1 & 2 & -3 & -3 & 2 & 2 & 2 \\
2 & 2 & 4 & 0 & 0 & -2 & -2 & -2 \\
3 & -3 & 0 & 3 & -3 & 0 & 0 & 0 \\
3 & -3 & 0 & -3 & 3 & 0 & 0 & 0 \\
2 & 2 & -2 & 0 & 0 & 4 & -2 & -2 \\
2 & 2 & -2 & 0 & 0 & -2 & -2 & 4 \\
2 & 2 & -2 & 0 & 0 & -2 & 4 & -2 \\
\end{bmatrix},
\end{equation}
\begin{equation}
\label{eq:ds3-T}
{T} = \text{diag}(1,1,1,\omega_2^0,\omega_2^1,\omega_3^0,\omega_3^1,\omega_3^2).
\end{equation}
Notice that the first row of the \( S \)-matrix represents the quantum dimensions of all topological charges, while the \( T \)-matrix encodes the topological spins of these charges. The fusion rule \( N_{XY}^Z \) can be determined from the \( S \)-matrix using the Verlinde formula:  
\begin{align}\label{eq:verlinde}
    N_{XY}^{Z} = \sum_{U \in \Irr (D(G))} \frac{S_{XU} S_{YU} S_{ZU}^{\ast}}{S_{\mathbb{1}U}},
\end{align}  
where the summation runs over all irreducible representations \( U \), and \( \mathbb{1} \) denotes the trivial charge.
For more details of modular data, see Refs.~\cite{bakalov2001lectures,wang2010topological,etingof2016tensor}.

There are four types of topological boundary conditions \cite{Cong2017}:
\begin{itemize}
    \item \emph{Rough boundary}: This corresponds to the trivial subgroup \( M_r = \{1\} \). The Lagrangian algebra is \( \mathcal{A}_r = A \oplus B \oplus 2C \). The boundary excitation is given by \( \Vect_{S_3} \).
    \item \emph{Smooth boundary}: This corresponds to the trivial subgroup \( M_s = S_3 \). The Lagrangian algebra is \( \cA_s = A \oplus D \oplus F \). The boundary excitation is given by \( \Rep(S_3) \).
    \item \emph{\( \Zbb_2 \) boundary}: This corresponds to the subgroup \( \Zbb_2 \) of \( S_3 \). The Lagrangian algebra is \( \cA_2 = A \oplus C \oplus D \).
    \item \emph{\( \Zbb_3 \) boundary}: This corresponds to the subgroup \( \Zbb_3 \) of \( S_3 \). The Lagrangian algebra is \( \cA_3 = A \oplus B \oplus 2F \).
\end{itemize}

For the cluster ladder lattice, if there is no anyonic tunneling channel between the two boundaries, the ground state degeneracy of the model is 1. Such a model is a candidate for a non-invertible SPT phase. This includes the following models:
\begin{enumerate}
    \item \emph{The smooth and rough boundary pair:} $\Hbb[\cA_s,\cA_r]$. This corresponds to the $S_3$ cluster state model. The symmetry is given by $\Rep(S_3) \times \Vect_{S_3}$.
    
    \item \emph{The $\Zbb_2$ and $\Zbb_3$ boundary pair:} $\Hbb[\cA_2,\cA_3]$. This model is entirely different from the cluster state model. Its symmetry is given by $\Rep(D(S_3))_{\cA_2} \times \Rep(D(S_3))_{\cA_3}$, where we use the notation $\Rep(D(S_3))_{\cA}$ to denote the category of $\cA$-modules in $\Rep(D(S_3))$. Equivalently, the symmetry can be described as ${^{\Cbb[S_3]}_K}\Mod_K \times {_J}\Mod^{\Cbb[S_3]}_J$, where $K = \Cbb[\Zbb_2]$ and $J = \Cbb[\Zbb_3]$.
\end{enumerate}
The lattice model with $\Rep(S_3)$ symmetry has been discussed carefully based on the anyonic chain in Ref.~\cite{bhardwaj2024lattice}. Our model here can be regarded as an alternative lattice realization.

For other pairings, we also provide the corresponding lattice Hamiltonian, e.g., $\Hbb[\cA_r,\cA_2]$, $\Hbb[\cA_s,\cA_3]$, etc. Due to the existence of anyonic tunneling channels between the two boundaries, the symmetry is either fully or partially broken.


\section{Weak Hopf non-invertible symmetry}
\label{sec:WHAsymmetry}
From a modern perspective, symmetries can be characterized by the algebraic structure of topological defects~\cite{gaiotto2015generalized,inamura2021topological,Kong2020algebraic,SchaferNameki2024ICTP,huang2023topologicalholo,bhardwaj2024lattice,freed2024topSymTFT,gaiotto2021orbifold,bhardwaj2023generalizedcharge,apruzzi2023symmetry,bhardwaj2024gappedphases,Zhang2024anomaly,Ji2020categoricalsym,chatterjee2024TopHolo,chang2019topological,bhardwaj2018finite}.
Symmetries associated with non-invertible topological defects are known as non-invertible symmetries. For (1+1)D system, the symmetry is described by a fusion $1$-category, viz., the usual fusion category, thus it's also called fusion category symmetry \cite{bhardwaj2018finite,chang2019topological,thorngren2019fusion}.
Since any unitary (multi)fusion category is equivalent to the representation category $\Rep(H)$ for some weak Hopf algebra $H$, it is natural for us to investigate the symmetry in the framework of Hopf and weak Hopf algebras~\cite{Kitaev2003,bais2003hopf,meusburger2017kitaev,Buerschaper2013a,girelli2021semidual,jia2023boundary,Jia2023weak,jia2024weakTube,jia2024generalized}. These symmetries, characterized by quantum algebras such as Hopf, quasi-Hopf and weak Hopf algebras, as well as module and comodule algebras over (weak, quasi-) Hopf algebras, constitute a significant class of non-invertible symmetries~\cite{cordova2022snowmass,brennan2023introduction,mcgreevy2023generalized,luo2023lecture,shao2024whats,SchaferNameki2024ICTP,Bhardwaj2024lecture}.

As we will see, in many cases, by fusion category symmetry \(\eC \simeq \Rep(H)\), people usually refer to the fusion algebra symmetry, whose algebraic structure is the fusion ring (Grothendieck ring) \(\operatorname{Gr}(\eC)\) of the category \(\eC\) (or more generally, \(R(\eC) = \operatorname{Gr}(\eC) \otimes_{\Zbb} \Cbb\)). This fusion ring can be naturally embedded into the dual weak Hopf algebra \(\hat{H}\). This suggests that weak Hopf symmetry provides a comprehensive framework for understanding non-invertible fusion algebra symmetries in (1+1)D systems.

\subsection{Weak Hopf algebra symmetry and weak Hopf comodule algebra symmetry}

Here, we will review the definitions of various quantum algebras used in this work and discuss the symmetries characterized by these algebras. For further details on quantum algebras, see, e.g., Refs.~\cite{majid2000foundations,abe2004hopf,montgomery1993hopf,kassel2012quantum}.
Notice that, unless otherwise specified, all algebras and categories in this work are over the complex number field \(\mathbb{C}\).

\subsubsection{Hopf symmetry}
A complex Hopf algebra is a complex vector space $H$ equipped with the five structure linear morphisms: multiplication $\mu: H \otimes H \to H$, unit $\eta: \mathbb{C} \to H$, comultiplication $\Delta: H \to H \otimes H$, counit $\varepsilon: H \to \mathbb{C}$, and antipode $S: H \to H$. These maps satisfy the following conditions:
    \begin{enumerate}
        \item $(H, \mu, \eta)$ is an algebra, meaning $\mu \circ (\mu \otimes \id) = \mu \circ (\id \otimes \mu)$ and $\mu \circ (\eta \otimes \id) = \id = \mu \circ (\id \otimes \eta)$.
        
        \item $(H, \Delta, \varepsilon)$ is a coalgebra, meaning $(\Delta \otimes \id) \circ \Delta = (\id \otimes \Delta) \circ \Delta$ and $(\varepsilon \otimes \id) \circ \Delta = \id = (\id \otimes \varepsilon) \circ \Delta$.
        
        \item $(H, \mu, \eta, \Delta, \varepsilon)$ forms a bialgebra, where $\Delta$ and $\varepsilon$ are algebra homomorphisms, and equivalently, $\mu$ and $\eta$ are coalgebra homomorphisms.
        
        \item The antipode $S$ satisfies $\mu \circ (S \otimes \id) \circ \Delta = \eta \circ \varepsilon = \mu \circ (\id \otimes S) \circ \Delta$.
    \end{enumerate}
We will use the notation $\mu(x \otimes y) = xy$ and $\eta(1) = 1_H$ for multiplication and unit. We will also adopt the Sweedler's notation $\Delta(u) = \sum_{(u)} u^{(1)} \otimes u^{(2)} = \sum_{i} u_i^{(1)} \otimes u_i^{(2)}$. The comultiplication law ensures that $(\Delta \otimes \id) \circ \Delta(u) = (\id \otimes \Delta) \circ \Delta(u) = \sum_{(u)} u^{(1)} \otimes u^{(2)} \otimes u^{(3)}$. 
In general, we define $\Delta_1 = \Delta$ and $\Delta_n = (\id \otimes \cdots \otimes \id \otimes \Delta) \circ \Delta_{n-1}$, then $\Delta_n(u) = \sum_{(u)} u^{(1)} \otimes \cdots \otimes u^{(n+1)}$. From the definition above, we have the following useful identity:
\begin{align}
    \sum_{(x)} \varepsilon(x^{(1)}) x^{(2)} = x = \sum_{(x)} x^{(1)} \varepsilon(x^{(2)}).
\end{align}
The swap operation is defined as $\tau(x \otimes y) = y \otimes x$, by which one denotes $\mu^{\rm op} = \mu \circ \tau$ and $\Delta^{\rm op} = \tau \circ \Delta$. The opposite Hopf algebra $H^{\rm op}$ is defined as $(H, \mu^{\rm op}, \eta, \Delta, \varepsilon, S^{-1})$, and the coopposite weak Hopf algebra $H^{\rm cop}$ is defined as $(H, \mu, \eta, \Delta^{\rm op}, \varepsilon, S^{-1})$. For more details about our notation and convention, see~\cite{jia2023boundary, Jia2023weak, jia2024generalized}.

\begin{definition}[Hopf symmetry]
    A quantum system \((\mathbb{H}, \mathcal{H})\) is said to exhibit Hopf symmetry if and only if the Hilbert space \(\mathcal{H}\) supports a representation of a Hopf algebra, given by \(W_{\bullet}: H \to \operatorname{End}(\mathcal{H})\), and satisfies the following conditions:  
    (i) \([W_h, \mathbb{H}] = 0\) for all \(h \in H\);  
    (ii) all eigenspaces \(\mathcal{V}\) of \(\mathbb{H}\) are invariant under the \(H\)-action.  
    This symmetry will also be referred to as Hamiltonian symmetry to distinguish it from state symmetry.
\end{definition}

In certain cases (especially when discussing weak Hopf symmetry), it is useful to discuss symmetry within the vacuum space. Specifically, this refers to an operator \(W_h\) that commutes with the Hamiltonian \(\mathbb{H}\) only within the ground state space, i.e.,  
\[
[W_h, \mathbb{H}]_{\mathcal{V}_{\mathrm{GS}}} = 0, \quad \forall h \in H.
\]  
The ground state symmetry is generally larger than the Hamiltonian symmetry defined above.

All group symmetries are special cases of Hopf symmetries. A key example of a non-group Hopf algebra symmetry is quantum double symmetry $D(G)$.
Consider a $(2+1)$D discrete gauge theory, such as Gauge-Higgs theory, where a Higgs field breaks the continuous gauge group $U$ to a finite group $G$~\cite{Propitius1995, bais1992quantum}. In the low-energy or long-distance limit, the theory becomes topological, with excitations characterized by fusion and braiding data. These data are captured by the representation category $\mathsf{Rep}(D(G))$, corresponding to the quantum double of $G$.
This formalism is generalized in \cite{bais2003hopf}, where the unbroken symmetry is described by a Hopf algebra $H$, of which the group algebra is a special case. 
Hopf algebraic symmetry has broad applications in quantum gravity~\cite{Bais2002quantum, majid2000foundations, delcamp2017fusion}, conformal field theory (CFT)~\cite{fuchs1995affine}, topological quantum field theory (TQFT)~\cite{meusburger2021hopf, meusburger2017kitaev}, and the quantum Hall effect~\cite{Slingerland2001quantum}. 
The Hopf quantum double model provides a general framework for realizing Hopf symmetry~\cite{Kitaev2003, Buerschaper2013a, buerschaper2013electric, meusburger2017kitaev, meusburger2021hopf, koppen2020defects, girelli2021semidual, voss2021defects, chen2021ribbon, jia2023boundary}, while certain string-net models exhibit Hopf symmetry~\cite{Kitaev2012boundary, jia2024weakTube}, among other examples.

\paragraph{Invertible Hopf symmetry} 
A special type of Hopf symmetry is the conventional invertible group symmetry, where \( H = \mathbb{C}[G] \). The group elements form a basis, with multiplication induced by group multiplication. The comultiplication and counit are defined as \( \Delta(g) = g \otimes g \) and \( \varepsilon(g) = 1 \) for all basis elements \( g \in G \). The antipode is given by \( S(g) = g^{-1} \) for each \( g \in G \), and the \( * \)-operation (to be defined later for more general weak Hopf algebras) is specified by \( g^* = g^{-1}\).

Recall that for invertible symmetry $G$, $U_g^{\dagger}=U_{g^*}=U_{g^{-1}}$. $U_g$'s commute with Hamiltonian $\mathbb{H}$ for all $g\in G$. The total Hilbert space supports a representation of $G$, and all eigenspace of $\mathbb{H}$ is invariant under the $G$-action. When regarded as a group algebra symmetry $\Cbb[G]$, we say $(\mathbb{H}, \mathcal{H})$ possesses a $\Cbb[G]$ symmetry means that (i) $[U_x, \mathbb{H}]=0$ for all $x\in \Cbb[G]$; (ii) any eigenspace $\mathcal{V}$ of $\mathbb{H}$ is invariant under the $G$-action, note that this a result of condition (i).

A natural question now arises: under what circumstances is a Hopf symmetry invertible? We provide the following formal definition:

\begin{definition}
    A (weak) Hopf symmetry $H$ is called \emph{invertible} if there exists a group $G$ such that $H \cong \Cbb[G]$ as Hopf algebras.
\end{definition}

Notice that the elements in \(G\) are all invertible, but the elements in \(\mathbb{C}[G]\) are generally not invertible.  
Determining whether a given Hopf symmetry is invertible is generally a challenging task. It is clear that group algebra symmetries must be cocommutative.  
For pointed cocommutative Hopf symmetries\footnote{A Hopf algebra \(H\) over a field \(\mathbb{F}\) is called pointed if all its simple left or right comodules are one-dimensional. See Ref.~\cite{andruskiewitsch2010classification} for more details.}, a useful result is the Cartier-Konstant-Milnor-Moore theorem~\cite{milnor1965structure,cartier1962groupes,sweedler1969hopf,etingof2016tensor}.

For a (weak) Hopf algebra \( H \), an element \( g \in H \) is called a group-like element if and only if \( \Delta(g) = g \otimes g \) and \( \varepsilon(g) = 1 \). The set of all group-like elements of \( H \) is denoted by \( \mathcal{G}(H) \); it forms a group when the antipode \( S \) is invertible. If $x\in H$ satisfies $\Delta(x)=x\otimes 1+1\otimes x$, we call $x$ a primitive element. If $x$ is primitive, then we have $\varepsilon(x)=0$. The set of all primitive element of $H$ is denoted as $\mathcal{P}(H)$, and  $\mathcal{P}(H)$ is a Lie algebra with $[x,y]:=xy-yx$. For group algebra $\Cbb[G]$, it is clear that $\mathcal{P}(\Cbb[G])=\{0\}$ and $\mathcal{G}(\Cbb[G])=G$.

The Cartier-Konstant-Milnor-Moore theorem~\cite{milnor1965structure,cartier1962groupes,sweedler1969hopf,etingof2016tensor} claims that, for pointed cocommutative Hopf algebras $H$, we have 
   \begin{equation}
       H\cong U(\mathcal{P}(H)) \# \Cbb[\mathcal{G}(H)],
   \end{equation}
   where $\mathcal{P}(H)$ is the set of primitive elements of $H$, $U(\mathcal{P}(H)) $ is the  universal enveloping algebra of $\mathcal{P}(H)$, and $\mathcal{G}(H)$ is the set of group elements in $H$; `\#' represents smash product (see, e.g., \cite{montgomery1993hopf}). 

Since \( \mathcal{P}(H) = 0 \) for a finite-dimensional complex Hopf algebra \( H \) \cite[Exercise 4.2.16]{dascalescu2000hopf}, a direct consequence of the Cartier–Kostant–Milnor–Moore theorem is that for finite-dimensional pointed cocommutative Hopf algebras, we have \( H \cong \mathbb{C}[\mathcal{G}(H)] \). Therefore, all finite-dimensional  pointed cocommutative Hopf symmetries are invertible symmetries.

\paragraph{Comodule algebra symmetry}
As shown in Refs.~\cite{jia2023boundary,Jia2023weak,jia2024weakTube}, (weak) Hopf symmetry characterizes 2d topologically ordered phases, while 1d gapped boundaries exhibit comodule algebra symmetry (or equivalently, module algebra symmetry). These symmetries are fundamental to our construction of non-invertible SPT lattice models. We now review their definitions.

Let $H$ be a Hopf algebra, and let $(K, \mu_K, \eta_K)$ be an algebra. If $K$ is a left $H$-comodule with coaction $\beta_K: K \to H \otimes K$ \footnote{This means that $(\id \otimes \beta) \circ \beta = (\Delta \otimes \id )\circ \beta $ and $(\varepsilon \otimes \id)\circ \beta =\id$.}, satisfying the conditions:
\begin{equation}
    \beta_K(xy) = \beta_K(x) \beta_K(y), \quad \beta_K(1_K) = 1_H \otimes 1_K,
\end{equation}
then $K$ is called a \emph{left $H$-comodule algebra}. A \emph{right $H$-comodule algebra} can be defined analogously.
For left and right comodules, we adopt Sweedler's notation for left and right coactions as follows:
\begin{align}
    \beta(x) = \sum_{(x)} x^{[-1]} \otimes x^{[0]}, 
    \beta(x) = \sum_{(x)} x^{[0]} \otimes x^{[1]}.
\end{align}

Let $H$ be a Hopf algebra and $M$ an algebra. If $M$ is a left $H$-module and satisfies the following conditions:
\begin{equation}
    h \triangleright (xy) = \sum_{(h)} (h^{(1)} \triangleright x)(h^{(2)} \triangleright y), \quad h \triangleright 1_M = \varepsilon(h) 1_M,
\end{equation}
for all $h \in H$ and $x, y \in M$, then $M$ is called a \emph{left $H$-module algebra}. A \emph{right $H$-module algebra} is defined similarly.

From a macroscopic perspective, the description of the 1d gapped boundary of the quantum double model in terms of comodule algebra and module algebra is equivalent. Henceforth, following the convention in Refs.~\cite{jia2023boundary,Jia2023weak,jia2024weakTube}, we adopt the comodule algebra description.

\subsubsection{Weak Hopf symmetries}

It is well-known that not all fusion categories can be regarded as the representation category of a Hopf algebra. Typical examples include the Haagerup category and \(\Vect_G^{\omega}\) (the category of \(G\)-graded vector spaces with a non-trivial 3-cocycle \(\omega\)). To represent an arbitrary fusion category as a representation category, it is necessary to introduce the concept of weak Hopf algebras~\cite{szlachanyi2000finite,ostrik2003module,etingof2005fusion}.

A complex weak Hopf algebra~\cite{BOHM1998weak} is a complex vector space equipped with the same structural morphisms as a Hopf algebra, but with modified compatibility conditions:

\begin{enumerate}

\item The triple \((H,\mu,\eta)\) is an algebra,  and the triple $(H,\Delta,\varepsilon)$ forms a coalgebra.

\item The quintuple $(H, \mu, \eta, \Delta, \varepsilon)$ is a weak bialgebra, meaning that the algebra structure and coalgebra structure are compatible in the following way:

(i) \emph{Multiplicativity of comultiplication:} The comultiplication map satisfies \(\Delta(xy) = \Delta(x) \cdot \Delta(y)\), explicitly,  
\begin{equation}
\sum_{(x \cdot y)} (x \cdot y)^{\cone} \otimes (x \cdot y)^{\ctwo} = \sum_{(x), (y)} x^{\cone} \cdot y^{\cone} \otimes x^{\ctwo} \cdot y^{\ctwo}.  
\end{equation}
This property holds for Hopf algebras as well.

(ii) \emph{Weak comultiplicativity of the unit:} The comultiplication of the identity element satisfies  
\begin{equation}
(\Delta \otimes \id) \circ \Delta(1_H) = (\Delta(1_H) \otimes 1_H) \cdot (1_H \otimes \Delta(1_H)) = (1_H \otimes \Delta(1_H)) \cdot (\Delta(1_H) \otimes 1_H).
\end{equation}  
This indicates that the comultiplication of the identity is not factorizable but instead takes the form of a linear combination:  
\begin{equation}
\Delta(1_H) = \sum_{(1_H)} 1_H^{\cone} \otimes 1_H^{\ctwo}.
\end{equation}  
In the case of Hopf algebras, we have the stronger condition  
\begin{equation}
\Delta(1_H) = 1_H \otimes 1_H.
\end{equation}

(iii) \emph{Weak multiplicativity of the counit:} For all \(x, y, z \in H\), the counit satisfies  
\begin{equation}
\varepsilon(x \cdot y \cdot z) = \sum_{(y)} \varepsilon(x \cdot y^{\cone}) \varepsilon(y^{\ctwo} \cdot z) = \sum_{(y)} \varepsilon(x \cdot y^{\ctwo}) \varepsilon(y^{\cone} \cdot z).   
\end{equation}
For Hopf algebras, since \(\Delta(1_H) = 1_H \otimes 1_H\), setting \(y = 1\) in the above equation gives the stronger constraint \(\varepsilon(xz) = \varepsilon(x)\varepsilon(z)\).

\item The antipode \(S\) plays the role of taking the inverse of elements, and its axioms differ significantly from those of Hopf algebras. It satisfies the following properties:

(i)  \emph{Left counit:} For all $x$ in $H$, we have 
\begin{equation}    \sum_{(x)}x^{(1)}S(x^{\ctwo})=\varepsilon_L(x):=\sum_{(1_H)}\varepsilon(1_H^{\cone}x)1_H^{\ctwo}.
\end{equation}
For Hopf algebras, we have the stronger relation \( \sum_{(x)} x^{(1)} S(x^{\ctwo}) = \varepsilon_L(x) = \varepsilon(x) 1_H \).

(ii)  \emph{Right counit:} For all $x$ in $H$, we have 
\begin{equation}    \sum_{(x)}S(x^{(1)})x^{\ctwo}=\varepsilon_R(x):=\sum_{(1_H)}1_H^{\cone}\varepsilon(x1_H^{\ctwo}).
\end{equation}
For Hopf algebra, we have the stronger relation $\sum_{(x)}S(x^{(1)})x^{\ctwo}=\varepsilon_R(x)=\varepsilon(x)1_H$.

(iii)  \emph{Antipode decomposition:} For all $x$ in $H$, we have 
\begin{equation}
    \sum_{(x)}S(x^{\cone})x^{\ctwo}S(x^{\cthree})=S(x).
\end{equation}
This condition also holds for Hopf algebras.
\end{enumerate}

Notice that the images of the left and right counit maps are denoted as \( H_L = \varepsilon_L(H) \) and \( H_R = \varepsilon_R(H) \). They serve as the tensor unit (vacuum charge or vacuum sector) of the representation category \( \Rep(H) \) of \( H \). In general, the tensor unit is not simple, and the representation category \( \Rep(H) \) is a multifusion category. When \( H \) is connected \cite{etingof2005fusion} (also called pure in Ref.~\cite{BOHM1998weak}), \( \Rep(H) \) is a fusion category.
If the dual weak Hopf algebra \( \hat{H} \) is connected, we call \( H \) coconnected \cite{Nikshych2004semisimpleWHA}. In this work, we primarily consider connected and coconnected weak Hopf algebras, although all discussions can be straightforwardly generalized.


We will also assume that the weak Hopf algebra is a \( C^* \)-weak Hopf algebra. A complex weak Hopf algebra \( H \) is called simple (or indecomposable) if its underlying algebra \( (H, \mu, \eta) \) has no nontrivial subalgebras, and it is called semisimple if its underlying algebra can be written as a direct sum of simple algebras. 
A \( * \)-weak Hopf algebra \( (H, *) \) is a weak Hopf algebra \( H \) equipped with a \( C^* \)-structure \( *: H \to H \) such that \( \Delta \) is a \( * \)-homomorphism. That is, for all \( x, y \in H \) and \( c \in \mathbb{C} \),
\begin{equation}
    (x^*)^* = x, \quad (x + y)^* = x^* + y^*, \quad (xy)^* = y^* x^*, \quad (c x)^* = \bar{c} x^*, \quad \Delta(x)^* = \Delta(x^*).
\end{equation}
We also have 
\begin{equation}
    S(x^*) = S^{-1}(x)^*.
\end{equation}
A \( C^* \)-weak Hopf algebra \( (H, *) \) is one for which there exists a fully faithful \( * \)-representation \( \rho: H \to \mathbf{B}(\mathcal{H}) \) for some operator space over a Hilbert space \( \mathcal{H} \).

The weak Hopf morphism between two weak Hopf algebras $H_1$ and $H_2$ is a map $f:H_1\to H_2$ which is both an algebra and coalgebra homomorphism preserving unit and counit and which intertwines the antipodes, $f\circ S_1= S_2 \circ f$. The category of all finite dimensional weak Hopf algebras will be denoted as $\mathsf{WeakHopfAlg}_{\Cbb}$.

A weak Kac algebra is a weak Hopf algebra such that \( S^2 = \id \) ~\cite{yamanouchi1994duality,bohm2000weakII}. All finite-dimensional semisimple Hopf algebras are weak Kac algebras due to the Larson-Radford theorem (See, e.g., Ref.~\cite{kassel2012quantum}).

\begin{definition}[Weak Hopf symmetry]   
A Hamiltonian \(\mathbb{H}: \mathcal{H} \to \mathcal{H}\) is said to possess weak Hopf symmetry \(H \) if \(\mathcal{H}\) supports a representation \(W_{\bullet}: H \to \operatorname{End}(\mathcal{H})\) of the weak Hopf algebra \(H\), such that \([W_h, \mathbb{H}] = 0\) for all \(h \in H\).
In some contexts, we consider symmetries that act only on the ground states, where \([\mathbb{H}, W_h] = 0\) holds (at least) within the ground state space \(\mathcal{V}_{\rm GS}\). 
In such cases, the symmetry of the ground state space must be larger than that of the Hamiltonian, i.e., \(\operatorname{Sym}_{\rm GS} \supseteq \operatorname{Sym}_{\mathbb{H}}\). 
If the ground state space is non-degenerate, meaning \(\dim \mathcal{V}_{\rm GS} = 1\), the ground state must be an eigenstate of the symmetry operator \(W_h\).
When the weak Hopf algebra is a weak Kac algebra, we refer to the corresponding symmetry as weak Kac symmetry.
\end{definition}

To summarize, the hierarchy of symmetries is as follows:
\begin{equation}
    \text{Invertible symmetry} \subset \text{Hopf symmetry} \subset \text{Weak Kac symmetry} \subset \text{Weak Hopf symmetry}.
\end{equation}

For weak Hopf symmetry, the notion of invariance under the symmetry action differs from that of groups and Hopf algebras \cite{Jia2023weak}. A state \(|\psi\rangle \in \mathcal{H}\) is called left-invariant under the action if and only if 
\begin{equation} \label{eq:invariance}
   W_h |\psi\rangle = W_{\varepsilon_L(h)} |\psi\rangle.
\end{equation}
The notion of right-invariance is defined analogously.
For a Hopf algebra, Eq.~\eqref{eq:invariance} reduces to 
\begin{equation}
   W_h |\psi\rangle = \varepsilon(h) |\psi\rangle, \quad \forall h \in H,
\end{equation}
which means that \(|\psi\rangle\) is an eigenstate of \(W_h\) with eigenvalue \(\varepsilon(h)\).
For a group algebra \(\mathbb{C}[G]\), where \(\varepsilon(g) = 1\) for all \(g \in G\), this further simplifies to the usual definition of invariance for group symmetries.

Notice that the above definition applies to a quantum mechanical system, which can be viewed as a $(0+1)$D quantum field theory. In this case, issues of locality do not arise. However, when considering lattice models in higher dimensions, locality becomes a crucial ingredient. A promising approach to encode locality in both quantum field theory and lattice models is via quasi-local algebras (a type of AF-algebra). Yet, for non-invertible symmetries—which are intrinsically non-local, in the sense that they act over long distances—the quasi-local algebra framework appears insufficient. This suggests that a new mathematical structure may be required in order to incorporate non-invertible symmetries and locality within the AQFT framework. The locality structure is also crucial for discussing anomalies of symmetry.
We will return to the locality structure in the context of our model later on.

\paragraph{Weak Hopf comodule algebra symmetry} 
For the topological boundary condition of $2d$ weak Hopf lattice gauge theory, we require the concept of a comodule algebra over weak Hopf algebras \cite{Jia2023weak}. The axioms for the comodule algebra need to be adjusted compared to the standard Hopf algebra case.

\begin{definition}\label{defintion:comoduleAlge}
Let \(H\) be a weak Hopf algebra, an algebra \( K \) is called a left \( H \)-comodule algebra \cite{Bohmdoihopf} if there exists a comodule map \( \beta: K \to H \otimes K \) such that 
\begin{equation}
    \beta(xy) = \beta(x) \beta(y),
\end{equation}
and 
\begin{equation} \label{eq:weakComodulealge}
    (1_H \otimes x) \beta(1_{K}) = (\varepsilon_R \otimes \id_K) \circ \beta(x),
\end{equation}
for all \( x, y \in K \). We adopt Sweedler's notation \( \beta(x) = \sum_{[x]} x^{[-1]} \otimes x^{[0]} \), \( \beta_2(x) = \sum_{[x]} x^{[-2]} \otimes x^{[-1]} \otimes x^{[0]} \), etc.\footnote{We use square brackets to denote the comodule structure to distinguish it from the comultiplication of the weak Hopf algebra.} The right \( H \)-comodule algebra can be defined similarly.
\end{definition}

\begin{remark}
In literature, there are several equivalent expressions of Eq.~\eqref{eq:weakComodulealge} (written more explicitly and omitting the sum symbol $1_K^{[-1]} \otimes x 1_K^{[0]} = \varepsilon_R(x^{[-1]}) \otimes x^{[0]}$):
\begin{enumerate}
    \item $\beta(1_K)\in H_R\otimes K$;
    \item $\beta(1_K)=(\varepsilon_R \otimes \id ) \circ \beta(1_K)$;
    \item $1_H^{(1)} \otimes 1_H^{(2)}1_K^{[-1]}\otimes 1_K^{[0]}=1_H^{(1)} \otimes 1_K^{[-1]} 1_H^{(2)}\otimes 1_K^{[0]}=1_K^{[-2]}\otimes 1_K^{[-1]}\otimes 1_K^{[0]}$, where we have omitted the sum symbol;
\end{enumerate}
For the convenience of the reader, we provide the proof here. 

The equivalence of 1 and 2 is obvious since \( \varepsilon_R^2 = \varepsilon_R \). For ``\( 1 \Rightarrow 3 \)'', note that \( 1^{(1)} \otimes 1^{(2)} \in H_R \otimes H_L \) and \( hg = gh \) for all \( h \in H_L \) and \( g \in H_R \). Using the fact that for all \( h \in H_R \) we have 
\begin{equation}
   \sum_{(1)} 1^{(1)} \otimes h 1^{(2)} = \Delta(h),
\end{equation}
we arrive at the required conclusion.

For ``\( 3 \Rightarrow \) Eq.~\eqref{eq:weakComodulealge}'', note that for \( x \in K \), we have 
\begin{equation}
    \beta(x1) = x^{[-1]} 1^{[-1]} \otimes x^{[0]} 1^{[0]} = x^{[-1]} \otimes x^{[0]}.
\end{equation}
Then 
\begin{align}
    \varepsilon_R(x^{[-1]}) \otimes x^{[0]} 
    &= 1_H^{(1)} \varepsilon(x^{[-1]} 1_H^{(2)}) \otimes x^{[0]} \notag \\
    &= 1_H^{(1)} \varepsilon(x^{[-1]} 1_K^{[-1]} 1_H^{(2)}) \otimes x^{[0]} 1_K^{[0]} \notag \\
    &= 1_K^{[-2]} \varepsilon(x^{[-1]} 1_K^{[-1]}) \otimes x^{[0]} 1_K^{[0]} \notag \\
    &= 1_K^{[-1]} \otimes x 1_K^{[0]}.
\end{align}

Finally, ``Eq.~\eqref{eq:weakComodulealge} \( \Rightarrow \) 1'' is obvious.
  \end{remark}
  
Note that \( H \) is a comodule algebra over itself with the comodule map given by the comultiplication. Since \( \Delta \) is an algebra map and 
\begin{equation}
    \Delta(1_H) \in H_R \otimes H_L,
\end{equation}
the comodule algebra conditions are satisfied.
The topological boundary condition can also be characterized via the module algebra.
An algebra \(A\) is called a \emph{left} \(H\)-module algebra if \(A\) is a left \(H\)-module such that \(h \triangleright (xy) = \sum_{(h)} (h^{(1)} \triangleright x)(h^{(2)} \triangleright y)\) and \(h \triangleright 1_A = \varepsilon_L(h) \triangleright 1_A\) for all \(h \in H\) and \(x \in A\). The \emph{right} \(H\)-module algebra can be defined similarly.

\subsection{Fusion category symmetry as dual weak Hopf symmetry}

Let us now consider how to understand (multi)fusion category symmetry in the framework of weak Hopf symmetries.
By definition, a fusion category symmetry is a fusion category $\eC$ that acts on a quantum system $\mathcal{T}$, like lattice system or quantum field theory, as a symmetry.
Treating the symmetry action as an algebra, the symmetry algebra for fusion category $\eC$ is the Grothendieck ring 
$\operatorname{Gr}(\eC)$. By definition, \( \operatorname{Gr}(\eC) \) is generated by the equivalence classes of simple objects in \( \eC \) as a free \( \mathbb{Z} \)-module. The multiplication is defined by the fusion rule:  
\begin{equation}
    [a] \cdot [b] := [a \otimes b] = \sum_{c \in \Irr(\eC)} N_{ab}^c [c], \quad N_{ab}^c \in \mathbb{Z}_{\geq 0}.
\end{equation}
This is also called the fusion ring, and it is a \( \mathbb{Z}_+ \)-ring \cite{etingof2016tensor}. When \( \eC \) is a multifusion category, the unit of the ring is a nonnegative linear combination of the simple objects that constitute the tensor unit \( \mathbb{1} \) of \( \eC \). When \( \eC \) is a fusion category, \( [\mathbb{1}] \) is one of the basis elements. Hereinafter, we will primarily focus on the fusion category case, as the generalization to multifusion categories is straightforward.

The simple topological symmetry defect lines or symmetry operators $W_{[a]}$'s are labeled by basis elements  $[a] \in \operatorname{Gr}(\eC)$, which is isomorphism class of simple objects in $\eC$.
These symmetry defect line obey the fusion algebra:
\begin{equation}
    W_{[a]}W_{[b]}=W_{[a\otimes b]}=W_{\sum_c N_{ab}^c [c]}, N_{ab}^c\in \mathbb{Z}_{\geq 0}.
\end{equation}
To make the fusion algebra a \( \mathbb{C} \)-algebra, we introduce the character algebra \( R(\eC) := \operatorname{Gr}(\eC) \otimes_{\mathbb{Z}} \mathbb{C} \) \cite{Nikshych2004semisimpleWHA}. The fusion category symmetry is thus equivalently described by the character algebra \( R(\eC) \), with the symmetry operator \( W_{\alpha} \) defined for \( \alpha \in R(\eC) \).

Recall that a weak Hopf algebra $H$ is called connected  if and only if the trivial representation $\mathbb{1}$ of $H$ is a simple object. In this case, the representation category $\Rep(H)$ is a unitary fusion category. On the other hand, for any fusion category $\eC$, there exists a (not unique) weak Hopf algebra $H$ such that $\Rep(H)\simeq \eC$ as tensor categories \cite{etingof2005fusion,ostrik2003module,szlachanyi2000finite,BOHM1998weak}.
Two weak Hopf algebras $H$ and $H'$ are called Morita equivalent if and only if $\Rep(H)\simeq \Rep(H')$ as tensor categories. Thus for a given fusion category symmetry, we can determine the weak Hopf symmetry $H$ up to Morita equivalence.

The construction of a weak Hopf algebra $H$ from a given fusion category $\eC$, such that $\Rep(H)\simeq \eC$, will be presented in Section~\ref{sec:reconstrucWHA}. Now suppose we already have a weak Hopf algebra $H$ at hand, and consider how to interpret the fusion category symmetry $\eC$ (or more precisely, the fusion ring $\operatorname{Gr}(\eC)$) as a dual weak Hopf symmetry. Before proceeding, let us first recall some basic facts about the representation theory of weak Hopf algebras.

The $H_L$ is the tensor unit with $H$-action given by
\begin{equation}
    x \triangleright a = \varepsilon_L(xh) =x^{\cone} a S(x^{\ctwo}), \forall x \in H, \forall a\in H_L.
\end{equation}
For two representations $V,W$, the action of $H$ on $V\otimes_{\Cbb} W$ is defined as
\begin{equation}
    h \triangleright (v\otimes w):= \sum_{(h)}h^{\cone }v \otimes h^{\ctwo}w,
\end{equation}
the tensor product of $V$ and $W$ is defined as
\begin{equation}
    V\otimes_{H_L} W = \{ x\in V\otimes_{\Cbb} W| \Delta(1)\triangleright x =x  \}.
\end{equation}
Note that $\Delta(1)$ does not act as the identity on $V \otimes_{\Cbb} W$. However, since $\Delta(1)^2 = \Delta(1)$, it serves as the identity on the balanced tensor product $V \otimes_{H_L} W$.
The dual representation of $V$ is $\hat{V}:=\Hom(V,\Cbb)$ equipped with the action $(h\triangleright f)(x):=f(S(h)\triangleright x)$.

Recall that, an element \( l \) (or \( r \)) is called a left (or right) integral of \( H \) if it satisfies \( xl = \varepsilon_L(x) l \) (or \( rx = r\varepsilon_R(x) \)). A left (or right) integral \( l \) (or \( r \)) is considered left (or right) normalized if \( \varepsilon_L(l) = 1_H \) (or \( \varepsilon_R(r) = 1_H \)). When an element \( h \) functions as both a left and right integral, it is referred to as a two-sided integral. A Haar integral in \( H \) is a two-sided integral that is normalized on both sides. For a $C^*$ weak Hopf algebra $H$, there always exist a unique Haar integral.

\begin{lemma}[Schur's lemma for weak Hopf algebra]\label{prop:schur} 
Let $\Gamma: H \to \End(V)$ and $\Phi: H \to \End(W)$ be two irreducible representations of a weak Hopf algebra $H$, and let $f: V \to W$ be a linear map such that $\Phi(x) \circ f = f \circ \Gamma(x)$ for all $x \in H$. If $\Gamma$ and $\Phi$ are not isomorphic, then $f = 0$. When $\Gamma= \Phi$, $f$ is a scalar multiple of the identity map on $V$, i.e., $f \propto \id_V$. 
\end{lemma}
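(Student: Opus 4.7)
The plan is to mirror the classical proof of Schur's lemma verbatim, since the statement involves only the associative algebra structure of $H$ acting on $V$ and $W$, and nothing about coalgebra, antipode, or counital subalgebras is needed. In particular, the fact that the tensor unit of $\Rep(H)$ is $H_L$ rather than $\mathbb{C}$ is irrelevant for a statement about module-theoretic intertwiners between simple modules.

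First I would verify the two standard submodule properties that any intertwiner $f$ enjoys. The intertwining relation $\Phi(x)\circ f = f\circ\Gamma(x)$ shows at once that $\ker f \subseteq V$ is $\Gamma$-invariant (if $f(v)=0$ then $f(\Gamma(x)v)=\Phi(x)f(v)=0$) and that $\operatorname{im} f \subseteq W$ is $\Phi$-invariant (since $\Phi(x)f(v)=f(\Gamma(x)v)\in\operatorname{im} f$). Simplicity of $V$ and $W$ then leaves only the alternatives $\ker f\in\{0,V\}$ and $\operatorname{im} f\in\{0,W\}$. Hence a nonzero $f$ must be both injective and surjective, producing an $H$-linear isomorphism $V\xrightarrow{\sim}W$, i.e.\ $\Gamma\cong\Phi$. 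The contrapositive is the first assertion: if $\Gamma\not\cong\Phi$ then $f=0$.

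For the second assertion, assume $\Gamma=\Phi$, so $f\in\End_H(V)$. Because $V$ is finite-dimensional and $\mathbb{C}$ is algebraically closed, $f$ has at least one eigenvalue $\lambda\in\mathbb{C}$. The map $g:=f-\lambda\,\id_V$ is again an $H$-linear endomorphism, since $\id_V$ commutes with every $\Gamma(x)$, and its kernel contains the nonzero $\lambda$-eigenspace of $f$. Applying the first part to $g$ forces $g=0$, so $f=\lambda\,\id_V$.

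I do not anticipate a substantive obstacle: the argument is formally identical to Schur's lemma for associative algebras, and the only inputs used — simplicity of the modules, finite-dimensionality of $V$, and algebraic closedness of $\mathbb{C}$ — are all standing assumptions in the $C^{*}$-weak-Hopf-algebra framework adopted in the paper.
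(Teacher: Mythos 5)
Your proof is correct and is exactly the standard Schur argument; the paper itself gives no independent proof but simply defers to the same classical reasoning in a cited reference, so there is nothing substantively different between the two. Your observation that only the underlying associative algebra structure of $H$ is used (and not the coalgebra, antipode, or the fact that the tensor unit of $\Rep(H)$ is $H_L$) is accurate and worth keeping.
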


\begin{proof}
The proof follows the same reasoning as in Ref.~\cite{jia2024generalized}.
\end{proof}

For Haar integral $\lambda\in H$ and arbitrary $x\in H$, we have the following \cite{BOHM1998weak,Nikshych2004semisimpleWHA}
\begin{equation}\label{eq:lambdaSpecial}
  \sum_{(\lambda)}  xS(\lambda^{(1)})\otimes \lambda^{(2)}=  \sum_{(\lambda)} 
 S(\lambda^{(1)})\otimes \lambda^{(2)}x.
\end{equation}
Then for a linear map $f: V\to W$, we define
\begin{equation}\label{eq:Fave}
    F=\sum_{(\lambda)} \Phi(S(\lambda^{(1)})) \circ f \circ \Gamma(\lambda^{\ctwo}).
\end{equation}
It's clear from Eq.~\eqref{eq:lambdaSpecial} that
\begin{equation}
   \Phi(x) \circ F =F\circ \Gamma(x),\quad \forall x\in H.
\end{equation}

\begin{lemma}\label{prop:orthgonality}
 (1)  For any $f:V\to W$, and define $F$ as in Eq.~\eqref{eq:Fave}, we have: (i)  If $\Gamma$ and $\Phi$ are not isomorphic, $F=0$; (ii) when $\Gamma=\Phi$, we have
    \begin{equation}\label{eq:Ftrace}
        F=\frac{1}{d_{\Gamma}}\Tr (f) \id_V.
    \end{equation}
    
(2) From the above result, we obtain the orthogonality relation for irreps of $H$:
\begin{equation}
    \sum_{(\lambda)} \Phi_{ij}(S(\lambda^{\cone})) \Gamma_{kl}(\lambda^{\ctwo})=\frac{\delta_{\Phi,\Gamma} \delta_{il}\delta_{jk} }{d_{\Gamma}}=  \sum_{(\lambda)} \Phi_{ij}(S(\lambda^{\ctwo})) \Gamma_{kl}(\lambda^{\cone}).
\end{equation}
where we have used the cocommutativity of $\lambda$.
\end{lemma}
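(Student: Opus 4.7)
My plan is to derive both parts from a single application of Schur's lemma for weak Hopf algebras (Lemma~\ref{prop:schur}) to the averaged map $F$ of Eq.~\eqref{eq:Fave}. The passage just above the statement already verifies, via the Haar-integral identity \eqref{eq:lambdaSpecial}, that $\Phi(x)\circ F = F\circ \Gamma(x)$ for every $x\in H$; that is, $F$ is an intertwiner. Schur's lemma then gives (i) at once: if $\Gamma\not\cong\Phi$ then $F=0$. In the case $\Gamma\cong\Phi$ (so we may take $V=W$), it gives $F = c\,\id_V$ for some scalar $c$.

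To pin down the scalar in (ii), I would take the trace on both sides. On one hand $\Tr(F) = c\,d_\Gamma$; on the other, using cyclicity of the trace and multiplicativity of $\Gamma$,
\begin{equation}
\Tr(F) = \sum_{(\lambda)} \Tr\!\bigl(\Gamma(S(\lambda^{(1)}))\,f\,\Gamma(\lambda^{(2)})\bigr) = \Tr\!\Bigl(f\,\Gamma\!\bigl(\textstyle\sum_{(\lambda)}\lambda^{(2)} S(\lambda^{(1)})\bigr)\Bigr).
\end{equation}
The crucial input is then $\sum_{(\lambda)}\lambda^{(2)} S(\lambda^{(1)}) = 1_H$. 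I would obtain this from cocommutativity of the Haar integral in a $C^*$-weak Hopf algebra combined with the antipode axiom $\sum y^{(1)} S(y^{(2)}) = \varepsilon_L(y)$ and the normalization $\varepsilon_L(\lambda)=1_H$; concretely, $S(\lambda)=\lambda$ together with $\Delta\circ S = \tau\circ(S\otimes S)\circ\Delta$ gives $\sum S(\lambda^{(1)})\otimes\lambda^{(2)} = \sum \lambda^{(2)}\otimes S(\lambda^{(1)})$, which multiplies to the required identity. Since $\Gamma(1_H)=\id_V$, this yields $\Tr(F)=\Tr(f)$ and hence $c=\Tr(f)/d_\Gamma$, matching Eq.~\eqref{eq:Ftrace}.

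Part (2) is then a direct corollary of part (1) by specializing $f$ to matrix units. Taking $f\in\Hom(V,W)$ with entries $f_{mn}=\delta_{mj}\delta_{nk}$, the $(i,l)$-entry of $F$ reads exactly $\sum_{(\lambda)}\Phi_{ij}(S(\lambda^{(1)}))\,\Gamma_{kl}(\lambda^{(2)})$, while $\Tr(f)=\delta_{jk}$; combining parts (i) and (ii) packages the scalar into $\delta_{\Phi,\Gamma}\,\delta_{il}\,\delta_{jk}/d_\Gamma$. The second equality, with $\lambda^{(1)}\leftrightarrow\lambda^{(2)}$ swapped, is then immediate from the cocommutativity of $\lambda$ noted above. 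The main obstacle is the identity $\sum_{(\lambda)}\lambda^{(2)} S(\lambda^{(1)}) = 1_H$: in the ordinary Hopf case this is trivial from $\sum x^{(1)} S(x^{(2)}) = \varepsilon(x)1$, but in the weak Hopf setting one really must invoke the full $C^*$-structure on the Haar integral to justify it cleanly.
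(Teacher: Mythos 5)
Your proof is correct and follows essentially the same route as the paper's: Schur's lemma applied to the intertwiner $F$, a trace computation pinning the scalar to $\Tr(f)/d_\Gamma$ via the normalization $\varepsilon_L(\lambda)=1_H$, and specialization of $f$ to matrix units for part (2). The only cosmetic difference is that your derivation of $\sum_{(\lambda)}\lambda^{(2)}S(\lambda^{(1)})=1_H$ detours through $S(\lambda)=\lambda$ and the anti-coalgebra property of $S$, whereas cocommutativity of $\lambda$ alone gives $\sum_{(\lambda)}\lambda^{(2)}S(\lambda^{(1)})=\sum_{(\lambda)}\lambda^{(1)}S(\lambda^{(2)})=\varepsilon_L(\lambda)=1_H$ directly, which is what the paper's one-line computation implicitly uses.
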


\begin{proof}
   (1) Note that (i) is a direct result of Proposition~\ref{prop:schur}.  For (ii), we have
   \begin{equation}
        \Tr F= \sum_{(\lambda)} \Tr [\Gamma(S(\lambda^{(1)})) \circ f \circ \Gamma(\lambda^{\ctwo})]=\Tr [\Gamma(\varepsilon_L(\lambda)) f]=\Tr f,
    \end{equation}
which implies Eq.~\eqref{eq:Ftrace}, notice that we have used $\varepsilon_L(\lambda)=1_H$.

(2) From (1), the matrix elements $F_{il}$ is of the form
\begin{equation}
    \begin{aligned}
         \sum_{(\lambda)} \sum_{jk} \Phi_{ij}(S(\lambda^{\cone}))f_{jk} \Gamma_{kl}(\lambda^{\ctwo})=  \frac{\delta_{\Phi,\Gamma}}{d_{\Gamma}} \delta_{il} \sum_{jk} f_{jk}\delta_{jk}.
    \end{aligned}
\end{equation}
This implies the required result by taking $f_{jk}=\delta_{jj'}\delta_{kk'}$.
\end{proof}

Suppose we have identified a weak Hopf algebra $H$ that realizes the fusion category symmetry $\eC \simeq \Rep(H)$. In this case, the simple topological defect lines are labeled by the irreducible representations $\Gamma\in \Irr(H)$ of the weak Hopf algebra $H$.
The fusion ring of $\Rep(H)$ can equivalently be described using the irreducible characters $\chi_{\Gamma}$, where $\Gamma \in \Irr(H)$:
\begin{equation}
    \chi_{\Gamma} \cdot \chi_{\Phi} = \chi_{\Gamma \otimes \Phi} = \sum_{\Psi \in \Irr(H)} N_{\Gamma \Phi}^{\Psi} \chi_{\Psi}.
\end{equation}
Here, the product on the left-hand side is defined by $\chi_{\Gamma} \cdot \chi_{\Phi}(x) = \sum_{(x)} \chi_{\Gamma}(x^{\cone}) \chi_{\Phi}(x^{\ctwo})$, which corresponds to the multiplication in the dual weak Hopf algebra $\hat{H}$.
Recall that for a $C^*$ weak Hopf algebra $H$, its dual space $\hat{H} := \Hom(H, \mathbb{C})$ acquires a canonical $C^*$ weak Hopf algebra structure induced by the canonical pairing $\langle \bullet, \bullet \rangle: \hat{H} \times H \to \mathbb{C}$, defined by $\langle \varphi, h \rangle := \varphi(h)$ \footnote{In this work, we will interchangeably use $\hat{H}$ and $H^{\vee}$ to denote the dual space of $H$. 
For larger mathematical expressions, we will use the notation $(\bullet)^{\vee}$ to indicate the dual space.}. The structure morphisms are given by:
\begin{align}
& \langle \hat{\mu}(\varphi \otimes \psi), x \rangle = \langle \varphi \otimes \psi, \Delta(x) \rangle, \label{eq:pair1}\\
& \langle \hat{\eta}(1), x \rangle = \varepsilon(x), \; \text{i.e.,} \; \hat{1} = \varepsilon, \\
& \langle \hat{\Delta}(\varphi), x \otimes y \rangle = \langle \varphi, \mu(x \otimes y) \rangle, \label{eq:pairing-psi}\\
& \hat{\varepsilon}(\varphi) = \langle \varphi, \eta(1) \rangle, \\
& \langle \hat{S}(\varphi), x \rangle = \langle \varphi, S(x) \rangle. \label{eq:pair5}
\end{align}
The star operation on $\hat{H}$ is defined as
\begin{equation}
    \langle \varphi^*, x \rangle = \overline{\langle \varphi, S(x)^* \rangle}.
\end{equation}

It is clear that every character $\chi_{\Gamma}$ of $H$ is a complex-valued function on $H$, hence $\chi_{\Gamma}\in\hat{H}$.  
Since the irreducible characters form a basis of $\operatorname{Gr}(\Rep(H))$, we obtain an embedding 
$\operatorname{Gr}(\Rep(H)) \hookrightarrow \hat{H}$.  
By extending scalars, we may consider 
$R(\Rep(H))=\operatorname{Gr}(\Rep(H))\otimes_{\Zbb}\Cbb$, which becomes a subalgebra of $\hat{H}$.  
Consequently, any fusion category symmetry $\eC$ can be realized as a subalgebra of the dual weak Hopf algebra $\hat{H}$.

\begin{theorem}
    For (1+1)D phase, the fusion category symmetry $\eC$ (more precisely, $\operatorname{Gr} (\eC)$) can be equivalently characterized by some dual weak Hopf algebra $\hat{H}$, where $H$ is the weak Hopf algebra that satisfies $\Rep(H)\simeq \eC$ as fusion categories.
\end{theorem}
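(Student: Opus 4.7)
The plan is to unpack the theorem into an injective algebra homomorphism $\iota: \operatorname{Gr}(\eC) \hookrightarrow \hat{H}$ that sends each simple class to its irreducible character, and then extend scalars. The existence of the weak Hopf algebra $H$ with $\Rep(H) \simeq \eC$ is not something I would reprove; I would simply invoke the reconstruction theorem of Szl\'achanyi--Ostrik--Etingof--Nikshych--Ostrik cited just before the statement. This reduces the theorem to a purely algebraic statement about characters of $H$ inside $\hat{H}$.

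First I would define $\iota([\Gamma]) := \chi_\Gamma \in \hat{H}$ on irreducible classes and extend $\mathbb{Z}$-linearly. To check it is a ring homomorphism, I would use the dual multiplication defined by the pairing in Eqs.~\eqref{eq:pair1}--\eqref{eq:pair5}: for $x \in H$,
\begin{equation*}
(\chi_\Gamma \cdot \chi_\Phi)(x) \;=\; (\chi_\Gamma \otimes \chi_\Phi)(\Delta(x)) \;=\; \sum_{(x)} \chi_\Gamma(x^{(1)})\,\chi_\Phi(x^{(2)}) \;=\; \mathrm{tr}\bigl((\Gamma \otimes \Phi)(x)\bigr) \;=\; \chi_{\Gamma \otimes \Phi}(x),
\end{equation*}
where the third equality uses that the action on $V \otimes_{H_L} W$ is defined via $\Delta$. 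Decomposing $\Gamma \otimes \Phi = \bigoplus_\Psi N_{\Gamma\Phi}^{\Psi}\,\Psi$ then yields $\chi_\Gamma \cdot \chi_\Phi = \sum_\Psi N_{\Gamma\Phi}^{\Psi}\,\chi_\Psi$, matching the fusion ring multiplication in $\operatorname{Gr}(\eC)$. The unit condition follows from $\chi_{\mathbb{1}} = \varepsilon_{H_L\text{-character}}$ mapping to the unit $\varepsilon \in \hat{H}$ under $\hat{\eta}$.

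The step I expect to be the main obstacle is proving injectivity of $\iota$, since for weak Hopf algebras the tensor unit $H_L$ need not be one-dimensional and the usual ``characters form an orthonormal basis'' argument requires the weak Hopf orthogonality relation. I would handle this by applying the character orthogonality from Lemma 2 of the excerpt: if $\sum_{\Gamma \in \Irr(H)} c_\Gamma\,\chi_\Gamma = 0$ in $\hat{H}$, then evaluating on $S(\lambda^{(1)})\Gamma(\lambda^{(2)})$ (with $\lambda$ the Haar integral) and using
\begin{equation*}
\sum_{(\lambda)} \chi_\Phi(S(\lambda^{(1)}))\,\chi_\Gamma(\lambda^{(2)}) \;=\; \frac{\delta_{\Phi,\Gamma}}{d_\Gamma}\,d_\Gamma \;=\; \delta_{\Phi,\Gamma}
\end{equation*}
(obtained by tracing the matrix-element orthogonality) isolates each coefficient $c_\Phi = 0$. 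Hence the irreducible characters are linearly independent in $\hat{H}$, so $\iota$ is injective on the $\mathbb{Z}$-basis and therefore an injective ring homomorphism.

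Finally I would tensor with $\mathbb{C}$ over $\mathbb{Z}$: since $\iota$ is injective and $\hat{H}$ is already a $\mathbb{C}$-algebra, base change gives $R(\eC) = \operatorname{Gr}(\eC) \otimes_{\mathbb{Z}} \mathbb{C} \hookrightarrow \hat{H}$ as a unital subalgebra. Composing with the symmetry action $\hat{H} \to \End(\mathcal{H})$, this recovers the topological defect line operators $W_{[a]}$ as the images $W_{\chi_\Gamma}$, confirming that any fusion category symmetry $\eC$ of a (1+1)D system is realized, at the level of the fusion algebra, as the dual weak Hopf symmetry $\hat{H}$. The multifusion case requires only that one replace ``simple $\mathbb{1}$'' with ``$\mathbb{1}$ a direct sum of simples'' and track the unit of the fusion ring accordingly.
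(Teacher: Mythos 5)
Your proposal is correct and follows essentially the same route as the paper: embed $\operatorname{Gr}(\eC)$ into $\hat{H}$ by sending $[\Gamma]\mapsto\chi_\Gamma$, verify via the pairing that the dual multiplication reproduces the fusion rule $\chi_\Gamma\cdot\chi_\Phi=\chi_{\Gamma\otimes\Phi}=\sum_\Psi N_{\Gamma\Phi}^\Psi\chi_\Psi$, and extend scalars to realize $R(\eC)$ as a subalgebra of $\hat{H}$. The only real difference is that you make the injectivity explicit by deriving linear independence of the irreducible characters from the orthogonality relation of Lemma~\ref{prop:orthgonality}, whereas the paper simply asserts the embedding and defers the linear-independence fact to the semisimple-algebra argument used in Proposition~\ref{prop:characterBasis}; both justifications are available in the text and your version is, if anything, more self-contained.
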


The above discussion suggests that weak Hopf symmetry is the most general symmetry for \((1+1)\)-dimensional systems, with fusion category symmetry as a special case. As seen in the previous subsection, both group symmetry and Hopf symmetry are  special cases of weak Hopf symmetry. Furthermore, invertible group symmetry can be viewed as a fusion category symmetry by categorifying the group. For a finite group \( G \), we introduce the discrete category \( \underline{G} \), whose objects are the group elements, and define the fusion as
\begin{equation}
    g \otimes h := gh.
\end{equation}
The fusion category symmetry \( \underline{G} \) corresponds exactly to group symmetry. Notice that \( \underline{G} \) has the same fusion rule as \( \Vect_G \). Recall that \( \Vect_G \simeq \Rep({\Cbb[G]}^{\vee}) \), where \( H \) is chosen as \( \Cbb[G] \). Thus, the fusion category symmetry \( \underline{G} \) is equivalent to dual weak Hopf symmetry \( \hat{H} =\Cbb[G]^{\vee\vee}  \cong \Cbb[G] \), which aligns perfectly with the structure of invertible symmetry.

Moreover, when considering general (not necessarily connected) weak Hopf symmetries, this framework also includes multifusion category symmetries as a special case.

We denote all cocommutative elements of $\hat{H}$ as $\Cocom (\hat{H})$, namely:
\begin{equation}
   \Cocom (\hat{H})=\{f:H\to \Cbb| f(xy)=f(yx)\}.
\end{equation}
This means that all functions in $\Cocom (\hat{H})$ are class functions of $W$.

\begin{proposition} \label{prop:characterBasis}
Let $H$ be a semisimple weak Hopf algebra.  
Then the space of cocommutative elements in the dual, $\Cocom(\hat{H})$, is spanned by the irreducible characters $\chi_{\Gamma}$ of $H$, that is,
\[
\Cocom(\hat{H}) \cong R(\Rep(H)) := \operatorname{Gr}(\Rep(H)) \otimes_{\Zbb} \Cbb.
\]
\end{proposition}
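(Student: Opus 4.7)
The plan is to identify $\operatorname{Cocom}(\hat{H})$ with the $\Cbb$-span of the irreducible characters $\{\chi_\Gamma\}_{\Gamma \in \operatorname{Irr}(H)}$ by first showing the inclusion $\operatorname{span}_{\Cbb}\{\chi_\Gamma\} \subseteq \operatorname{Cocom}(\hat{H})$, then proving linear independence, and finally matching dimensions. Cyclicity of the trace gives $\chi_\Gamma(xy) = \operatorname{Tr}(\Gamma(y)\Gamma(x)) = \chi_\Gamma(yx)$, placing each $\chi_\Gamma$ in $\operatorname{Cocom}(\hat{H})$. Linear independence then follows from Lemma \ref{prop:orthgonality}: specializing the orthogonality relation to $i=j$ and $k=l$ and summing produces
\begin{equation}
\sum_{(\lambda)} \chi_\Phi(S(\lambda^{\cone}))\, \chi_\Gamma(\lambda^{\ctwo}) = \delta_{\Phi,\Gamma},
\end{equation}
a nondegenerate bilinear pairing witnessing linear independence in $\hat{H}$.

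The step I expect to be the main obstacle is showing these characters actually span all of $\operatorname{Cocom}(\hat{H})$, which requires a dimension count using semisimplicity of $H$ as an algebra. Since $H$ is a finite-dimensional semisimple $\Cbb$-algebra, Wedderburn--Artin gives $H \cong \bigoplus_{i=1}^{r} M_{n_i}(\Cbb)$ with $r = |\operatorname{Irr}(H)|$, the simple $H$-modules $V_i$ corresponding bijectively to the blocks. A functional $f \in \hat{H}$ is cocommutative iff it annihilates the commutator subspace $[H,H]$, and within each block $[M_{n_i}(\Cbb), M_{n_i}(\Cbb)]$ is precisely the space of traceless matrices, of codimension one. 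Hence $\operatorname{Cocom}(\hat{H}) \cong (H/[H,H])^{\vee}$ has dimension $r$, and combined with the previous paragraph the $r$ linearly independent characters $\{\chi_\Gamma\}$ must form a basis.

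To upgrade this vector-space identification to the algebra isomorphism $\operatorname{Cocom}(\hat{H}) \cong R(\Rep(H))$, one uses that the product on $\hat{H}$ is dual to the comultiplication of $H$ via Eq.~\eqref{eq:pair1}. Together with the description of $V \otimes_{H_L} W$ as the image of the idempotent $\Delta(1_H)$ acting on $V \otimes_{\Cbb} W$, a short computation yields $\chi_\Gamma \cdot \chi_\Phi = \chi_{\Gamma \otimes \Phi}$, matching the fusion product on $R(\Rep(H))$. The assignment $[\Gamma] \mapsto \chi_\Gamma$ then extends $\Cbb$-linearly to the required algebra isomorphism.
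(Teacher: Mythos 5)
Your proof is correct and follows essentially the same route as the paper's: both reduce the claim to the identification $\Cocom(\hat{H}) \cong (H/[H,H])^{\vee}$ and then to the fact that the irreducible characters form a basis of this space for a semisimple algebra. The only difference is that the paper cites this last fact (Etingof et al., Theorem 3.6.2) whereas you prove it directly via the Wedderburn--Artin dimension count and the orthogonality relations of Lemma~\ref{prop:orthgonality}, and you additionally verify the ring-level compatibility $\chi_{\Gamma}\cdot\chi_{\Phi}=\chi_{\Gamma\otimes\Phi}$, which the paper establishes separately before stating the proposition.
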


\begin{proof}
We prove the statement in two steps.

1. First, we show that $\Cocom(\hat{H}) \cong (H/[H,H])^{\vee}$.  
Let $[H,H] := \operatorname{Span}\{ xy - yx \mid x,y \in H\}$ and consider the canonical projection $\pi: H \to H/[H,H]$.  
By definition, we have
\[
\Cocom(\hat{H}) = \{ f: H \to \Cbb \mid f([H,H]) = 0 \}.
\]  
Now, define the map
\[
\Psi: (H/[H,H])^{\vee} \to \Cocom(\hat{H}), \quad \Psi(f) = f \circ \pi.
\]  
This map is clearly well-defined, linear, and bijective, hence an isomorphism.

2. Next, we use the fact that the irreducible characters of $H$ form a basis of $(H/[H,H])^{\vee}$ for any finite-dimensional semisimple algebra.  
See, e.g., Ref.~\cite[Theorem 3.6.2]{etingof2011introduction} for a proof of this fact.
\end{proof}

The above result implies that, if we are only concerned with the symmetry encoded by the fusion algebra structure, the fusion category symmetry $\eC = \Rep(H)$ can be identified with the character algebra $\Cocom(\hat{H})$.

\subsection{Reconstructing weak Hopf symmetry from given fusion category}
\label{sec:reconstrucWHA}

The weak Hopf algebra associated with a given fusion category can be reconstructed in the following two ways:

1. \emph{Tannaka-Krein reconstruction from weak fiber functor.} --- Given any fusion category $\eC$ over $\Cbb$, there exists an algebra $A$ over $\mathbb{C}$ such that we can endow $\eC$ with an exact faithful monoidal functor $F: \eC \to {_A}\Mod_A$ (we call it the  weak fiber functor over fusion category \footnote{Notice that in Ref.~\cite{jia2024weakTube}, we refer to this functor as a fiber functor. We emphasize that this is not the same as a fiber functor $F: \eC \to \mathsf{Vect}$, which does not always exist for general fusion categories.}). The algebra $H = \operatorname{End}(F)$, consisting of natural transformations from $F$ to itself. It can be proved that $H$ forms a weak Hopf symmetry, and the representation category $\Rep(H)$ is equivalent to $\eC$ as unitary fusion categories \cite{szlachanyi2000finite,ostrik2003module}.

2. \emph{Boundary tube algebra.} --- An intuitive approach involves the concept of boundary tube algebra \cite{Kitaev2012boundary, bridgeman2023invertible, jia2024weakTube,jia2025weakhopftubealgebra} as applied to the string-net model. For a given bulk fusion category \( \eC \), we consider a gapped boundary described by a \( \eC \)-module category \( {_{\eC}}\eM \) and construct the corresponding boundary tube algebra \( \mathbf{Tube}({_{\eC}}\eM) \), which is a \( C^* \) weak Hopf algebra. Setting \( \eM = \eC \), namely, considering the smooth boundary, we have the boundary tube algebra spanned by the following basis:
\begin{equation} 
\left\{  
\begin{aligned}
    \begin{tikzpicture}
        \begin{scope}
            \fill[gray!20]
                (0,1.1) arc[start angle=90, end angle=270, radius=1.1] -- 
                (0,-0.5) arc[start angle=270, end angle=90, radius=0.5] -- cycle;
        \end{scope}
           \draw[line width=0.6pt,black,->] (0,0.5)--(0,0.7);
           \draw[line width=0.6pt,black,->] (0,0.7)--(0,1.0);
           \draw[line width=0.6pt,black] (0,0.5)--(0,1.1);
        \draw[line width=.6pt,black] (0,-0.5)--(0,-1.1);
        \draw[line width=0.6pt,black,->] (0,-1.1)--(0,-0.9);
        \draw[line width=0.6pt,black,->] (0,-1.1)--(0,-0.6);
        \draw[red, line width=0.6pt] (0,0.8) arc[start angle=90, end angle=270, radius=0.8];
           \draw[red, line width=0.6pt, ->] (0,-0.8) arc[start angle=270, end angle=180, radius=0.8];
        \node[line width=0.6pt, dashed, draw opacity=0.5] at (0,1.3) {$g$};
        \node[line width=0.6pt, dashed, draw opacity=0.5] at (0,-1.3) {$c$};
        \node[line width=0.6pt, dashed, draw opacity=0.5] at (-1,0) {$a$};
        \node[line width=0.6pt, dashed, draw opacity=0.5] at (0.3,-0.7) {$\nu$};
        \node[line width=0.6pt, dashed, draw opacity=0.5] at (0,-0.2) {$e$};
        \node[line width=0.6pt, dashed, draw opacity=0.5] at (0,0.2) {$f$};
        \node[line width=0.6pt, dashed, draw opacity=0.5] at (0.3,0.7) {$\mu$};
    \end{tikzpicture}
\end{aligned}
:\quad  \begin{aligned}
    &a,c,e,f,g\in \Irr(\eC),\\
    &\mu \in \Hom(a\otimes f,g),\\
    &\nu\in \Hom(c,a\otimes e) 
\end{aligned}
\right\}.
\label{eq:tubebasis}
\end{equation}
Notice that for a multiplicity-free fusion category, the vertex label can be omitted, since there is only one possible choice.
The fusion category \( \eC = \Fun_{\eC}(\eC, \eC) \) (that is, the functor category of module endofunctors of $_{\eC}\eC$, which mathematically characterizes the boundary phase) can be embedded into the representation category of the tube algebra $\Rep(\mathbf{Tube}({_{\eC}}\eC))$. In particular, there exists a monoidal functor
\begin{equation}
    F: \eC \hookrightarrow \Rep(\mathbf{Tube}({_{\eC}}\eC)).
\end{equation}
See Ref.~\cite{jia2025weakhopftubealgebra} for a detailed proof.  
Therefore, the fusion category \( \eC_{\eM}^{*} \) can be realized in terms of the weak Hopf algebra \( \mathbf{Tube}({_{\eC}}\eC) \). Moreover, it is widely expected that $\eC \simeq \Rep(\mathbf{Tube}({_{\eC}}\eC))$, with a proof based on the internal-hom construction of the boundary tube algebra given in Ref.~\cite{bai2025weakhopf}. However, a transparent proof relying solely on diagrammatic calculations is still absent.

Since the boundary tube algebra plays a crucial role in constructing explicit examples of the cluster state model in Section~\ref{sec:exampleCluster}, we briefly recall its structure here for convenience. For further details, see Refs.~\cite{jia2024weakTube,jia2025weakhopftubealgebra}.
The unit is give by
\begin{equation}
1=\sum_{f,e\in \Irr(\eC)}
\begin{aligned}
    \begin{tikzpicture}
        \begin{scope}
            \fill[gray!20]
                (0,1.1) arc[start angle=90, end angle=270, radius=1.1] -- 
                (0,-0.5) arc[start angle=270, end angle=90, radius=0.5] -- cycle;
        \end{scope}
           \draw[line width=0.6pt,black,->] (0,0.5)--(0,0.9);
           \draw[line width=0.6pt,black] (0,0.5)--(0,1.1);
        \draw[line width=.6pt,black] (0,-0.5)--(0,-1.1);
        \draw[line width=0.6pt,black,->] (0,-1.1)--(0,-0.7);
        \node[line width=0.6pt, dashed, draw opacity=0.5] at (0.3,-0.7) {$e$};
        \node[line width=0.6pt, dashed, draw opacity=0.5] at (0.3,0.7) {$f$};
    \end{tikzpicture}
\end{aligned}.
\end{equation}
which means we set $a=\mathbb{1}$ and $c=e$, $f=g$ and $\mu=\id$, $\nu=\id$ in Eq.~\eqref{eq:tubebasis}.
The multiplication is defined as gluing two boundary tubes
\begin{equation}
    \begin{aligned}
    \begin{tikzpicture}
        \begin{scope}
            \fill[gray!20]
                (0,1.1) arc[start angle=90, end angle=270, radius=1.1] -- 
                (0,-0.5) arc[start angle=270, end angle=90, radius=0.5] -- cycle;
        \end{scope}
           \draw[line width=0.6pt,black,->] (0,0.5)--(0,0.7);
           \draw[line width=0.6pt,black,->] (0,0.7)--(0,1.0);
           \draw[line width=0.6pt,black] (0,0.5)--(0,1.1);
        \draw[line width=.6pt,black] (0,-0.5)--(0,-1.1);
        \draw[line width=0.6pt,black,->] (0,-1.1)--(0,-0.9);
        \draw[line width=0.6pt,black,->] (0,-1.1)--(0,-0.6);
        \draw[red, line width=0.6pt] (0,0.8) arc[start angle=90, end angle=270, radius=0.8];
           \draw[red, line width=0.6pt, ->] (0,-0.8) arc[start angle=270, end angle=180, radius=0.8];
        \node[line width=0.6pt, dashed, draw opacity=0.5] at (0,1.3) {$g$};
        \node[line width=0.6pt, dashed, draw opacity=0.5] at (0,-1.3) {$c$};
        \node[line width=0.6pt, dashed, draw opacity=0.5] at (-1,0) {$a$};
        \node[line width=0.6pt, dashed, draw opacity=0.5] at (0.3,-0.7) {$\nu$};
        \node[line width=0.6pt, dashed, draw opacity=0.5] at (0,-0.2) {$e$};
        \node[line width=0.6pt, dashed, draw opacity=0.5] at (0,0.2) {$f$};
        \node[line width=0.6pt, dashed, draw opacity=0.5] at (0.3,0.7) {$\mu$};
    \end{tikzpicture}
\end{aligned}
\cdot
\begin{aligned}
    \begin{tikzpicture}
        \begin{scope}
            \fill[gray!20]
                (0,1.1) arc[start angle=90, end angle=270, radius=1.1] -- 
                (0,-0.5) arc[start angle=270, end angle=90, radius=0.5] -- cycle;
        \end{scope}
           \draw[line width=0.6pt,black,->] (0,0.5)--(0,0.7);
           \draw[line width=0.6pt,black,->] (0,0.7)--(0,1.0);
           \draw[line width=0.6pt,black] (0,0.5)--(0,1.1);
        \draw[line width=.6pt,black] (0,-0.5)--(0,-1.1);
        \draw[line width=0.6pt,black,->] (0,-1.1)--(0,-0.9);
        \draw[line width=0.6pt,black,->] (0,-1.1)--(0,-0.6);
        \draw[red, line width=0.6pt] (0,0.8) arc[start angle=90, end angle=270, radius=0.8];
           \draw[red, line width=0.6pt, ->] (0,-0.8) arc[start angle=270, end angle=180, radius=0.8];
        \node[line width=0.6pt, dashed, draw opacity=0.5] at (0,1.3) {$g'$};
        \node[line width=0.6pt, dashed, draw opacity=0.5] at (0,-1.3) {$c'$};
        \node[line width=0.6pt, dashed, draw opacity=0.5] at (-1,0) {$a'$};
        \node[line width=0.6pt, dashed, draw opacity=0.5] at (0.3,-0.7) {$\nu'$};
        \node[line width=0.6pt, dashed, draw opacity=0.5] at (0,-0.2) {$e'$};
        \node[line width=0.6pt, dashed, draw opacity=0.5] at (0,0.2) {$f'$};
        \node[line width=0.6pt, dashed, draw opacity=0.5] at (0.3,0.7) {$\mu'$};
    \end{tikzpicture}
\end{aligned}=\delta_{e,c'}\delta_{f,g'}
\begin{aligned}
    \begin{tikzpicture}
        \begin{scope}
            \fill[gray!20]
                (0,1.7) arc[start angle=90, end angle=270, radius=1.7] -- 
                (0,-0.5) arc[start angle=270, end angle=90, radius=0.5] -- cycle;
        \end{scope}
           \draw[line width=0.6pt,black,->] (0,0.5)--(0,0.7);
           \draw[line width=0.6pt,black,->] (0,0.7)--(0,1.0);
                      \draw[line width=0.6pt,black,->] (0,0.5)--(0,1.6);
           \draw[line width=0.6pt,black] (0,0.5)--(0,1.7);
        \draw[line width=.6pt,black] (0,-0.5)--(0,-1.7);
        \draw[line width=0.6pt,black,->] (0,-1.1)--(0,-0.9);
        \draw[line width=0.6pt,black,->] (0,-1.1)--(0,-0.6);
        \draw[line width=0.6pt,black,->] (0,-1.7)--(0,-1.4);
        \draw[red, line width=0.6pt] (0,0.8) arc[start angle=90, end angle=270, radius=0.8];
           \draw[red, line width=0.6pt, ->] (0,-0.8) arc[start angle=270, end angle=180, radius=0.8];
        \draw[red, line width=0.6pt] (0,1.3) arc[start angle=90, end angle=270, radius=1.3];
           \draw[red, line width=0.6pt, ->] (0,-1.3) arc[start angle=270, end angle=180, radius=1.3];
                \node[line width=0.6pt, dashed, draw opacity=0.5] at (-0.3,1.3) {$\mu$};
                      \node[line width=0.6pt, dashed, draw opacity=0.5] at (-0.3,-1.3) {$\nu$};
                            \node[line width=0.6pt, dashed, draw opacity=0.5] at (0.3,1.6) {$g$};
                                      \node[line width=0.6pt, dashed, draw opacity=0.5] at (0.3,-1.6) {$c$};
        \node[line width=0.6pt, dashed, draw opacity=0.5] at (0.3,1.1) {$g'$};
        \node[line width=0.6pt, dashed, draw opacity=0.5] at (0.3,-1.1) {$c'$};
        \node[line width=0.6pt, dashed, draw opacity=0.5] at (-1,0) {$a'$};
        \node[line width=0.6pt, dashed, draw opacity=0.5] at (0.3,-0.7) {$\nu'$};
        \node[line width=0.6pt, dashed, draw opacity=0.5] at (0,-0.2) {$e'$};
        \node[line width=0.6pt, dashed, draw opacity=0.5] at (0,0.2) {$f'$};
        \node[line width=0.6pt, dashed, draw opacity=0.5] at (0.3,0.7) {$\mu'$};
                \node[line width=0.6pt, dashed, draw opacity=0.5] at (-1.5,0) {$a$};
    \end{tikzpicture}
\end{aligned}.
\end{equation}
After applying the topological local moves—namely the \( F \)-moves, parallel moves, and loop moves as described in~\cite{jia2024weakTube}—to the diagram on the right-hand side, we obtain a linear combination of the basis elements given in Eq.~\eqref{eq:tubebasis}.
The counit is defined by
\begin{equation}
    \varepsilon\left( 
    \begin{aligned}
    \begin{tikzpicture}
        \begin{scope}
            \fill[gray!20]
                (0,1.1) arc[start angle=90, end angle=270, radius=1.1] -- 
                (0,-0.5) arc[start angle=270, end angle=90, radius=0.5] -- cycle;
        \end{scope}
           \draw[line width=0.6pt,black,->] (0,0.5)--(0,0.7);
           \draw[line width=0.6pt,black,->] (0,0.7)--(0,1.0);
           \draw[line width=0.6pt,black] (0,0.5)--(0,1.1);
        \draw[line width=.6pt,black] (0,-0.5)--(0,-1.1);
        \draw[line width=0.6pt,black,->] (0,-1.1)--(0,-0.9);
        \draw[line width=0.6pt,black,->] (0,-1.1)--(0,-0.6);
        \draw[red, line width=0.6pt] (0,0.8) arc[start angle=90, end angle=270, radius=0.8];
           \draw[red, line width=0.6pt, ->] (0,-0.8) arc[start angle=270, end angle=180, radius=0.8];
        \node[line width=0.6pt, dashed, draw opacity=0.5] at (0,1.3) {$g$};
        \node[line width=0.6pt, dashed, draw opacity=0.5] at (0,-1.3) {$c$};
        \node[line width=0.6pt, dashed, draw opacity=0.5] at (-1,0) {$a$};
        \node[line width=0.6pt, dashed, draw opacity=0.5] at (0.3,-0.7) {$\nu$};
        \node[line width=0.6pt, dashed, draw opacity=0.5] at (0,-0.2) {$e$};
        \node[line width=0.6pt, dashed, draw opacity=0.5] at (0,0.2) {$f$};
        \node[line width=0.6pt, dashed, draw opacity=0.5] at (0.3,0.7) {$\mu$};
    \end{tikzpicture}
\end{aligned}
    \right) = \frac{
    \delta_{f,e} \delta_{c,g} }{d_c}
    \begin{aligned}
    \begin{tikzpicture}
        \begin{scope}
            \fill[gray!20]
                (0,1.1) arc[start angle=90, end angle=270, radius=1.1] -- 
                (0,-0.5) arc[start angle=270, end angle=90, radius=0.5] -- cycle;
        \end{scope}
           \draw[line width=0.6pt,black,->] (0,0.7)--(0,1.0);
           \draw[line width=0.6pt,black] (0,0.5)--(0,1.1);
        \draw[line width=.6pt,black] (0,-1.1)--(0,1.1);
        \draw[line width=0.6pt,black,->] (0,-1.1)--(0,-0.9);
        \draw[line width=0.6pt,black,->] (0,-1.1)--(0,0.2);
        \draw[red, line width=0.6pt] (0,0.8) arc[start angle=90, end angle=270, radius=0.8];
           \draw[red, line width=0.6pt, ->] (0,-0.8) arc[start angle=270, end angle=180, radius=0.8];
 \draw[black, line width=0.6pt] (0,-1.1) arc[start angle=-90, end angle=-270, radius=1.1];
        \node[line width=0.6pt, dashed, draw opacity=0.5] at (0,1.3) {$g$};
        \node[line width=0.6pt, dashed, draw opacity=0.5] at (-0.6,0) {$a$};
        \node[line width=0.6pt, dashed, draw opacity=0.5] at (0.3,-0.7) {$\nu$};
        \node[line width=0.6pt, dashed, draw opacity=0.5] at (0.3,-0.2) {$e$};
        \node[line width=0.6pt, dashed, draw opacity=0.5] at (0.3,0.7) {$\mu$};
    \end{tikzpicture}
\end{aligned} = \delta_{f,e} \delta_{c,g} \sqrt{\frac{d_a d_e}{d_g}}
.
\end{equation}
The comultiplication is defined as
\begin{equation}
      \Delta\left( 
    \begin{aligned}
    \begin{tikzpicture}
        \begin{scope}
            \fill[gray!20]
                (0,1.1) arc[start angle=90, end angle=270, radius=1.1] -- 
                (0,-0.5) arc[start angle=270, end angle=90, radius=0.5] -- cycle;
        \end{scope}
           \draw[line width=0.6pt,black,->] (0,0.5)--(0,0.7);
           \draw[line width=0.6pt,black,->] (0,0.7)--(0,1.0);
           \draw[line width=0.6pt,black] (0,0.5)--(0,1.1);
        \draw[line width=.6pt,black] (0,-0.5)--(0,-1.1);
        \draw[line width=0.6pt,black,->] (0,-1.1)--(0,-0.9);
        \draw[line width=0.6pt,black,->] (0,-1.1)--(0,-0.6);
        \draw[red, line width=0.6pt] (0,0.8) arc[start angle=90, end angle=270, radius=0.8];
           \draw[red, line width=0.6pt, ->] (0,-0.8) arc[start angle=270, end angle=180, radius=0.8];
        \node[line width=0.6pt, dashed, draw opacity=0.5] at (0,1.3) {$g$};
        \node[line width=0.6pt, dashed, draw opacity=0.5] at (0,-1.3) {$c$};
        \node[line width=0.6pt, dashed, draw opacity=0.5] at (-1,0) {$a$};
        \node[line width=0.6pt, dashed, draw opacity=0.5] at (0.3,-0.7) {$\nu$};
        \node[line width=0.6pt, dashed, draw opacity=0.5] at (0,-0.2) {$e$};
        \node[line width=0.6pt, dashed, draw opacity=0.5] at (0,0.2) {$f$};
        \node[line width=0.6pt, dashed, draw opacity=0.5] at (0.3,0.7) {$\mu$};
    \end{tikzpicture}
\end{aligned}
    \right)
    =
    \sum_{k, l,\zeta} \sqrt{ \frac{d_l}{d_kd_a} }   \begin{aligned}
    \begin{tikzpicture}
        \begin{scope}
            \fill[gray!20]
                (0,1.1) arc[start angle=90, end angle=270, radius=1.1] -- 
                (0,-0.5) arc[start angle=270, end angle=90, radius=0.5] -- cycle;
        \end{scope}
           \draw[line width=0.6pt,black,->] (0,0.5)--(0,0.7);
           \draw[line width=0.6pt,black,->] (0,0.7)--(0,1.0);
           \draw[line width=0.6pt,black] (0,0.5)--(0,1.1);
        \draw[line width=.6pt,black] (0,-0.5)--(0,-1.1);
        \draw[line width=0.6pt,black,->] (0,-1.1)--(0,-0.9);
        \draw[line width=0.6pt,black,->] (0,-1.1)--(0,-0.6);
        \draw[red, line width=0.6pt] (0,0.8) arc[start angle=90, end angle=270, radius=0.8];
           \draw[red, line width=0.6pt, ->] (0,-0.8) arc[start angle=270, end angle=180, radius=0.8];
        \node[line width=0.6pt, dashed, draw opacity=0.5] at (0,1.3) {$g$};
        \node[line width=0.6pt, dashed, draw opacity=0.5] at (0,-1.3) {$l$};
        \node[line width=0.6pt, dashed, draw opacity=0.5] at (-1,0) {$a$};
        \node[line width=0.6pt, dashed, draw opacity=0.5] at (0.3,-0.7) {$\zeta$};
        \node[line width=0.6pt, dashed, draw opacity=0.5] at (0,-0.2) {$k$};
        \node[line width=0.6pt, dashed, draw opacity=0.5] at (0,0.2) {$f$};
        \node[line width=0.6pt, dashed, draw opacity=0.5] at (0.3,0.7) {$\mu$};
    \end{tikzpicture}
\end{aligned} \otimes     
\begin{aligned}
    \begin{tikzpicture}
        \begin{scope}
            \fill[gray!20]
                (0,1.1) arc[start angle=90, end angle=270, radius=1.1] -- 
                (0,-0.5) arc[start angle=270, end angle=90, radius=0.5] -- cycle;
        \end{scope}
           \draw[line width=0.6pt,black,->] (0,0.5)--(0,0.7);
           \draw[line width=0.6pt,black,->] (0,0.7)--(0,1.0);
           \draw[line width=0.6pt,black] (0,0.5)--(0,1.1);
        \draw[line width=.6pt,black] (0,-0.5)--(0,-1.1);
        \draw[line width=0.6pt,black,->] (0,-1.1)--(0,-0.9);
        \draw[line width=0.6pt,black,->] (0,-1.1)--(0,-0.6);
        \draw[red, line width=0.6pt] (0,0.8) arc[start angle=90, end angle=270, radius=0.8];
           \draw[red, line width=0.6pt, ->] (0,-0.8) arc[start angle=270, end angle=180, radius=0.8];
        \node[line width=0.6pt, dashed, draw opacity=0.5] at (0,1.3) {$l$};
        \node[line width=0.6pt, dashed, draw opacity=0.5] at (0,-1.3) {$c$};
        \node[line width=0.6pt, dashed, draw opacity=0.5] at (-1,0) {$a$};
        \node[line width=0.6pt, dashed, draw opacity=0.5] at (0.3,-0.7) {$\nu$};
        \node[line width=0.6pt, dashed, draw opacity=0.5] at (0,-0.2) {$e$};
        \node[line width=0.6pt, dashed, draw opacity=0.5] at (0,0.2) {$k$};
        \node[line width=0.6pt, dashed, draw opacity=0.5] at (0.3,0.7) {$\zeta$};
    \end{tikzpicture}
\end{aligned}.
\end{equation}
The antipode map is defined as
\begin{equation}
         S\left( 
    \begin{aligned}
    \begin{tikzpicture}
        \begin{scope}
            \fill[gray!20]
                (0,1.1) arc[start angle=90, end angle=270, radius=1.1] -- 
                (0,-0.5) arc[start angle=270, end angle=90, radius=0.5] -- cycle;
        \end{scope}
           \draw[line width=0.6pt,black,->] (0,0.5)--(0,0.7);
           \draw[line width=0.6pt,black,->] (0,0.7)--(0,1.0);
           \draw[line width=0.6pt,black] (0,0.5)--(0,1.1);
        \draw[line width=.6pt,black] (0,-0.5)--(0,-1.1);
        \draw[line width=0.6pt,black,->] (0,-1.1)--(0,-0.9);
        \draw[line width=0.6pt,black,->] (0,-1.1)--(0,-0.6);
        \draw[red, line width=0.6pt] (0,0.8) arc[start angle=90, end angle=270, radius=0.8];
           \draw[red, line width=0.6pt, ->] (0,-0.8) arc[start angle=270, end angle=180, radius=0.8];
        \node[line width=0.6pt, dashed, draw opacity=0.5] at (0,1.3) {$g$};
        \node[line width=0.6pt, dashed, draw opacity=0.5] at (0,-1.3) {$c$};
        \node[line width=0.6pt, dashed, draw opacity=0.5] at (-1,0) {$a$};
        \node[line width=0.6pt, dashed, draw opacity=0.5] at (0.3,-0.7) {$\nu$};
        \node[line width=0.6pt, dashed, draw opacity=0.5] at (0,-0.2) {$e$};
        \node[line width=0.6pt, dashed, draw opacity=0.5] at (0,0.2) {$f$};
        \node[line width=0.6pt, dashed, draw opacity=0.5] at (0.3,0.7) {$\mu$};
    \end{tikzpicture}
\end{aligned}
    \right)
    =   \frac{d_f}{d_g} \begin{aligned}
    \begin{tikzpicture}
        \begin{scope}
            \fill[gray!20]
                (0,1.1) arc[start angle=90, end angle=270, radius=1.1] -- 
                (0,-0.5) arc[start angle=270, end angle=90, radius=0.5] -- cycle;
        \end{scope}
           \draw[line width=0.6pt,black,->] (0,0.5)--(0,0.7);
           \draw[line width=0.6pt,black,->] (0,0.7)--(0,1.0);
           \draw[line width=0.6pt,black] (0,0.5)--(0,1.1);
        \draw[line width=.6pt,black] (0,-0.5)--(0,-1.1);
        \draw[line width=0.6pt,black,->] (0,-1.1)--(0,-0.9);
        \draw[line width=0.6pt,black,->] (0,-1.1)--(0,-0.6);
        \draw[red, line width=0.6pt] (0,0.8) arc[start angle=90, end angle=270, radius=0.8];
           \draw[red, line width=0.6pt, ->] (0,-0.8) arc[start angle=270, end angle=180, radius=0.8];
        \node[line width=0.6pt, dashed, draw opacity=0.5] at (0,1.3) {$e$};
        \node[line width=0.6pt, dashed, draw opacity=0.5] at (0,-1.3) {$f$};
        \node[line width=0.6pt, dashed, draw opacity=0.5] at (-1,0) {$\bar{a}$};
        \node[line width=0.6pt, dashed, draw opacity=0.5] at (0.3,-0.7) {$\mu$};
        \node[line width=0.6pt, dashed, draw opacity=0.5] at (0,-0.2) {$g$};
        \node[line width=0.6pt, dashed, draw opacity=0.5] at (0,0.2) {$c$};
        \node[line width=0.6pt, dashed, draw opacity=0.5] at (0.3,0.7) {$\nu$};
    \end{tikzpicture}
\end{aligned}.
\end{equation}
It can be proved that this is a \( C^* \) weak Hopf algebra \cite{jia2024weakTube} with $*$-operation given by
\begin{equation}
        \left( 
    \begin{aligned}
    \begin{tikzpicture}
        \begin{scope}
            \fill[gray!20]
                (0,1.1) arc[start angle=90, end angle=270, radius=1.1] -- 
                (0,-0.5) arc[start angle=270, end angle=90, radius=0.5] -- cycle;
        \end{scope}
           \draw[line width=0.6pt,black,->] (0,0.5)--(0,0.7);
           \draw[line width=0.6pt,black,->] (0,0.7)--(0,1.0);
           \draw[line width=0.6pt,black] (0,0.5)--(0,1.1);
        \draw[line width=.6pt,black] (0,-0.5)--(0,-1.1);
        \draw[line width=0.6pt,black,->] (0,-1.1)--(0,-0.9);
        \draw[line width=0.6pt,black,->] (0,-1.1)--(0,-0.6);
        \draw[red, line width=0.6pt] (0,0.8) arc[start angle=90, end angle=270, radius=0.8];
           \draw[red, line width=0.6pt, ->] (0,-0.8) arc[start angle=270, end angle=180, radius=0.8];
        \node[line width=0.6pt, dashed, draw opacity=0.5] at (0,1.3) {$g$};
        \node[line width=0.6pt, dashed, draw opacity=0.5] at (0,-1.3) {$c$};
        \node[line width=0.6pt, dashed, draw opacity=0.5] at (-1,0) {$a$};
        \node[line width=0.6pt, dashed, draw opacity=0.5] at (0.3,-0.7) {$\nu$};
        \node[line width=0.6pt, dashed, draw opacity=0.5] at (0,-0.2) {$e$};
        \node[line width=0.6pt, dashed, draw opacity=0.5] at (0,0.2) {$f$};
        \node[line width=0.6pt, dashed, draw opacity=0.5] at (0.3,0.7) {$\mu$};
    \end{tikzpicture}
\end{aligned}
    \right)^*
    =   \frac{d_e}{d_c} \begin{aligned}
    \begin{tikzpicture}
        \begin{scope}
            \fill[gray!20]
                (0,1.1) arc[start angle=90, end angle=270, radius=1.1] -- 
                (0,-0.5) arc[start angle=270, end angle=90, radius=0.5] -- cycle;
        \end{scope}
           \draw[line width=0.6pt,black,->] (0,0.5)--(0,0.7);
           \draw[line width=0.6pt,black,->] (0,0.7)--(0,1.0);
           \draw[line width=0.6pt,black] (0,0.5)--(0,1.1);
        \draw[line width=.6pt,black] (0,-0.5)--(0,-1.1);
        \draw[line width=0.6pt,black,->] (0,-1.1)--(0,-0.9);
        \draw[line width=0.6pt,black,->] (0,-1.1)--(0,-0.6);
        \draw[red, line width=0.6pt] (0,0.8) arc[start angle=90, end angle=270, radius=0.8];
           \draw[red, line width=0.6pt, ->] (0,-0.8) arc[start angle=270, end angle=180, radius=0.8];
        \node[line width=0.6pt, dashed, draw opacity=0.5] at (0,1.3) {$f$};
        \node[line width=0.6pt, dashed, draw opacity=0.5] at (0,-1.3) {$e$};
        \node[line width=0.6pt, dashed, draw opacity=0.5] at (-1,0) {$\bar{a}$};
        \node[line width=0.6pt, dashed, draw opacity=0.5] at (0.3,-0.7) {$\mu$};
        \node[line width=0.6pt, dashed, draw opacity=0.5] at (0,-0.2) {$c$};
        \node[line width=0.6pt, dashed, draw opacity=0.5] at (0,0.2) {$g$};
        \node[line width=0.6pt, dashed, draw opacity=0.5] at (0.3,0.7) {$\nu$};
    \end{tikzpicture}
\end{aligned}.\label{eq:starope}
\end{equation}

In constructing the lattice model, we also make use of the Haar integral of the boundary tube algebra. It has been shown in Ref.~\cite{jia2025weakhopftubealgebra} that the Haar integral takes the following form:
\begin{equation}\label{eq:HaarTube}
    \lambda = \frac{1}{\operatorname{rank} \eC} \sum_{a,x,y,\mu} 
    \sqrt{\frac{d_a}{d_x^3 d_y}}
        \begin{aligned}
    \begin{tikzpicture}
        \begin{scope}
            \fill[gray!20]
                (0,1.1) arc[start angle=90, end angle=270, radius=1.1] -- 
                (0,-0.5) arc[start angle=270, end angle=90, radius=0.5] -- cycle;
        \end{scope}
           \draw[line width=0.6pt,black,->] (0,0.5)--(0,0.7);
           \draw[line width=0.6pt,black,->] (0,0.7)--(0,1.0);
           \draw[line width=0.6pt,black] (0,0.5)--(0,1.1);
        \draw[line width=.6pt,black] (0,-0.5)--(0,-1.1);
        \draw[line width=0.6pt,black,->] (0,-1.1)--(0,-0.9);
        \draw[line width=0.6pt,black,->] (0,-1.1)--(0,-0.6);
        \draw[red, line width=0.6pt] (0,0.8) arc[start angle=90, end angle=270, radius=0.8];
           \draw[red, line width=0.6pt, ->] (0,-0.8) arc[start angle=270, end angle=180, radius=0.8];
        \node[line width=0.6pt, dashed, draw opacity=0.5] at (0,1.3) {$y$};
        \node[line width=0.6pt, dashed, draw opacity=0.5] at (0,-1.3) {$y$};
        \node[line width=0.6pt, dashed, draw opacity=0.5] at (-1,0) {$a$};
        \node[line width=0.6pt, dashed, draw opacity=0.5] at (0.3,-0.7) {$\mu$};
        \node[line width=0.6pt, dashed, draw opacity=0.5] at (0,-0.3) {$x$};
        \node[line width=0.6pt, dashed, draw opacity=0.5] at (0,0.3) {$x$};
        \node[line width=0.6pt, dashed, draw opacity=0.5] at (0.3,0.7) {$\mu$};
    \end{tikzpicture}
\end{aligned}.
\end{equation}
where $\mathrm{rank} \eC$ denotes the number of objects in $\Irr \eC$.

\subsection{Weak Hopf symmetric single qudit system}

Now let us consider a simplest example of weak Hopf symmetric system, a $C^*$ weak Hopf qudit.
By a weak Hopf qudit we mean a complex weak Hopf algebra $H$ with inner product given by
\begin{equation}
    \langle a,b\rangle=\Lambda(a^*b),
\end{equation}
where $\Lambda$ is the Haar integral of $\hat{H}$ (also called Haar measure of $H$).
We will hereinafter denote Haar integral of $H$ as $\lambda$ and Haar integral of $\hat{H}$ as $\Lambda$.
For the specail case of group algebra $H=\Cbb[G]$, the Haar integral is of the form
\begin{equation}
    \lambda=\frac{1}{|G|}\sum_{g\in G} g,\quad \Lambda=\delta_{1_G}(\bullet).
\end{equation}
The inner product for this case it the usual\footnote{Recall that the involution of $\mathbb{C}[G]$ is defined by $g^* = g^{-1}$ for all $g \in G$, extended antilinearly over the entire space.} $\langle g,h\rangle= \delta_{1_G}(g^*h)=\delta_{1_G, g^{-1}h}=\delta_{g,h}$ for all $g,h\in G$.

\begin{table}
    \centering
{
    \setlength{\tabcolsep}{3mm}
    \renewcommand{\arraystretch}{1.3} 
    \begin{tabular}{|l|c|}
    \hhline{|==|}
 Weak Hopf qudit &  $\mathcal{H}=H$   \\
    \hhline{|--|}
    Regular action  & $\begin{aligned}
    & \XR_g|h\rangle =|gh\rangle,\ \XL_g|h\rangle =|hS^{-1}(g)\rangle,  \\
    & \tilde{\XR}_g|h\rangle =|S^{-1}(g)h\rangle,\ \tilde{\XL}_g|h\rangle =|hg\rangle 
    \end{aligned}$   \\
    \hhline{|--|}
    Dual action  &  $\begin{aligned}
  &\ZR_{\psi}|h\rangle =|\psi \rightharpoonup h\rangle= \sum_{(h)}\psi(h^{\ctwo})|h^{\cone}\rangle\\
   &   \ZL_{\psi}|h\rangle=|h \leftharpoonup \hat{S}(\psi)\rangle = \sum_{(h)}\psi(S(h^{\cone}))|h^{\ctwo}\rangle\\
   &  \tilde{\ZR}_{\psi}|h\rangle =|\hat{S}(\psi) \rightharpoonup h\rangle= \sum_{(h)}\psi(S(h^{\ctwo}))|h^{\cone}\rangle,\\
    &  \tilde{\ZL}_{\psi}|h\rangle=|h \leftharpoonup \psi\rangle = \sum_{(h)}\psi(h^{\cone})|h^{\ctwo}\rangle
    \end{aligned}$ \\
    \hhline{|==|}
    \end{tabular} 
}
    \caption{Summary of quantum operations for weak Hopf qudit system.}
    \label{tab:HopfQudit}
\end{table}

The regular action of a weak Hopf algebra on itself can be viewed as a generalization of Pauli $X$-type operators. For the left action $H \curvearrowright H$, we define:
\begin{equation}
\XR_g |h\rangle = |gh\rangle, \quad \XL_g |h\rangle = |hS^{-1}(g)\rangle.
\end{equation}
For the right action $H \curvearrowleft H$, we define:
\begin{equation}
\tilde{\XR}_g |h\rangle = |S^{-1}(g)h\rangle, \quad \tilde{\XL}_g |h\rangle = |hg\rangle.
\end{equation}
The use of $S^{-1}$ in the above definitions ensures the commutative local stabilizer in weak Hopf lattice gauge theory \cite{Jia2023weak} \footnote{Alternatively, we can choose to include $S^{-1}$ in the weak Hopf Pauli $Z$ operators, in which case there is no need to use $S^{-1}$ in the definition of the weak Hopf Pauli $X$ operators.
}. Additionally, note that for a $C^*$ Hopf algebra and more general weak Kac algebra, $S^{-1} = S$.

There are also canonical actions of the dual weak Hopf algebra $\hat{H}$ on the Hopf qudit $H$, defined using Sweedler’s notation as follows:
\begin{equation}
\varphi \rightharpoonup x:=\sum_{(x)}x^{(1)} \langle \varphi, x^{(2)}\rangle, \quad  x \leftharpoonup \varphi:=\sum_{(x)}   \langle \varphi,    x^{(1)}    \rangle  x^{(2)},
\end{equation}
for all $\varphi \in \hat{H}$ and $x \in H$.
The operators corresponding the $\hat{H}$ action on weak Hopf qudit $H$ can be regarded as generalized Pauli $Z$-type operators.
For left action $\hat{H}\curvearrowright H$, we define:
\begin{align}
      \ZR_{\psi}|h\rangle =|\psi \rightharpoonup h\rangle= \sum_{(h)}\psi(h^{\ctwo})|h^{\cone}\rangle,\\
      \ZL_{\psi}|h\rangle=|h \leftharpoonup \hat{S}(\psi)\rangle = \sum_{(h)}\psi(S(h^{\cone}))|h^{\ctwo}\rangle.
\end{align}
For right action $H \curvearrowleft \hat{H}$, we define
\begin{align}
     \tilde{\ZR}_{\psi}|h\rangle =|\hat{S}(\psi) \rightharpoonup h\rangle= \sum_{(h)}\psi(S(h^{\ctwo}))|h^{\cone}\rangle,\\
      \tilde{\ZL}_{\psi}|h\rangle=|h \leftharpoonup \psi\rangle = \sum_{(h)}\psi(h^{\cone})|h^{\ctwo}\rangle.
\end{align}
Since the character $\chi_{\Gamma}$ of irreducible representations $\Gamma \in \Irr(H)$ lies in $\hat{H}$, we can define $\ZR_{\Gamma}, \ZL_{\Gamma}, \tilde{\ZR}_{\Gamma}, \tilde{\ZL}_{\Gamma}$ by replacing $\psi$ with $\chi_{\Gamma}$ in the expressions given above (In certain cases, to avoid ambiguity, we will also use notations such as $\ZR_{\chi_{\Gamma}}$, and similar variants, for clarity).

It is convenient to introduce representation-matrix-valued $Z$-type operators:
\begin{equation}
    Z_{\Gamma}|h\rangle = \sum_{(h)} \Gamma(h^{\ctwo}) |h^{\cone}\rangle, \quad Z_{\Gamma}^{\ddagger} |h\rangle = \sum_{(h)} \Gamma(S(h^{\cone})) |h^{\ctwo}\rangle.
\end{equation}
Similarly, we define
\begin{equation}
    \tilde{Z}_{\Gamma} |h\rangle= \sum_{(h)} \Gamma(h^{\cone}) |h^{\ctwo}\rangle, \quad \tilde{Z}_{\Gamma}^{\ddagger} |h\rangle= \sum_{(h)} \Gamma(S(h^{\ctwo})) |h^{\cone}\rangle.
\end{equation}
Using $\operatorname{Tr}'$ to denote the trace over the representation space, we obtain
\begin{equation}
    \ZR_{\Gamma} = \Tr' Z_{\Gamma}, \quad \ZL_{\Gamma} = \Tr' Z^{\ddagger}_{\Gamma}, \quad \tilde{\ZL}_{\Gamma} = \Tr' \tilde{Z}_{\Gamma}, \quad \tilde{\ZR}_{\Gamma} = \Tr' \tilde{Z}^{\ddagger}_{\Gamma}.
\end{equation}

Following the notation of Kitaev \cite{Kitaev2003, Buerschaper2013a,jia2023boundary,Jia2023weak}, we have $L_+ = \XR$, $L_- = \XL$, $T_+ = \ZR$, and $T_- = \ZL$. The notations here primarily follows Refs.~\cite{albert2021spin, fechisin2023noninvertible, jia2024generalized}, with slight modifications to accommodate the weak Hopf algebra case.


The simplest example of a Hamiltonian with weak Hopf symmetry can be constructed from the center of $H$. Let $\gamma \in \operatorname{Center}(H)$ (the center of $H$ is defined as $\operatorname{Center}(H)=\{h\in H| hg=gh,\, \forall g\in H\}$), and define $\mathbb{H} = \XR_\gamma$. Then, $[\XR_h, \mathbb{H}] = 0$ for all $h\in H$. Similarly, a model exhibiting comodule algebra symmetry can be constructed.


\section{Weak Hopf algebra tensor network states}
\label{sec:tensor-network}

The key feature of Hopf and weak Hopf qudits lies in their comultiplication, which can be used to generate entanglement and quantum correlations. In Ref.~\cite{jia2024generalized}, the Hopf tensor network was employed as a fundamental tool to solve the Hopf cluster state model. For the more general weak Hopf cluster ladder model, one needs to apply a suitable generalization of the Hopf tensor network~\cite{jia2024generalized,Jia2023weak,jia2023boundary,girelli2021semidual,Buerschaper2013a}.

\subsection{Weak Hopf pairing}

The main tool we shall use is the pairing between a weak Hopf algebra $H$ and its dual $\hat{H}$ (Eqs.~\eqref{eq:pair1}--\eqref{eq:pair5}). This construction can be further extended to any pair of weak Hopf algebras, as follows:

\begin{definition}[canonical pairing]
\label{def:pairingDef}
     A canonical pairing (also called (weak) Hopf pairing) is a function $\langle \bullet, \bullet \rangle: M\otimes N\to \mathbb{C}$ between two weak Hopf algebras $M,N$ is a bilinear map satisfying
     \begin{align}
       &  \langle hg,a\rangle = \sum_{(a)} \langle h,a^{(1)}\rangle \langle g,a^{(2)}\rangle,\\
    &     \langle h,ab\rangle =\sum_{(h)} \langle h^{(1)} ,a\rangle \langle h^{(2)},b\rangle,\\
     &    \langle 1_M,a\rangle =\varepsilon_N(a),\\
       &  \langle h,1_N\rangle =\varepsilon_M(h),\\
        & \langle S_M(h),a\rangle = \langle h, S_N(a)\rangle.
     \end{align}
\end{definition}

The pairing between $\hat{H}$ and $H$ is a particular instance of the canonical pairing defined above. Based on this, we introduce a generalized weak Hopf tensor network formulated via pairings, endowed with advantageous properties that make it suitable for solving the lattice model discussed below. 
While our primary emphasis is on the pairing between $\hat{H}$ and $H$, the extension to other pairings is straightforward.

We will draw elements $h\in H$ and $\psi \in \hat{H}$ and their pairing $\langle \psi, h\rangle=\psi(h)\in \Cbb$ as
\begin{equation}
\begin{aligned}
\begin{tikzpicture}
    \draw[black, line width=1.0pt] (0,0) -- (0,0.7);    
   \filldraw[green!30, draw=black] (0,0) circle (0.2);
    \node at (0, 0) {$\scriptstyle h$};
\end{tikzpicture}
\end{aligned}, \quad 
\begin{aligned}
\begin{tikzpicture}
    \draw[black, line width=1.0pt] (0,0) -- (0,-0.7);    
    \filldraw[red!30, draw=black] 
       (0,0) circle (0.2);
    \node at (0, 0) {$\scriptstyle \psi$};
\end{tikzpicture}
\end{aligned},\quad 
\begin{aligned}
\begin{tikzpicture}
    \draw[black, line width=1.0pt] (0,0) -- (0,0.7);    
    \filldraw[red!30, draw=black] 
       (0,0.7) circle (0.2);
    \node at (0, 0.7) {$\scriptstyle \psi$};
      \filldraw[green!30, draw=black] 
       (0,0) circle (0.2);
    \node at (0, 0) {$\scriptstyle h$};
\end{tikzpicture}
\end{aligned}.
\end{equation}
The diagram for $H$ is read upwards, while the diagram for $\hat{H}$ is read downwards.
The tensor product $h\otimes g$ (or $\psi\otimes \phi$) is represented by concatenating two diagrams and swapping of elements is represented as 
\begin{equation}
\tau (h\otimes g)=
\begin{aligned}
     \begin{tikzpicture}
    \draw[thick] (0, 2) .. controls (0, 1.5) and (1, 1.5) .. (1, 1);
    \draw[thick] (1, 2) .. controls (1, 1.5) and (0, 1.5) .. (0, 1);
    \filldraw[green!30, draw=black] 
       (0,0.8) circle (0.2);
    \node at (0, 0.8) {$\scriptstyle h$};
    \filldraw[green!30, draw=black] 
       (1,0.8) circle (0.2);
    \node at (1, 0.8) {$\scriptstyle g$};
\end{tikzpicture}  
\end{aligned},
\quad 
\tau (\psi\otimes \phi)=
\begin{aligned}
\begin{tikzpicture}
    \draw[thick] (0, 2) .. controls (0, 1.5) and (1, 1.5) .. (1, 1);
    \draw[thick] (1, 2) .. controls (1, 1.5) and (0, 1.5) .. (0, 1);
    \filldraw[red!30, draw=black] 
       (0,2.2) circle (0.2);
    \node at (0, 2.2) {$\scriptstyle \psi$}; 
    \filldraw[red!30, draw=black] 
        (1,2.2) circle (0.2);
    \node at (1, 2.2) {$\scriptstyle \phi$};
\end{tikzpicture}
\end{aligned},
\end{equation}
where $\tau = \tau^{-1}$ denotes the swapping map in $\mathsf{Vect}_{\Cbb}$, allowing us to avoid specifying upper or lower crossings.
The multiplication in $H$ and $\hat{H}$ are represented as 
\begin{equation}
  h\cdot g=  \begin{aligned}
			\begin{tikzpicture}
				 \draw[black, line width=1.0pt]  (-0.5, 0) .. controls (-0.4, 1) and (0.4, 1) .. (0.5, 0);
				 \draw[black, line width=1.0pt]  (0,0.75)--(0,1.15);
			 \filldraw[green!30, draw=black] 
       (-0.5,-0.2) circle (0.2);
    \node at (-0.5, -0.2) {$\scriptstyle h$};
 \filldraw[green!30, draw=black] 
       (0.5,-0.2) circle (0.2);
    \node at (0.5, -0.2) {$\scriptstyle g$};
				\end{tikzpicture}
			\end{aligned},\quad 
   \psi\cdot \phi=
   \begin{aligned}
				\begin{tikzpicture}
					\draw[black, line width=1.0pt]  (-0.5, 1) .. controls (-0.4, 0) and (0.4, 0) .. (0.5, 1);
					\draw[black, line width=1.0pt]  (0,0.23)--(0,-0.23);
					\filldraw[red!30, draw=black] 
        (-0.5,1.2) circle (0.2);
         \node at (-0.5, 1.2) {$\scriptstyle \psi$};
    \filldraw[red!30, draw=black] 
       (0.5,1.2) circle (0.2);
           \node at (0.5, 1.2) {$\scriptstyle \phi$};
				\end{tikzpicture}
			\end{aligned}.
\end{equation}
The comultiplication can be represented as
\begin{equation}
  \Delta(h)=   \begin{aligned}
				\begin{tikzpicture}
					\draw[black, line width=1.0pt]  (-0.5, 1) .. controls (-0.4, 0) and (0.4, 0) .. (0.5, 1);
					\draw[black, line width=1.0pt]  (0,0.23)--(0,-0.23);
				 \filldraw[green!30, draw=black] 
        (0,-0.2) circle (0.2);
         \node at (0, -.2) {$\scriptstyle h$};
                   \node at (-0.5, 1.3) {$\scriptstyle h^{\cone}$};
	                   \node at (0.5, 1.3) {$\scriptstyle h^{\ctwo}$};			
    \end{tikzpicture}
			\end{aligned},\quad 
   \Delta(\psi)=\begin{aligned}
			\begin{tikzpicture}
				 \draw[black, line width=1.0pt]  (-0.5, 0) .. controls (-0.4, 1) and (0.4, 1) .. (0.5, 0);
				 \draw[black, line width=1.0pt]  (0,0.75)--(0,1.15);
			 \filldraw[red!30, draw=black] 
        (0,1.2) circle (0.2);
         \node at (0, 1.2) {$\scriptstyle \psi$};
          \node at (-0.5, -.3) {$\scriptstyle \psi^{\cone}$};
              \node at (0.5, -.3) {$\scriptstyle \psi^{\ctwo}$};
				\end{tikzpicture}
			\end{aligned}.
\end{equation}
The counit $\varepsilon$ of $H$ is an element of $\hat{H}$ and it is the unit element: $1_{\hat{H}}=\varepsilon$. The unit $1_H\in H$ is the counit of $\hat{H}$ in the sense that $\hat{\varepsilon}(\psi)=\psi(1_{H})$. Both of them are removable under the respective multiplication map.
The antipodes of $H$ and $\hat{H}$ are represented as 
\begin{equation}
S(h)=
    \begin{aligned}
\begin{tikzpicture}
    \draw[black, line width=1.0pt] (0,0) -- (0,0.7);    
    \filldraw[green!30, draw=black] 
       (0,-0.2) circle (0.2);
    \node at (0, -0.2) {$\scriptstyle h$};
        \filldraw[black] (0, 0.35) circle (2pt);  
\end{tikzpicture}
\end{aligned}, \quad  \hat{S}(\psi)=
\begin{aligned}
\begin{tikzpicture}
    \draw[black, line width=1.0pt] (0,0) -- (0,0.7); 
      \filldraw[black] (0, 0.35) circle (2pt);  
    \filldraw[red!30, draw=black] 
      (0,0.9) circle (0.2);
    \node at (0, 0.9) {$\scriptstyle \psi$};
\end{tikzpicture}
\end{aligned}.
\end{equation}
For weak Hopf algebra, $S^{-1}$ is usually not the same as $S$, we can similarly denote $S^{-1}$ as a 
\begin{equation}
S^{-1}(h)=
    \begin{aligned}
\begin{tikzpicture}
    \draw[black, line width=1.0pt] (0,0) -- (0,0.7);    
    \filldraw[green!30, draw=black] 
       (0,-0.2) circle (0.2);
    \node at (0, -0.2) {$\scriptstyle h$};
        \filldraw[white, draw=black] (0, 0.35) circle (2pt);  
\end{tikzpicture}
\end{aligned}\,\,, \quad  \hat{S}^{-1}(\psi)=
\begin{aligned}
\begin{tikzpicture}
    \draw[black, line width=1.0pt] (0,0) -- (0,0.7); 
      \filldraw[white, draw=black] (0, 0.35) circle (2pt);  
    \filldraw[red!30, draw=black] 
      (0,0.9) circle (0.2);
    \node at (0, 0.9) {$\scriptstyle \psi$};
\end{tikzpicture}
\end{aligned},\quad 
\begin{aligned}
\begin{tikzpicture}
    \draw[black, line width=1.0pt] (0,0) -- (0,1); 
    \filldraw[white, draw=black] (0, 0.35) circle (2pt);  
    \filldraw[black] (0, 0.65) circle (2pt);  
\end{tikzpicture}
\end{aligned}=
\begin{aligned}
\begin{tikzpicture}
    \draw[black, line width=1.0pt] (0,0) -- (0,1); 
\end{tikzpicture}
\end{aligned}=
\begin{aligned}
\begin{tikzpicture}
    \draw[black, line width=1.0pt] (0,0) -- (0,1); 
    \filldraw[black] (0, 0.35) circle (2pt);  
    \filldraw[white, draw=black] (0, 0.65) circle (2pt);  
\end{tikzpicture}
\end{aligned}\,\, .
\end{equation}
For a weak Kac algebra (e.g., a semisimple Hopf algebra), one has $S^{-1} = S$, so it is unnecessary to distinguish between them.

The pairing diagram introduced above has the convenient property of satisfying the pairing conditions (Eqs.~\eqref{eq:pair1}–\eqref{eq:pair5}). Eq.~\eqref{eq:pair1} can be represented as 
\begin{equation}
     \begin{aligned}
				\begin{tikzpicture}
					\draw[black, line width=1.0pt]  (-0.5, 1) .. controls (-0.4, 0) and (0.4, 0) .. (0.5, 1);
					\draw[black, line width=1.0pt]  (0,0.23)--(0,-0.23);
				 \filldraw[green!30, draw=black] 
         (0,-.2) circle (0.2);
         \node at (0, -.2) {$\scriptstyle x$};
        \filldraw[red!30, draw=black] 
         (-0.5,1.2) circle (0.2);
         \node at (-0.5, 1.2) {$\scriptstyle \varphi$};
    \filldraw[red!30, draw=black] 
        (0.5,1.2) circle (0.2);
           \node at (0.5, 1.2) {$\scriptstyle \psi$};	
    \end{tikzpicture}
	\end{aligned}.
\end{equation}
When read from top to bottom, this represents the left-hand side of Eq.~\eqref{eq:pair1}, and when read from bottom to top, it represents the right-hand side of Eq.~\eqref{eq:pair1}. Similarly, Eqs.~\eqref{eq:pairing-psi} and \eqref{eq:pair5} can be represented as
\begin{equation}
    \begin{aligned}
			\begin{tikzpicture}
				 \draw[black, line width=1.0pt]  (-0.5, 0) .. controls (-0.4, 1) and (0.4, 1) .. (0.5, 0);
				 \draw[black, line width=1.0pt]  (0,0.75)--(0,1.15);
			 \filldraw[red!30, draw=black] 
                (0,1.2) circle (0.2);
         \node at (0, 1.2) {$\scriptstyle \varphi$};
          \filldraw[green!30, draw=black] 
                (-0.5,-.2) circle (0.2);
    \node at (-0.5, -0.2) {$\scriptstyle x$};
 \filldraw[green!30, draw=black] 
                (0.5,-.2) circle (0.2);
    \node at (0.5, -0.2) {$\scriptstyle y$};
				\end{tikzpicture}
			\end{aligned}, \quad \begin{aligned}
\begin{tikzpicture}
    \draw[black, line width=1.0pt] (0,0) -- (0,0.7);    
    \filldraw[red!30, draw=black] 
                      (0,.9) circle (0.2);
    \node at (0, 0.9) {$\scriptstyle \varphi$};
      \filldraw[green!30, draw=black] 
                       (0,-.2) circle (0.2);
    \node at (0, -0.2) {$\scriptstyle x$};
       \filldraw[black] (0, 0.35) circle (2pt);
\end{tikzpicture}
\end{aligned},
\quad \begin{aligned}
\begin{tikzpicture}
    \draw[black, line width=1.0pt] (0,0) -- (0,0.7);    
    \filldraw[red!30, draw=black] 
                      (0,.9) circle (0.2);
    \node at (0, 0.9) {$\scriptstyle \varphi$};
      \filldraw[green!30, draw=black] 
                       (0,-.2) circle (0.2);
    \node at (0, -0.2) {$\scriptstyle x$};
       \filldraw[white, draw= black] (0, 0.35) circle (2pt);
\end{tikzpicture}
\end{aligned}.
\end{equation}
Notice that, from Eq.~\eqref{eq:pair5}, we can derive $\langle \hat{S}^{-1}(\varphi), x \rangle = \langle \varphi, S^{-1}(x) \rangle$.

Using the pairing, one can construct a general weak Hopf tensor network by first applying comultiplication to elements, then multiplying certain components, and finally implementing the pairing.

The tensor network representation presented here is more compact and closely resembles a string diagram compared to that in Ref.~\cite{jia2024generalized}, since we do not express every structure explicitly in terms of tensors. While this is convenient for calculations, the entanglement pattern cannot be read out directly as in the usual tensor network formalism. To obtain a representation that encodes entanglement features, one needs to employ a basis expansion and express all structures in terms of that basis. A detailed discussion of this point will be provided shortly. This representation can be effectively employed to solve the cluster ladder model (see Sections~\ref{sec:latticeI} and~\ref{sec:latticeII} for details).  
It is also worth noting that our tensor network representation differs from that in Ref.~\cite{molnar2022matrix}, where a matrix-product operator representation of the weak Hopf algebra is introduced.

\subsection{Weak Hopf tensor network states and operators}

We now provide further details on weak Hopf tensor networks and their one-dimensional counterparts, namely weak Hopf matrix-product states and matrix-product operators constructed from weak Hopf structures.

Consider weak Hopf algebra $H$ with basis $\{v_i\}$ (for dual weak Hopf algebra, we will fix the dual basis $\{\widehat{v}_j$\} with $\widehat{v}_j(v_i)=\delta_{ij}$). We can represent a element $x =\sum_i x_i v_i\in H$ as a tensor:
\begin{equation}
  x= \sum_i x_iv_i 
  =   \begin{aligned}
\begin{tikzpicture}
    \draw[black, line width=1.0pt] (0,0) -- (0,0.7);    
   \filldraw[green!30, draw=black] (0,0) circle (0.2);
    \node at (0, 0) {$\scriptstyle x$};
\end{tikzpicture}
\end{aligned}.
\end{equation}
This means that we assume the legs represent the degrees of freedom in the basis $\{v_i\}$.
In the given basis, we can represent comultiplication as a structure matrix
\begin{equation}
    \Delta(v_i)= \sum_{j,k} C^{jk}_i v_j \otimes v_k,
\end{equation}
thus $\Delta(x)=\sum_{i}x_iC_{i}^{jk}v_j \otimes v_k$.
Similarly, coassociativity guarantees that we can introduce
\begin{equation}
    \Delta_{n-1}(v_i)=\sum_{j,k,\cdots, l}C_i^{jk\cdots l} v_j \otimes v_k \otimes \cdots \otimes v_{l},
\end{equation}
the corresponding structure tensor will have $n$ legs
\begin{equation}
C_i^{jk\cdots l}
=  \begin{aligned}
        \begin{tikzpicture}
\tikzset{
  tensor/.style={
    draw=black,
    fill=gray!30,
    regular polygon,
    regular polygon sides=3,
    shape border rotate=180,
    inner sep=0pt,
    minimum size=0.65cm
  }
}
    \tikzstyle{leg} = [line width=1.0pt]
    \node[tensor] (T) at (0,0) {$C$};
   \draw[leg, color=black] (T.north) ++(-0.3,0) -- ++(0,0.4);
      \draw[leg, color=black] (T.north) ++(0.3,0) -- ++(0,0.4);
        \draw[leg, color=black] (T.south)   -- ++(0,-0.4);
    \node at ($(T.north) + (0,0.2)$) {\dots};
\end{tikzpicture}
    \end{aligned}.
\end{equation}
We adopt the convention that the tensor network is oriented upwards.
When applying the comultiplication $n-1$ times, $\Delta_{n-1}(x) = \sum_{(x)} \, x^{\cone} \otimes \cdots \otimes x^{(n)}$ can be represented as
\begin{equation}\label{eq:nx}
\Delta_{n-1}(x)
=  \begin{aligned}
        \begin{tikzpicture}
\draw[black, line width=1.0pt]  (-0.5, 1) .. controls (-0.4, 0) and (0.4, 0) .. (0.5, 1);
					\draw[black, line width=1.0pt]  (0,0.23)--(0,-0.23);
                   	\draw[black, line width=1.0pt]  (0,0.23)--(0,1);
				 \filldraw[green!30, draw=black] 
        (0,-0.2) circle (0.2);
         \node at (0, -.2) {$\scriptstyle x$};
                   \node at (-0.5, 1.3) {$\scriptstyle x^{\cone}$};
                    \node at (0, 1.3) {$\scriptstyle \cdots$};	
                    \node at (-0.2, .6) {$\scriptstyle \cdots$};	
                                        \node at (.2, .6) {$\scriptstyle \cdots$};	
	                   \node at (0.5, 1.3) {$\scriptstyle x^{(n)}$};		
\end{tikzpicture}
    \end{aligned}= \begin{aligned}
        \begin{tikzpicture}
\tikzset{
  tensor/.style={
    draw=black,
    fill=gray!30,
    regular polygon,
    regular polygon sides=3,
    shape border rotate=180,
    inner sep=0pt,
    minimum size=0.65cm
  }
}
    \tikzstyle{leg} = [line width=1.0pt]
    \node[tensor] (T) at (0,0) {$C$};
   \draw[leg, color=black] (T.north) ++(-0.3,0) -- ++(0,0.4);
      \draw[leg, color=black] (T.north) ++(0.3,0) -- ++(0,0.4);
        \draw[leg, color=black] (T.south)   -- ++(0,-0.4);
    \node at ($(T.north) + (0,0.2)$) {\dots};
     \filldraw[green!30, draw=black] 
        (0,-1) circle (0.2);
    \node at (0, -1) {$\scriptstyle x$};
\end{tikzpicture}
    \end{aligned}.
\end{equation}
The legs represent components $x^{(1)}, \cdots, x^{(n)}$ in the comultiplication, thus the order of these legs are crucial. 

The comultiplication generates entanglement. For instance, consider the group algebra $\Cbb[G]$ and the element 
$|x\rangle = \frac{1}{\sqrt{|G|}} \sum_{g \in G} |g\rangle$, which can be regarded as a state. After applying the comultiplication, we obtain a generalized Greenberger–Horne–Zeilinger (GHZ) state:
\begin{equation}
    \Delta_{n-1}(|x\rangle) = \frac{1}{\sqrt{|G|}} \sum_{g \in G} |g\rangle \otimes \cdots \otimes |g\rangle,
\end{equation}
which is known to be a highly entangled state. The comultiplication for a group-valued qudit acts as a quantum cloning operation on the basis consisting of group elements.

We can also introduce the multiplication structure matrix
\begin{equation}
    v_i\cdot v_j=\sum_k A_{ij}^k v_k,
\end{equation}
with which we have $x\cdot y=\sum_{i,j,k}x_i y_j A_{ij}^k v_k$.
The associativity of multiplication allows us to introduce $A_{ij\cdots k}^l$ by $v_i v_j \cdots v_k=\sum_l A_{ij\cdots k}^lv_l$ without ambiguity, we draw it as
\begin{equation}
A^l_{ij\cdots k}
=  \begin{aligned}
        \begin{tikzpicture}
\tikzset{
  tensor/.style={
    draw=black,
    fill=gray!30,
    regular polygon,
    regular polygon sides=3,
    shape border rotate=0, 
    minimum size=0.6cm,   
    inner sep=0pt          
  }
}

    \tikzstyle{leg} = [line width=1.0pt]
    \node[tensor] (T) at (0,0) {$A$};
   \draw[leg, color=black] (T.south) ++(0.3,0) -- ++(0,-0.4);
      \draw[leg, color=black] (T.north)  -- ++(0,0.4);
        \draw[leg, color=black] (T.south) ++(-0.3,0) -- ++(0,-0.4);
    \node at ($(T.south) + (0,-0.2)$) {\dots};
\end{tikzpicture}
    \end{aligned}.
\end{equation}
The antipode matrix is defined by $S(v_i)=\sum_j S_{i}^j v_j$, which can be represented diagrammatically as
\begin{equation}
S^j_{i}
= \begin{aligned}
\begin{tikzpicture}
    \draw[black, line width=1.0pt] (0,0) -- (0,1); 
    \filldraw[black] (0, 0.5) circle (2pt);  
\end{tikzpicture}
\end{aligned}
\end{equation}
The unit is given by 
\(
1_H = \sum_i \iota_i v_i,
\) 
and the counit by 
\(
\varepsilon(v_j) = \epsilon_j.
\) 
Accordingly, they can both be represented as vectors:
\begin{equation}
  1_H=  \begin{aligned}
\begin{tikzpicture}
    \draw[black, line width=1.0pt] (0,0) -- (0,0.7);    
   \filldraw[green!30, draw=black] (0,0) circle (0.2);
    \node at (0, 0) {$\scriptstyle  1_H$};
\end{tikzpicture}
\end{aligned},\quad
\varepsilon =
\begin{aligned}
\begin{tikzpicture}
    \draw[black, line width=1.0pt] (0,0) -- (0,-0.7);    
    \filldraw[red!30, draw=black] 
       (0,0) circle (0.2);
    \node at (0, 0) {$\scriptstyle \varepsilon$};
\end{tikzpicture}
\end{aligned}.
\end{equation}

The axioms of a weak Hopf algebra can be reformulated in terms of these structure tensors as follows (where we adopt the Einstein summation convention):
\begin{itemize}
\item \emph{Algebra:} $1x = x1 = x$ as $\iota_i A_{ij}^{k} x_j = x_i A_{ij}^{k} \iota_j = x_k$; associativity as $A_{ij}^{k} A_{kl}^{p} = A_{jl}^{m} A_{im}^{p}$.

\item \emph{Coalgebra:} $\varepsilon(x^{\langle 1\rangle})x^{\langle 2\rangle} = x = x^{\langle 1\rangle} \varepsilon(x^{\langle 2\rangle})$ as $x_i C_i^{jk} \epsilon_j = x_k = x_i C_i^{kj} \epsilon_j$; coassociativity as $C_i^{jk} C_j^{st} = C_i^{jk} C_k^{st}$.


\item \emph{Compatibility of multiplication and comultiplication:} 
\(
C_k^{mn} A_{ij}^{k} = C_i^{ab} C_j^{cd} A_{ac}^{m} A_{bd}^{n}.
\)

\item \emph{Weak counit:} 
\(
\epsilon_z A_{ijk}^z 
= C_j^{ab} \; (\epsilon_s A_{i a}^{s}) \; (\epsilon_t A_{b k}^{t})
= C_j^{ab} \; (\epsilon_s A_{i b}^{s}) \; (\epsilon_t A_{a k}^{t}).
\)

\item \emph{Weak identity:} 
\(
\iota_p C_p^{mnq}
= (\iota_i C_i^{mb}) \, (\iota_j C_j^{cq}) \, A_{bc}^{n} 
= (\iota_k C_k^{sq}) \, (\iota_l C_l^{mp}) \, A_{sp}^{n}.
\)

\item \emph{Antipode axioms:}
\(
x_iC_i^{ab} S_a^r A_{rb}^{k} = \epsilon_s x_i A_{ia}^{s} (\iota_p C_p^{ka})\), 
\(x_iC_i^{ab} A_{ar}^{k} S_b^r = (\iota_p C_p^{bk}) \epsilon_s A_{bi}^{s}\),
\(
x_i S_i^j = C_i^{abc} S_a^s S_c^r A_{sbr}^{j}.
\)
\end{itemize}
These axioms are particularly useful when performing calculations with weak Hopf tensor networks.

The generalized Pauli-$X$ operators can be represented as
\begin{equation}
    \XR_{g}=\begin{aligned}
        \begin{tikzpicture}
    \tikzstyle{tensor} = [draw=black, fill=yellow!30, rounded corners, minimum size=0.8cm]
    \tikzstyle{leg} = [line width=1.5pt]
    \node[tensor] (T) at (0,0) {$\XR_g$};
   \draw[leg, color=blue!70!black] (T.north) -- ++(0,0.4);
    \draw[leg, color=blue!70!black] (T.south) -- ++(0,-0.4);
\end{tikzpicture}
    \end{aligned}:=
    \begin{aligned}
        \begin{tikzpicture}
           \draw[fill=gray!30]  (0,0) -- (1,0) -- (0.5,0.5) -- cycle;
           \draw[line width=.6pt,black] (0.2,0)--(0.2,-0.6);
           \draw[line width=.6pt,black] (0.8,0)--(0.8,-1);
           \draw[line width=.6pt,black] (0.5,0.5)--(0.5,0.9);
           \draw[black,fill=green!30] (0.2,-.6) circle (0.2);   
            \node[ line width=0.6pt, draw opacity=0.5] (a) at (0.2,-0.6){$\scriptstyle g$};
                               \node at (0.5, 0.2) {$\scriptstyle A$};
        \end{tikzpicture}
    \end{aligned}  ,\quad 
   \XL_g= 
    \begin{aligned}
     \begin{tikzpicture}
    \tikzstyle{tensor} = [draw=black, fill=yellow!30, rounded corners, minimum size=0.8cm]
    \tikzstyle{leg} = [line width=1.5pt]
    \node[tensor] (T) at (0,0) {$\XL_g$};
   \draw[leg, color=blue!70!black] (T.north) -- ++(0,0.4);
    \draw[leg, color=blue!70!black] (T.south) -- ++(0,-0.4);
\end{tikzpicture}
    \end{aligned}:=
    \begin{aligned}
        \begin{tikzpicture}
           \draw[fill=gray!30]  (0,0) -- (1,0) -- (0.5,0.5) -- cycle;
           \draw[line width=.6pt,black] (0.2,0)--(0.2,-1);
           \draw[line width=.6pt,black] (0.8,0)--(0.8,-.6);
           \draw[line width=.6pt,black] (0.5,0.5)--(0.5,.9);
            \filldraw[white, draw= black] (0.8, -0.2) circle (2pt);
                   \draw[black,fill=green!30] (0.8,-.6) circle (0.2);  
            \node[ line width=0.6pt, draw opacity=0.5] (a) at (0.8,-.6){$\scriptstyle g$};
               \node at (0.5, 0.2) {$\scriptstyle A$};
        \end{tikzpicture}
    \end{aligned}
\end{equation}
where we use blue thick legs to emphasize the weak Hopf input and output qudits.
Using the comultiplication, we can construct symmetry MPOs (which, as we will see later, serve as the symmetry operators for the weak Hopf cluster state model):
\begin{align}
 W_g
=
   \begin{aligned}
\begin{tikzpicture}
    \tikzstyle{tensor} = [draw=black, fill=yellow!30, rounded corners, minimum size=0.8cm]
    \tikzstyle{leg} = [line width=1.5pt]
    \node[tensor] (T1) at (0,0) {$\XR_{g^{\cone},i_1}$};
    \draw[leg, color=blue!70!black] (T1.north) -- ++(0,0.4);
    \draw[leg, color=blue!70!black] (T1.south) -- ++(0,-0.4);
    \draw[leg, color=cyan!30!white] (T1.west) -- ++(-0.45,0);
    \node at ($(T1.east) + (2,0.2)$) {$\cdots$};
    \node[tensor] (T2) at ($(T1.east) + (0.9,0)$) {$\XR_{g^{\ctwo},i_2}$};
    \draw[leg, color=blue!70!black] (T2.north) -- ++(0,0.4);
    \draw[leg, color=blue!70!black] (T2.south) -- ++(0,-0.4);
    \draw[leg, color=cyan!30!white] (T2.west) --(T1.east);
    \draw[leg, color=cyan!30!white] (T2.east) -- ++(0.2,0) coordinate (conn3);
    \node[tensor] (T3) at ($(T2.east) + (1.5,0)$) {$\XR_{g^{(n)},i_n}$};
    \draw[leg, color=blue!70!black] (T3.north) -- ++(0,0.4);
    \draw[leg, color=blue!70!black] (T3.south) -- ++(0,-0.4);
    \draw[leg, color=cyan!30!white] (T3.west) -- ++(-0.2,0);
    \draw[leg, color=cyan!30!white] (T3.east) -- ++(0.4,0);
    \draw[leg, color=cyan!30!white] 
        (T1.west) ++(-0.43,0) -- ++(0,-0.6) -- ++(6.15,0) -- ++(0,0.6) ;
\end{tikzpicture}
    \end{aligned}:= 
\begin{aligned}
\begin{tikzpicture}
  \begin{scope}[xshift=0cm]
    \draw[fill=gray!30]  (0,0) -- (1,0) -- (0.5,0.5) -- cycle;
    \draw[line width=.6pt,black] (0.8,0)--(0.8,-1);
    \draw[line width=.6pt,black] (0.5,0.5)--(0.5,0.9);
    \node at (0.5, 0.2) {$\scriptstyle A$};
  \end{scope}
  \begin{scope}[xshift=1.5cm]
    \draw[fill=gray!30]  (0,0) -- (1,0) -- (0.5,0.5) -- cycle;
    \draw[line width=.6pt,black] (0.8,0)--(0.8,-1);
    \draw[line width=.6pt,black] (0.5,0.5)--(0.5,0.9);
    \node at (0.5, 0.2) {$\scriptstyle A$};
  \end{scope}
  \begin{scope}[xshift=3.5cm]
    \draw[fill=gray!30]  (0,0) -- (1,0) -- (0.5,0.5) -- cycle;
    \draw[line width=.6pt,black] (0.8,0)--(0.8,-1);
    \draw[line width=.6pt,black] (0.5,0.5)--(0.5,0.9);
    \node at (0.5, 0.2) {$\scriptstyle A$};
  \end{scope}
    \begin{scope}[yshift=-1cm,xshift=3.4cm]
\tikzset{
  tensor/.style={
    draw=black,
    fill=gray!30,
    regular polygon,
    regular polygon sides=3,
    shape border rotate=180,
    inner sep=0pt,
    minimum size=0.85cm,
  }
}
    \tikzstyle{leg} = [line width=.6pt]
    \node[tensor] (T) at (0,0) {$C$};
   \draw[leg, color=black] (T.north) ++(-0.3,0) -- ++(0,0.2) -- ++ (-3,0) -- ++ (0,.59);
   \draw[leg, color=black] (T.north) ++(-0.1,0) -- ++(0,0.4) -- ++ (-1.5,0) -- ++ (0,.39);
      \draw[leg, color=black] (T.north) ++(0.3,0) -- ++(0,0.77);
        \draw[leg, color=black] (T.south)   -- ++(0,-0.4);
    \node at ($(T.north) + (0.13,0.2)$) {$\scriptstyle  \cdots$};
     \filldraw[green!30, draw=black] 
        (0,-0.8) circle (0.2);
    \node at (0, -.8) {$\scriptstyle g$};
\end{scope}
  \node at (3, .2) {$\cdots$};
\end{tikzpicture}
\end{aligned}
\end{align}
where light cyan legs represents comultiplication $\Delta_{n-1}(g)=\sum_{k_1\cdots k_n} C_j^{k_1\cdots k_n}g_j v_{k_1}\otimes \cdots \otimes v_{k_n}$.  It is clear that $W_g W_h = W_{gh}$. In the MPO representation, this multiplication corresponds to the vertical gluing of MPOs for $W_g$ and $W_h$. Note also that $W_{1_H} \neq I$, since for a general weak Hopf algebra we have $\Delta(1_H) \neq 1_H \otimes 1_H$\footnote{In many situations, this serves as an indicator of anomaly, since in the anomaly-free case the corresponding symmetry is a Hopf symmetry. Indeed, for a fusion category $\eC$ admitting a fiber functor $F:\eC \to \Vect$, there always exists a Hopf algebra $H$ such that $\eC \simeq \Rep(H)$, in which case $W_{1_H} = I$. A weak Hopf symmetry is a Hopf symmetry if and only if $\Delta(1_H) = 1_H \otimes 1_H$.}
In the group case, the operator $W_g$ becomes on-site, and the cyan legs disappear.

For given representation $\Gamma: H\to \End(\mathcal{V})$, the generalized Pauli-$Z$ operators can be represented as
\begin{equation}
    Z_{\Gamma}=\begin{aligned}
        \begin{tikzpicture}
    \tikzstyle{tensor} = [draw=black, fill=yellow!30, rounded corners, minimum size=0.8cm]
    \tikzstyle{leg} = [line width=1.5pt]

    \node[tensor] (T) at (0,0) {$Z_{\Gamma}$};

    \draw[leg, color=blue!70!black] (T.north) -- ++(0,0.4);
    \draw[leg, color=blue!70!black] (T.south) -- ++(0,-0.4);

    \draw[leg, color=lightgray] (T.west) -- ++(-0.4,0);
    \draw[leg, color=lightgray] (T.east) -- ++(0.4,0);
\end{tikzpicture}
    \end{aligned}
    =\begin{aligned}
        \begin{tikzpicture}
           \draw[fill=gray!30]  (0,0) -- (1,0) -- (0.5,-0.5) -- cycle;
           \draw[line width=.6pt,black] (0.2,0)--(0.2,1);
           \draw[line width=.6pt,black] (0.8,0)--(0.8,0.6);
           \draw[line width=.6pt,black] (0.5,-0.5)--(0.5,-1.1);
          \draw[black,fill=gray!30] (0.6,.4) rectangle ++(0.4,0.4);   
           \draw[line width=.6pt,black,gray] (1,0.6)--(1.4,0.6);
            \draw[line width=.6pt,black,gray] (0.6,0.6)--(0.23,0.6);
            \draw[line width=.6pt,black,gray] (0,0.6)--(0.17,0.6);
            \node[ line width=0.6pt, dashed, draw opacity=0.5] (a) at (0.8,0.6){$\Gamma$};
             \node at (0.5, -0.2) {$ C$};
        \end{tikzpicture}
    \end{aligned},\quad 
    Z_{\Gamma}^{\ddagger}= 
    \begin{aligned}
              \begin{tikzpicture}
    \tikzstyle{tensor} = [draw=black, fill=yellow!30, rounded corners, minimum size=0.8cm]
    \tikzstyle{leg} = [line width=1.5pt]

    \node[tensor] (T) at (0,0) {$Z_{\Gamma}^{\ddagger}$};

    \draw[leg, color=blue!70!black] (T.north) -- ++(0,0.4);
    \draw[leg, color=blue!70!black] (T.south) -- ++(0,-0.4);
    \draw[leg, color=lightgray] (T.west) -- ++(-0.4,0);
    \draw[leg, color=lightgray] (T.east) -- ++(0.4,0);
\end{tikzpicture}
    \end{aligned}
    =
    \begin{aligned}
        \begin{tikzpicture}
           \draw[fill=gray!30]  (0,0) -- (1,0) -- (0.5,-0.5) -- cycle;
           \draw[line width=.6pt,black] (0.2,0)--(0.2,0.6);
           \draw[line width=.6pt,black] (0.8,0)--(0.8,1);
           \draw[line width=.6pt,black] (0.5,-0.5)--(0.5,-1.1);
            \draw[black,fill=gray!30] (0,.4) rectangle ++(0.4,0.4);   
           \draw[line width=.6pt,black,gray] (0.4,.6)--(.77,.6);
                      \draw[line width=.6pt,black,gray] (0.92,.6)--(.83,.6);
            \draw[line width=.6pt,black,gray] (0,.6)--(-.4,.6);
            \filldraw[black] (0.2, 0.22) circle (2pt);  
            \node[ line width=0.6pt, dashed, draw opacity=0.5] (a) at (0.2,0.6){$\Gamma$};
            \node at (0.5, -0.2) {$ C$};
        \end{tikzpicture}
    \end{aligned}
\end{equation}
where blue legs represent weak Hopf qudits, and gray legs represent the virtual  representation space degrees of freedom.
The operator $\ZR_{\chi_{\Gamma}}=\Tr' Z_{\Gamma}$ and $\ZL_{\chi_{\Gamma}}=\Tr' Z_{\Gamma}^{\ddagger}$ can be represented as\footnote{In this work, we use $\Tr'$ to represent trace over representation space.}
\begin{equation}
    \ZR_{\chi_{\Gamma}}=\begin{aligned}
      \begin{tikzpicture}
    \tikzstyle{tensor} = [draw=black, fill=yellow!30, rounded corners, minimum size=0.8cm]
    \tikzstyle{leg} = [line width=1.5pt]

    \node[tensor] (T) at (0,0) {$Z_{\Gamma}$};

    \draw[leg, color=blue!70!black] (T.north) -- ++(0,0.4);
    \draw[leg, color=blue!70!black] (T.south) -- ++(0,-0.4);

    \draw[leg, color=lightgray] (T.west) -- ++(-0.455,0);
    \draw[leg, color=lightgray] (T.east) -- ++(0.4,0);

    \draw[leg, color=lightgray] 
        (T.west) ++(-0.43,0) -- ++(0,-0.6) -- ++(1.63,0) -- ++(0,0.6) ;
\end{tikzpicture}
    \end{aligned},
    \quad 
    \ZL_{\chi_{\Gamma}}= 
    \begin{aligned}
        \begin{tikzpicture}
    \tikzstyle{tensor} = [draw=black, fill=yellow!30, rounded corners, minimum size=0.8cm]
    \tikzstyle{leg} = [line width=1.5pt]

    \node[tensor] (T) at (0,0) {$Z_{\Gamma}^{\ddagger}$};

    \draw[leg, color=blue!70!black] (T.north) -- ++(0,0.4);
    \draw[leg, color=blue!70!black] (T.south) -- ++(0,-0.4);

    \draw[leg, color=lightgray] (T.west) -- ++(-0.455,0);
    \draw[leg, color=lightgray] (T.east) -- ++(0.4,0);

    \draw[leg, color=lightgray] 
        (T.west) ++(-0.43,0) -- ++(0,-0.6) -- ++(1.63,0) -- ++(0,0.6) ;
\end{tikzpicture}
    \end{aligned}
\end{equation}
If we treat $\Gamma_{\alpha\beta}$ as a function from weak Hopf algebra $H$ to $\mathbb{C}$, then $\ZR_{\Gamma_{\alpha\beta}}$ and $\ZL_{\Gamma_{\alpha\beta}}$ are well-defined and can be represented as
\begin{equation}
    \ZR_{\Gamma_{\alpha\beta}} =
    \begin{aligned}
\begin{tikzpicture}
    \tikzstyle{tensor} = [draw=black, fill=yellow!30, rounded corners, minimum size=0.8cm]
    \tikzstyle{leg} = [line width=1.5pt]
    \tikzstyle{endpoint} = [circle, fill=lightgray, minimum size=0.2cm, inner sep=0pt]

    \node[tensor] (T) at (0,0) {$Z_{\Gamma}$};

    \draw[leg, color=blue!70!black] (T.north) -- ++(0,0.4);
    \draw[leg, color=blue!70!black] (T.south) -- ++(0,-0.4);

    \draw[leg, color=lightgray] (T.west) -- ++(-0.4,0);
    \draw[leg, color=lightgray] (T.east) -- ++(0.4,0);

    \node[endpoint] (dotL) at ($(T.west) + (-0.4,0)$) {};
    \node[endpoint] (dotR) at ($(T.east) + (0.4,0)$) {};

    \node[below=0.1cm of dotL] {\scriptsize $\alpha$};
    \node[below=0.1cm of dotR] {\scriptsize $\beta$};
\end{tikzpicture}
    \end{aligned},
    \quad 
    \ZL_{\Gamma_{\alpha\beta}} =
    \begin{aligned}
\begin{tikzpicture}
    \tikzstyle{tensor} = [draw=black, fill=yellow!30, rounded corners, minimum size=0.8cm]
    \tikzstyle{leg} = [line width=1.5pt]
    \tikzstyle{endpoint} = [circle, fill=lightgray, minimum size=0.2cm, inner sep=0pt]

    \node[tensor] (T) at (0,0) {$Z_{\Gamma}^{\ddagger}$};

    \draw[leg, color=blue!70!black] (T.north) -- ++(0,0.4);
    \draw[leg, color=blue!70!black] (T.south) -- ++(0,-0.4);

    \draw[leg, color=lightgray] (T.west) -- ++(-0.4,0);
    \draw[leg, color=lightgray] (T.east) -- ++(0.4,0);

    \node[endpoint] (dotL) at ($(T.west) + (-0.4,0)$) {};
    \node[endpoint] (dotR) at ($(T.east) + (0.4,0)$) {};

    \node[below=0.1cm of dotL] {\scriptsize $\alpha$};
    \node[below=0.1cm of dotR] {\scriptsize $\beta$};
\end{tikzpicture}
    \end{aligned},
\end{equation}
where gray dots represent the fixed leg labels.

An open $Z_{\Gamma}$-string MPO that acts on qudits labeled by $i_1,\cdots,i_n$ can be represented as
\begin{align}
W_{\Gamma} = Z_{\Gamma,i_1} \ast  Z_{\Gamma,i_2} \ast \cdots \ast  Z_{\Gamma,i_n} 
    =
 \begin{aligned}
\begin{tikzpicture}
    \tikzstyle{tensor} = [draw=black, fill=yellow!30, rounded corners, minimum size=0.8cm]
    \tikzstyle{leg} = [line width=1.5pt]
    \node[tensor] (T1) at (0,0) {$Z_{\Gamma,i_1}$};
    \draw[leg, color=blue!70!black] (T1.north) -- ++(0,0.4);
    \draw[leg, color=blue!70!black] (T1.south) -- ++(0,-0.4);
    \draw[leg, color=lightgray] (T1.west) -- ++(-0.4,0);
    \draw[leg, color=lightgray] (T1.east) -- ++(0.4,0) coordinate (conn1);
    \node at ($(T1.east) + (1.95,0)$) {\dots};
    \node[tensor] (T2) at ($(T1.east) + (0.9,0)$) {$Z_{\Gamma,i_2}$};
    \draw[leg, color=blue!70!black] (T2.north) -- ++(0,0.4);
    \draw[leg, color=blue!70!black] (T2.south) -- ++(0,-0.4);
    \draw[leg, color=lightgray] (T2.west) -- ++(-0.4,0) coordinate (conn2);
    \draw[leg, color=lightgray] (T2.east) -- ++(0.2,0) coordinate (conn3);
    \node[tensor] (T3) at ($(T2.east) + (1.5,0)$) {$Z_{\Gamma,i_n}$};
    \draw[leg, color=blue!70!black] (T3.north) -- ++(0,0.4);
    \draw[leg, color=blue!70!black] (T3.south) -- ++(0,-0.4);
    \draw[leg, color=lightgray] (T3.west) -- ++(-0.2,0);
    \draw[leg, color=lightgray] (T3.east) -- ++(0.4,0);
\end{tikzpicture}
    \end{aligned}.
\end{align}
A closed  $Z_{\Gamma}$-string MPO  can be obtained by taking trace over the representation space:
\begin{align}
W_{\Gamma} =  \Tr'  Z_{\Gamma,i_1} \ast  Z_{\Gamma,i_2} \ast \cdots \ast  Z_{\Gamma,i_n} 
=
   \begin{aligned}
\begin{tikzpicture}
    \tikzstyle{tensor} = [draw=black, fill=yellow!30, rounded corners, minimum size=0.8cm]
    \tikzstyle{leg} = [line width=1.5pt]
    \node[tensor] (T1) at (0,0) {$Z_{\Gamma,i_1}$};
    \draw[leg, color=blue!70!black] (T1.north) -- ++(0,0.4);
    \draw[leg, color=blue!70!black] (T1.south) -- ++(0,-0.4);
    \draw[leg, color=lightgray] (T1.west) -- ++(-0.45,0);
    \draw[leg, color=lightgray] (T1.east) -- ++(0.4,0) coordinate (conn1);
    \node at ($(T1.east) + (1.92,0)$) {\dots};
    \node[tensor] (T2) at ($(T1.east) + (0.9,0)$) {$Z_{\Gamma,i_2}$};
    \draw[leg, color=blue!70!black] (T2.north) -- ++(0,0.4);
    \draw[leg, color=blue!70!black] (T2.south) -- ++(0,-0.4);
    \draw[leg, color=lightgray] (T2.west) -- ++(-0.4,0) coordinate (conn2);
    \draw[leg, color=lightgray] (T2.east) -- ++(0.2,0) coordinate (conn3);
    \node[tensor] (T3) at ($(T2.east) + (1.5,0)$) {$Z_{\Gamma,i_n}$};
    \draw[leg, color=blue!70!black] (T3.north) -- ++(0,0.4);
    \draw[leg, color=blue!70!black] (T3.south) -- ++(0,-0.4);
    \draw[leg, color=lightgray] (T3.west) -- ++(-0.2,0);
    \draw[leg, color=lightgray] (T3.east) -- ++(0.4,0);
    \draw[leg, color=lightgray] 
        (T1.west) ++(-0.42,0) -- ++(0,-0.6) -- ++(5.33,0) -- ++(0,0.6) ;
\end{tikzpicture}
    \end{aligned}.
\end{align}
A crucial property that we will use later is that $\chi_{\Gamma}$ belongs to the dual space $\hat{H}$. Applying comultiplication, we have
\begin{equation}
    \hat{\Delta}_{n-1}(\chi_{\Gamma}) = \sum_{(\chi_{\Gamma})} \chi_{\Gamma}^{(1)} \otimes \cdots \otimes \chi_{\Gamma}^{(n)},
\end{equation}
The closed $Z_{\Gamma}$-string operator is identical to the string operator derived from the comultiplication of the character:
\begin{equation}
    \Tr' Z_{\Gamma, i_1} \ast Z_{\Gamma, i_2} \ast \cdots \ast Z_{\Gamma, i_n} = \sum_{(\chi_{\Gamma})} \ZR_{\chi_{\Gamma}^{(1)},i_1} \otimes \ZR_{\chi_{\Gamma}^{(2)},i_2} 
\otimes \cdots \otimes \ZR_{\chi_{\Gamma}^{(n)},i_n}.
\end{equation}
This equality follows from the fact that $\hat{\Delta}_n(\chi_{\Gamma})$ acting on $g_1 \otimes \cdots \otimes g_n$ yields $\chi_{\Gamma}(g_1 \cdots g_n)$.
This property is fundamental for constructing the symmetry operators of the cluster ladder model and analyzing its properties. As we will see, the dual weak Hopf symmetries of the weak Hopf cluster ladder model necessarily rely on the comultiplication structure.

The $W_{\Gamma}$ operator satisfy the fusion rule of $\Rep(H)$ in the sense that 
\begin{equation}\label{eq:Wgamma}
    W_{\Gamma}W_{\Phi} =\sum_{\Psi\in \Irr(\Rep(H))} N_{\Gamma \Phi}^{\Psi} W_{\Psi}.
\end{equation}
In fact, for $\Gamma,\Phi\in \Rep(H)$, $ Z_{\Gamma}Z_{\Phi}$ can be represented as 
\begin{equation}
   \begin{aligned}
        \begin{tikzpicture}
           \draw[fill=gray!30]  (0,0) -- (1,0) -- (0.5,-0.5) -- cycle;
           \draw[line width=.6pt,black] (0.2,0)--(0.2,1);
           \draw[line width=.6pt,black] (0.8,0)--(0.8,0.6);
           \draw[line width=.6pt,black] (0.5,-0.5)--(0.5,-1.1);
          \draw[black,fill=gray!30] (0.6,.4) rectangle ++(0.4,0.4);   
           \draw[line width=.6pt,black,gray] (1,0.6)--(1.4,0.6);
            \draw[line width=.6pt,black,gray] (0.6,0.6)--(0.23,0.6);
            \draw[line width=.6pt,black,gray] (0,0.6)--(0.17,0.6);
         \draw[fill=gray!30]  (-0.3,1.5) -- (0.7,1.5) -- (0.2,1) -- cycle;
    \draw[line width=.6pt,black] (-0.1,1.5)--(-0.1,2.3);
           \draw[line width=.6pt,black] (0.5,1.5)--(0.5,1.9);
           \draw[black,fill=gray!30] (0.3,1.7) rectangle ++(0.4,0.4); 
          \draw[line width=.6pt,black,gray] (0.7,1.9)--(1.1,1.9);
          \draw[line width=.6pt,black,gray] (0.3,1.9)--(0,1.9);
            \node[ line width=0.6pt, dashed, draw opacity=0.5] (a) at (0.8,0.6){$\Phi$};
            \node[ line width=0.6pt, dashed, draw opacity=0.5] (a) at (0.5,1.9){$\Gamma$};
            \node at (0.5, -.18) {$ C$}; 
                        \node at (0.2, 1.28) {$ C$}; 
        \end{tikzpicture}
    \end{aligned}
    = \begin{aligned}
        \begin{tikzpicture}
           \draw[fill=gray!30]  (0,0) -- (1,0) -- (0.5,-0.5) -- cycle;
           \draw[line width=.6pt,black] (0.2,0)--(0.2,0.4);
           \draw[line width=.6pt,black] (0.8,0)--(0.8,0.6);
           \draw[line width=.6pt,black] (0.5,-0.5)--(0.5,-1.1);
          \draw[black,fill=gray!30] (0.6,.4) rectangle ++(0.4,0.4);   
           \draw[line width=.6pt,black,gray] (1,0.6)--(1.4,0.6);
            \draw[line width=.6pt,black,gray] (0.6,0.6)--(0.5,0.6);
         \draw[fill=gray!30]  (-0.3,-1.1) -- (0.7,-1.1) -- (0.2,-1.6) -- cycle;
         \draw[line width=.6pt,black] (-0.1,-1.1)--(-0.1,1);
           \draw[black,fill=gray!30] (0,.2) rectangle ++(0.4,0.4); 
          \draw[line width=.6pt,black,gray] (0.4,0.45)--(0.5,0.45);
        \draw[line width=.6pt,black,gray] (0,0.45)--(-0.05,0.45);
                \draw[line width=.6pt,black,gray] (-0.15,0.45)--(-0.35,0.45);
             \draw[line width=.6pt,black] (0.2,-1.6)--(0.2,-2.1);
            \node[ line width=0.6pt, dashed, draw opacity=0.5] (a) at (0.8,0.6){$\Phi$};
            \node[ line width=0.6pt, dashed, draw opacity=0.5] (a) at (0.2,0.4){$\Gamma$};
          \node at (0.5, -0.2) {$ C$};
                   \node at (0.2, -1.3) {$ C$}; 
        \end{tikzpicture}
    \end{aligned}
    =\begin{aligned}
        \begin{tikzpicture}
           \draw[line width=.6pt,black] (0.5,-0.5)--(0.5,-1.1); 
           \draw[line width=.6pt,black,gray] (0.2,-0.25)--(0,-0.25);
            \draw[line width=.6pt,black,gray] (2,-0.25)--(2.4,-0.25);
         \draw[fill=gray!30]  (-0.3,-1.1) -- (0.7,-1.1) -- (0.2,-1.6) -- cycle;
         \draw[line width=.6pt,black] (-0.1,-1.1)--(-0.1,0.5);
           \draw[black,fill=gray!30] (0.2,-.5) rectangle ++(1.8,0.6); 
             \draw[line width=.6pt,black] (0.2,-1.6)--(0.2,-2.1);
             \node at (0.23, -1.32) {$ C$}; 
              \node[ line width=0.6pt, dashed, draw opacity=0.5] (a) at (1,-0.2){$\oplus_{\Psi}N_{\Gamma\Phi}^{\Psi}\Psi$};
        \end{tikzpicture}
    \end{aligned},
\end{equation}
where we used $\sum_{(g)}\Gamma(g^{\cone})\otimes \Phi(g^{\ctwo})=\sum_{\Psi\in \Irr(\Rep(H))} N_{\Gamma \Phi}^{\Psi} \Psi(g)$.
This implies that 
\begin{equation}
    Z_{\Gamma}Z_{\Phi}=Z_{\Gamma \otimes \Phi}=\sum_{\Psi\in \Irr(\Rep(H))} N_{\Gamma \Phi}^{\Psi} Z_{\Psi},
\end{equation}
which in turn yields Eq.~\eqref{eq:Wgamma}. For closed string, Eq.~\eqref{eq:Wgamma} can also be derived using dual Hopf algebra $\hat{H}$ structure.

\section{Symmetry TFT for weak Hopf lattice gauge theory}
\label{sec:WHAsymTFT}

The Symmetry TFT, also known as topological holography, provides a general framework for understanding non-invertible or categorical symmetries in both gapped and gapless phases \cite{Kong2020algebraic, SchaferNameki2024ICTP, huang2023topologicalholo, bhardwaj2024lattice, freed2024topSymTFT, gaiotto2021orbifold, bhardwaj2023generalizedcharge, apruzzi2023symmetry, bhardwaj2024gappedphases, Zhang2024anomaly, Ji2020categoricalsym}. 
For an \( n \)-dimensional quantum field theory \(\mathsf{QFT}_n\) with fusion category symmetry \(\EuScript{S}\), the corresponding Symmetry TFT \(\mathcal{Z}(\EuScript{S})\) has the advantage of separating the topological data of the fusion category symmetry from the non-topological dynamics of the theory.
The symmetry \(\EuScript{S}\) is encoded in the symmetry boundary \(\EuScript{B}_{\text{sym}}^{\EuScript{S}}\), while the non-topological data resides on the physical boundary \(\EuScript{B}_{\text{phys}}\). Upon compactification of the \((n+1)\)-dimensional Symmetry TFT, we obtain an \(n\)-dimensional \(\mathsf{QFT}_n\) with non-invertible symmetry \(\EuScript{S}\).
Constructing a lattice model for SymTFT is a fundamental and intriguing problem. In the case of $(1+1)$D phases, the quantum double model provides a natural framework for addressing this task.

\subsection{Weak Hopf lattice gauge theory}

The weak Hopf lattice gauge theory, also referred to as the weak Hopf quantum double model, is comprehensively examined in Refs.~\cite{chang2014kitaev, Jia2023weak} as a lattice-based realization of the weak Hopf generalization of BF-theory or Dijkgraaf-Witten theory \cite{dijkgraaf1990topological}. For a weak Hopf algebra \( H \), the representation category \( \Rep(H) \) is typically a multifusion category, establishing an equivalence between the weak Hopf lattice gauge theory and the multifusion string-net model \cite{jia2024weakTube}, a lattice representation of the Turaev-Viro-Barrett-Westbury topological quantum field theory (TQFT).
Although the input data \( \Rep(H) \) forms a multifusion category, the associated topological excitations are described by a modular tensor category, \( \mathcal{Z}(\Rep(H)) \simeq \Rep(D(H)) \) as braided fusion category. Here, \( D(H) \) represents the quantum double of \( H \), while \( \mathcal{Z}(\Rep(H)) \) denotes the Drinfeld center of \( \Rep(H) \) \cite{jia2024weakTube}.

Since quantum double plays a crucial role in our construction, let us briefly review its definition here.
Consider a weak Hopf algebra \(H\). The tensor product \(\hat{H}^{\text{cop}} \otimes H\) is equipped with an algebra structure where multiplication is given by \cite{drinfel1988quantum,majid1990physics,majid1994some}
\begin{equation}
    (\varphi \otimes h) (\psi \otimes g) = \sum_{(h)} \sum_{(\psi)} \varphi \psi^{(2)} \otimes h^{(2)} g \langle \psi^{(1)}, S^{-1}(h^{(3)}) \rangle \langle \psi^{(3)}, h^{(1)} \rangle,
\end{equation}
where we use the pairing between $\hat{H}$ and $H$ and the comultiplication of \(\psi\) is considered within \(\hat{H}\).
It is useful to rewrite the multiplication using the so-called straightening relation:
\begin{equation}
   h \cdot \psi  = \psi^{\ctwo} \otimes  g^{\ctwo}  \langle \psi^{(1)}, S^{-1}(h^{(3)}) \rangle \langle \psi^{(3)}, h^{(1)} \rangle.
\end{equation}
The unit of this algebra is given by \(\varepsilon \otimes 1_H\). 
The linear span \(M\) of the elements
\begin{align}
    \varphi \otimes xh - \varphi (x\rightharpoonup \varepsilon) \otimes h, \quad x \in H_L, \\
    \varphi \otimes yh - \varphi (\varepsilon \leftharpoonup y) \otimes h, \quad y \in H_R,
\end{align}
constitutes a two-sided ideal of \(\hat{H}^{\text{cop}} \otimes H\). We denote the quotient algebra \((\hat{H}^{\text{cop}} \otimes H)/M\) as \(D(H)\), with equivalence classes in \(D(H)\) represented by \([\varphi \otimes h]\) for \(\varphi \otimes h \in \hat{H}^{\text{cop}} \otimes H\).

\begin{definition}[Quantum double] \label{def:quantum-double}
The quantum double of a weak Hopf algebra \(H\), denoted by \(D(H)\), is a weak Hopf algebra equipped with the following structure:

\begin{enumerate}
    \item[(1)] The multiplication operation is defined by
    \[
    [\varphi \otimes h] [\psi \otimes g] = \sum_{(\psi), (h)} [\varphi \psi^{(2)} \otimes h^{(2)} g] \langle \psi^{(1)}, S^{-1}(h^{(3)}) \rangle \langle \psi^{(3)}, h^{(1)} \rangle.
    \]
    
    \item[(2)] The unit element is
    \[
    [\varepsilon \otimes 1_H].
    \]
    
    \item[(3)] The comultiplication is given by
    \[
    \Delta([\varphi \otimes h]) = \sum_{(\varphi), (h)} [\varphi^{(2)} \otimes h^{(1)}] \otimes [\varphi^{(1)} \otimes h^{(2)}].
    \]
    
    \item[(4)] The counit is defined as
    \[
    \varepsilon ([\varphi \otimes h]) = \langle \varphi, \varepsilon_R(S^{-1}(h)) \rangle.
    \]
    
    \item[(5)] The antipode is given by
    \[
    S([\varphi \otimes h]) = \sum_{(\varphi), (h)} [\hat{S}^{-1}(\varphi^{(2)}) \otimes S(h^{(2)})] \langle \varphi^{(1)}, h^{(3)} \rangle \langle \varphi^{(3)}, S^{-1}(h^{(1)}) \rangle.
    \]
\end{enumerate}
\end{definition}

Notice that the category ${}_{H}\mathsf{YD}^H$ of left-right Yetter--Drinfeld modules is equivalent to the category $\Rep(D(H))$. Moreover, taking the Drinfeld center of $\Rep(H)$, any object therein naturally carries the structure of a Yetter--Drinfeld module. Hence we have 
\[
\mathcal{Z}(\Rep(H))  \simeq {}_{H}\mathsf{YD}^H \simeq \Rep(D(H)),
\]
they all characterize the (2+1)D topological excitations of the weak Hopf lattice gauge theory.

\paragraph{Weak Hopf quantum double lattice model}
Consider a 2d manifold $\Sigma$, a lattice on it is a  cellulation $C(\Sigma)=C^0(\Sigma)\cup C^1(\Sigma)\cup C^2(\Sigma)$ with $C^0(\Sigma)$ denoting the set of vertices, $C^1(\Sigma)$ denoting the set of edges, and $C^2(\Sigma)$ denoting the set of faces.
The bulk input data is a weak Hopf algebra $H$. For a directed lattice on $2d$ closed surface, we assign each edge a weak Hopf qudit $\mathcal{H}_e=H$, the total Hilbert space is $\mathcal{H}_{\rm tot}=\otimes_{e\in C^1(\Sigma)} \mathcal{H}_e$. All edges of the lattice are oriented, to construct the local stabilizers, we adopt the following convention:
\begin{equation}
    \begin{aligned}
        \begin{tikzpicture}   
           \draw[line width=1pt,-stealth,black] (0,0)--(1,0);
         \draw[line width=1pt,black] (0.8,0)--(1.8,0);
        \draw[black,fill=black] (1.8,0) circle (0.08); 
           \draw[black,fill=black] (0,0) circle (0.08);
           \node[ line width=0.6pt, dashed, draw opacity=0.5] (a) at (-0.4,0){$\XL$};
           \node[line width=0.6pt, dashed, draw opacity=0.5] (a) at (2.2,0){$\XR$};
           \node[line width=0.6pt, dashed, draw opacity=0.5] (a) at (0.9,-0.4){$\ZR$};
           \node[line width=0.6pt, dashed, draw opacity=0.5] (a) at (0.9,0.4){$\ZL$};
        \end{tikzpicture}
    \end{aligned}.
\end{equation}
With this convention, consider the following configuration:
\begin{equation}
\begin{aligned}
\begin{tikzpicture}
    \draw[-stealth,line width=1pt,black] (-1,0) -- (0,0); 
    \draw[-stealth,line width=1pt,black] (0,0) -- (1,0); 
    \draw[-stealth,line width=1pt,black] (0,0) -- (0,1); 
    \draw[-stealth,line width=1pt,black] (0,-1) -- (0,0); 
    \draw[-stealth,line width=1pt,black] (0,1) -- (1,1);
    \draw[-stealth,line width=1pt,black] (1,0) -- (1,1);  
    \draw[line width=1pt, dotted, red] (0,0) -- (0.5,0.5);
    \draw [fill = black] (0,0) circle (1.2pt);
    \draw [fill = black] (0.5,0.5) circle (1.2pt);
    \draw[red,-stealth] (0.2,0.2) arc (45:330:0.3);
    \draw[red,-stealth] (0.3,0.3) arc (-135:160:0.3);
    \node[ line width=0.2pt, dashed, draw opacity=0.5] (a) at (-0.2,-0.2){$v$};
    \node[ line width=0.2pt, dashed, draw opacity=0.5] (a) at (0.7,0.7){$f$};
    \node[ line width=0.2pt, dashed, draw opacity=0.5] (a) at (-1.2,0){$j_5$};
    \node[ line width=0.2pt, dashed, draw opacity=0.5] (a) at (0,-1.2){$j_6$};
    \node[ line width=0.2pt, dashed, draw opacity=0.5] (a) at (0.8,-0.3){$j_1$};
    \node[ line width=0.2pt, dashed, draw opacity=0.5] (a) at (1.3,0.5){$j_2$};
    \node[ line width=0.2pt, dashed, draw opacity=0.5] (a) at (0.5,1.2){$j_3$};
    \node[ line width=0.2pt, dashed, draw opacity=0.5] (a) at (-0.2,0.5){$j_4$};
\end{tikzpicture}
\end{aligned}.
\end{equation}
Here, $s=(v,f)$ is an edge-face link, and we assume a counterclockwise order around each face and vertex, with the starting point at link $s$.

\begin{figure}[t]
    \centering
    \includegraphics[width=10cm]{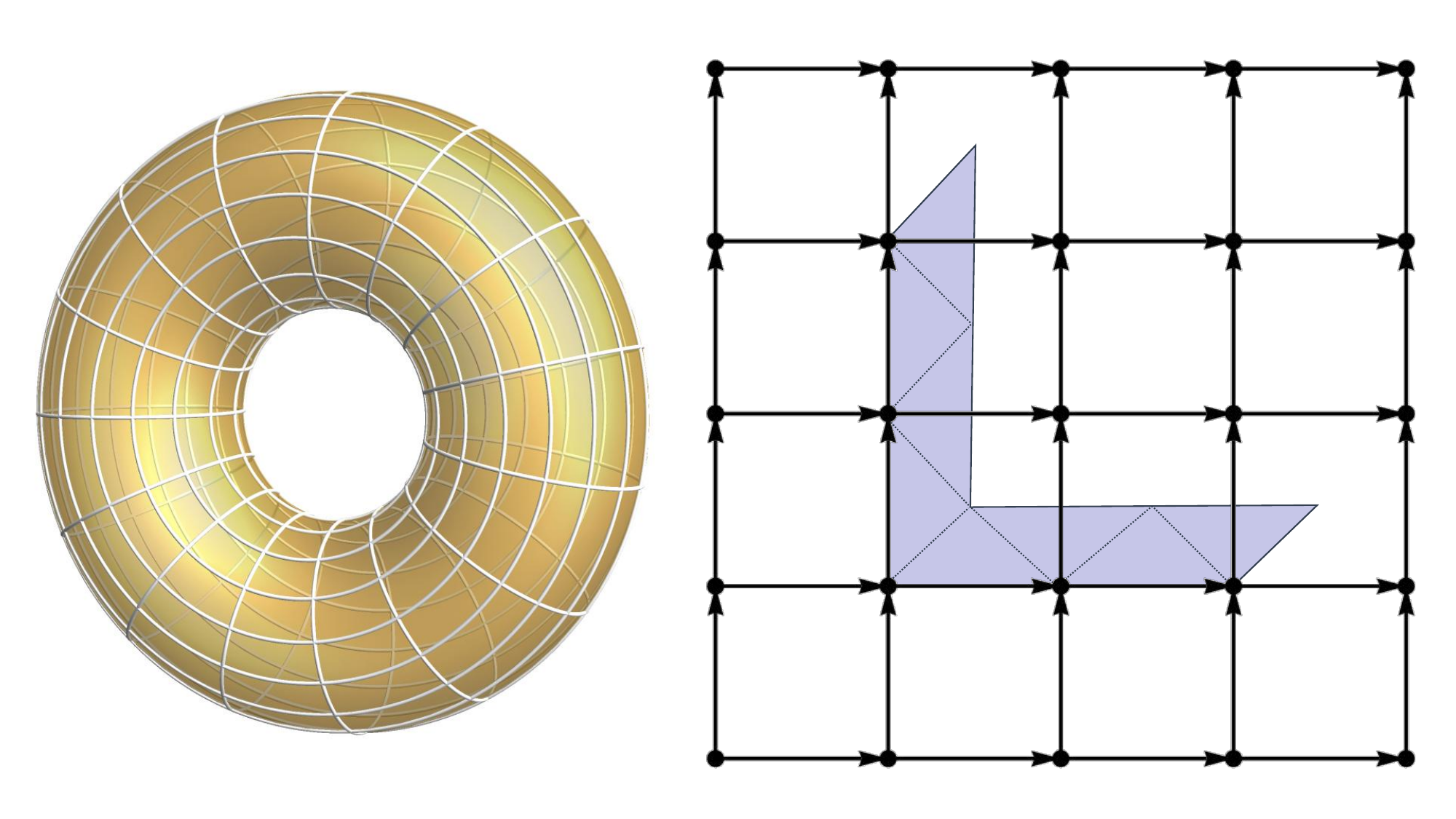}
\caption{An illustration of a quantum double lattice, viewed as a cellulation of a surface. 
The purple region indicates a ribbon on the lattice.}
    \label{fig:cluster-graph}
\end{figure}

The corresponding vertex and face operators are defined based on comultiplication as follows:
\begin{equation}
\begin{aligned}
&\Av^h(s) = \sum_{(h)} \XL_{h^{(1)}}(j_4) \otimes \XR_{h^{(2)}}(j_5) \otimes \XR_{h^{(3)}}(j_6) \otimes \XL_{h^{(4)}}(j_1), \\
&\Bf^{\varphi}(s) = \sum_{(\varphi)} \ZL_{\varphi^{(1)}}(j_1) \otimes \ZL_{\varphi^{(2)}}(j_2) \otimes \ZR_{\varphi^{(3)}}(j_3) \otimes \ZR_{\varphi^{(4)}}(j_4), \label{eq:ABdef}
\end{aligned}
\end{equation}
where $h \in H$ and $\varphi \in \hat{H}$.
At the link $s$, we have a representation of quantum double $D(H)$:
\begin{equation}
    [\varphi \otimes h]\mapsto \Bf^{\varphi}(s)\Av^h(s),
\end{equation}
where $[\varphi\otimes h]$ is an element in quantum double, see \cite{Jia2023weak} for the proof.

For any site \(s\), the operators \(\Av^h(s)\) and \(\Bf^\varphi(s)\) obey the following commutation relations \cite{Jia2023weak}:
\begin{align}
    &\Av^h(s)\Bf^\varphi(s) = \sum_{(h),(\varphi)} \Bf^{\varphi^{(2)}}(s) \Av^{h^{(2)}}(s) \langle \varphi^{(1)}, S^{-1}(h^{(3)}) \rangle \langle \varphi^{(3)}, h^{(1)} \rangle, \label{eq:rep-comm-rel-1} \\
    &\Bf^\varphi(s) \Av^{xh}(s) = \Bf^{\varphi(x\rightharpoonup \varepsilon)}(s) \Av^h(s), 
    \Bf^\varphi(s) \Av^{yh}(s) = \Bf^{\varphi(\varepsilon \leftharpoonup y)}(s) \Av^h(s), \label{eq:rep-comm-rel-2}
\end{align}
for any \(x \in H_L=\varepsilon_L(H)\) and \(y \in H_R=\varepsilon_R(H)\).

It can be proven that [Ref. \cite[Lemma 2]{chang2014kitaev}
\begin{equation}
    [\varphi\otimes 1][\varepsilon \otimes h] =[\varphi \otimes h] =[\varepsilon \otimes h]  [\varphi \otimes 1]
\end{equation}
From this and Eqs. \eqref{eq:rep-comm-rel-1} and \eqref{eq:rep-comm-rel-2}, we obtain that if $h$ is taken as the Haar integral $\lambda$ and $\varphi$ is taken as the Haar measure $\Lambda$, then the corresponding local operators are projectors that commute with each other. Additionally, the definition of the operator does not depend on the initial link $s$ but only on the vertex and faces. Therefore, we define the operators as $\Av_v = \Av^{\lambda}(s)$ and $\Bf_f = \Bf^{\Lambda}(s)$. The Hamiltonian is defined as
\begin{equation}
    \mathbb{H}=-\sum_v \Av_v -\sum_f \Bf_f.
\end{equation}
The topological excitations of the model are characterized by the representation category of quantum double $D(H)$.

\paragraph{Topological boundary condition of weak Hopf lattice gauge theory}

The topological boundary condition of weak Hopf lattice gauge theory can be characterized in several different ways.
From anyon condensation perspective, we have the following (see Refs.~\cite{Kong2014,Cong2017,jia2022electricmagnetic,burnell2018anyon,eliens2010anyon,jia2023boundary,Jia2023weak}):
\begin{itemize}
    \item Lagrangian algebra $\cA$. The topological boundary condition is characterized by a Lagrangian algebra in the bulk phase $\Rep(D(H))$, the boundary phase is given by anyon condensation $\eB=\Rep(D(H))_{\cA}$, the category of $\cA$-modules in $\Rep(D(H))$.
    \item Frobenius algebra $\cF$. For this case, the topological boundary condition is characterized by a  Frobenius algebra  in the bulk phase $\Rep(D(H))$. The anyon condensation is realized in two steps, we first take the quotient category $\Rep(D(H))/\cF$, then we take idempotent complete  to obtain the boundary phase $\eB=\operatorname{IC}(\Rep(D(H))/\cF)$.
\end{itemize}

Since $\Rep(D(H))$ is non-chiral topological phase, the boundary can also be characterized at the level a weak Hopf gauge symmetry $W$ (or equivalently at the level of (multi)fusion category $\Rep(H)$).

\begin{itemize}
    \item In the framework of quantum double lattice model, the bulk gauge symmetry is the weak Hopf algebra $W$, the boundary gauge symmetry is a comodule algebra $K$ over $H$.
    \item In the framework of string-net model, the bulk input (multi)fusion category is $\Rep(H)$, the boundary is determined by a $\Rep(H)$-module category $\eM$. For a given comodule algebra $H$, the category of $K$-modules $\Mod_K$ is of module  category over $\Rep(H)$.
\end{itemize}

\subsection{Algebraic theory of SymTFT for weak Hopf lattice gauge theory}

\begin{figure}[t]
    \centering
    \includegraphics[width=8cm]{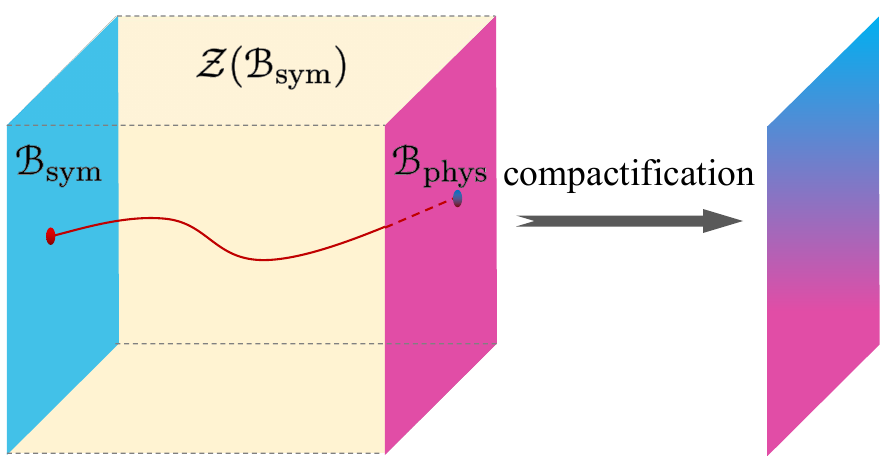}
    \caption{Depiction of the SymTFT sandwich: the symmetry boundary is drawn in cyan, while the physical boundary is drawn in magenta.}
    \label{fig:SymTFT}
\end{figure}

The SymTFT is a sandwich structure as shown in Figure~\ref{fig:SymTFT}, the sandwich manifold defined as $\mathbb{M}^{1,1}\times [0,1]$ where $\mathbb{M}^{1,1}$ represents the $(1+1)$ D manifold on which our system lies on. 
The sandwich manifold $\mathbb{M}^{1,1}\times [0,1]$ is a (2+1)D manifold that has two boundaries.
The basic idea is we put the non-invertible symmetry on one of the boundary, and put our physical system on the other boundary, then after doing compactification over the interval $[0,1]$, we obtain a (1+1)D system with non-invertible symmetry.

Consider a \((1+1)\)D phase \(\mathcal{T}\) with fusion category symmetry given by the fusion category \(\eB_{\rm sym}\), whose physical information is captured by \(\eB_{\rm phys}\). The SymTFT for this phase is a triple \((\eB_{\rm sym}, \eB_{\rm phys}, \mathcal{Z}(\eB_{\rm sym}))\), where \(\mathcal{Z}(\eB_{\rm sym})\) is the Drinfeld center of \(\eB_{\rm sym}\), a unitary modular tensor category characterizing a \((2+1)\)D topological phase. The symmetry boundary \(\eB_{\rm sym}\) must be a gapped topological boundary condition of \(\mathcal{Z}(\eB_{\rm sym})\). The physical boundary \(\eB_{\rm phys}\) is also a boundary condition for the bulk TQFT, but it is not necessarily topological or even gapped.
If \(\mathcal{T}\) is gapless, \(\eB_{\rm phys}\) corresponds to a gapless boundary condition. Conversely, if \(\mathcal{T}\) is gapped, \(\eB_{\rm phys}\) represents a gapped topological boundary condition.


For this section, we will assume that \(\mathcal{T}\) is gapped. Therefore, both \(\eB_{\rm sym}\) and \(\eB_{\rm phys}\) are chosen as gapped boundary conditions of the TQFT \(\mathcal{Z}(\eB_{\rm sym})\). Now, we set the \((2+1)\)D bulk theory as the weak Hopf lattice gauge theory \(\mathcal{Z}(\eB_{\rm sym}) \simeq_{\otimes, \rm br} \Rep(D(H)) \simeq_{\otimes, \rm br} \mathcal{Z}(\Rep(H))\), where the weak Hopf gauge symmetry is given by \(H\).
Note that this does not imply that \(\eB_{\rm sym}\) is monoidally equivalent to \(\Rep(H)\); it only means that they are weak Morita equivalent (i.e., their Drinfeld centers are equivalent as unitary modular tensor categories). All topological boundary conditions of a weak Hopf lattice gauge theory are weak Morita equivalent, allowing us to choose \(\eB_{\rm sym}\) as an arbitrary topological gapped boundary condition.
For weak Hopf gauge symmetry $H$, the corresponding SymTFT is a sandwich that consists of the following data:
\begin{itemize}
    \item For the (2+1)D bulk, we put a weak Hopf lattice gauge theory whose topological excitation is given by the representation category $\Rep(D(H))$ of the quantum double $D(H)$.
    \item The symmetry boundary is a topological boundary condition of bulk lattice gauge phase $\Rep(D(H))$, this can be characterized by a comodule algebra $K$ over weak Hopf algebra $H$. We denote the boundary vacuum charge as $\cA_K$, which is a Lagrangian algebra in $\Rep(D(H))$.
    \item The physical boundary condition $\eB_{\rm phys}$ is also chosen as a topological boundary condition, thus it is also characterized by a comodule algebra $J$ over $H$. We denote the boundary vacuum charge as $\cA_J$, which is also a Lagrangian algebra in $\Rep(D(H))$.
    \item The local operators of are generated by Wilson lines that can end at both boundaries. In weak Hopf lattice gauge theory, the Wilson lines are ribbon operators that connect two boundaries~\cite{meusburger2021hopf,meusburger2017kitaev,chen2021ribbon,jia2023boundary,Jia2023weak}, representing the transport of vacuum charge from one boundary to the other.
    \item The symmetry action on local operators is realized by half-braiding. In weak Hopf lattice gauge theory, this is characterized by boundary closed ribbons \cite{meusburger2021hopf,meusburger2017kitaev,chen2021ribbon,jia2023boundary,Jia2023weak}, which are ribbons with both ends anchored on the symmetry boundary. Since the symmetry boundary closed ribbon intersects with the local operator ribbon, this intersection can be used to realize the half-braiding.
\end{itemize}

\section{Weak Hopf cluster ladder model I: smooth-rough construction}
\label{sec:latticeI}

In this section, we focus on a special case of the cluster ladder model, 
where one boundary is chosen to be smooth and the other to be rough. 
The general discussion of topological boundary conditions will be deferred to the next section. 
The main results of this section, together with those of the next, can be summarized as follows:

\begin{theorem}[Informal summary]
Any SymTFT with (multi)fusion category symmetry \(\EuScript{S}\) can be realized via a cluster state model based on the smooth-rough construction. We first use Tannaka-Krein reconstruction (or weak Hopf tube algebra approach \cite{Kitaev2012boundary,jia2024weakTube,jia2025weakhopftubealgebra}) to obtain a weak Hopf gauge symmetry \(H_{\EuScript{S}}\) such that \(\EuScript{S} \simeq \Rep(H_{\EuScript{S}})\) as (multi)fusion categories. Then, we construct the cluster state model based on this weak Hopf symmetry \(\Hbb_{\rm{cluster}}[H_{\EuScript{S}}]\), which realizes the \(\EuScript{S}\)-SymTFT.
\end{theorem}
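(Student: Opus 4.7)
The plan is to establish the theorem in three successive steps, combining the Tannaka--Krein and tube-algebra reconstruction of Section~\ref{sec:reconstrucWHA} with the SymTFT sandwich picture of Section~\ref{sec:WHAsymTFT} and the smooth--rough cluster construction generalized from Section~\ref{sec:groupcase}.

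First, the reconstruction step. Given $\EuScript{S}$, the plan is to pick either (i) a weak fiber functor $F:\EuScript{S}\to {_A}\Mod_A$ for some finite-dimensional $\Cbb$-algebra $A$ and set $H_{\EuScript{S}} := \End(F)$, or (ii) the boundary tube algebra $H_{\EuScript{S}} := \mathbf{Tube}({_{\EuScript{S}}}\EuScript{S})$. By the results recalled in Section~\ref{sec:reconstrucWHA}, both produce $C^{*}$ weak Hopf algebras with $\Rep(H_{\EuScript{S}}) \simeq \EuScript{S}$ as unitary (multi)fusion categories, and any two such algebras lie in the same weak Morita class. Hence the bulk TFT $\mathcal{Z}(\Rep(H_{\EuScript{S}})) \simeq \mathcal{Z}(\EuScript{S}) \simeq \Rep(D(H_{\EuScript{S}}))$ of the ensuing sandwich is independent of this choice.

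Second, the construction step. Following the smooth--rough template, the plan is to assign $W = H_{\EuScript{S}}$ to every bulk edge of the ladder, place the smooth comodule algebra $K = H_{\EuScript{S}}$ (acting on itself via $\Delta$) on the symmetry row, and take the rough boundary $N = H_{\EuScript{S},L}$ on the physical row. As in the finite-group case, the rough side collapses and the remaining Hamiltonian reads $\Hbb_{\mathrm{cluster}}[H_{\EuScript{S}}] = -\sum_{f} \Bf_{f} - \sum_{v_{s}} \Av_{v_{s}}$, where $\Bf_{f}$ and $\Av_{v_{s}}$ are the comultiplication-based face and vertex operators built from the Haar measure and the Haar integral of $H_{\EuScript{S}}$. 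Exact solvability is established by writing the ground-state projector as a weak Hopf tensor network in the sense of Section~\ref{sec:tensor-network}, iterating the comultiplication structure tensor $C$ to fuse together the local projectors.

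Third, the matching step. By construction the bulk excitations are $\Rep(D(H_{\EuScript{S}})) \simeq \mathcal{Z}(\EuScript{S})$, so the $(2{+}1)$D interior is the desired SymTFT. The smooth-boundary excitations form ${}^{H_{\EuScript{S}}}_{H_{\EuScript{S}}}\Mod_{H_{\EuScript{S}}} \simeq \Rep(H_{\EuScript{S}}) \simeq \EuScript{S}$, while the rough physical boundary, being topological but trivial, is the natural location for non-topological perturbations. Using Proposition~\ref{prop:characterBasis}, the closed $Z_{\Gamma}$-string MPOs of Section~\ref{sec:tensor-network} labeled by irreducible characters $\chi_{\Gamma} \in \Cocom(\hat{H}_{\EuScript{S}}) \cong R(\EuScript{S})$ give faithful ground-space representatives of the fusion ring $\operatorname{Gr}(\EuScript{S})$, so that compactifying the interval direction yields a $(1{+}1)$D model whose global categorical symmetry is exactly $\EuScript{S}$.

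The principal obstacle is the final identification of the lattice MPOs with the full fusion-categorical action of $\EuScript{S}$, not merely with its fusion ring. A complete argument will require tracking the $F$-symbols of $\EuScript{S}$ through the boundary ribbon algebra, which we plan to route through the embedding $\EuScript{S} \hookrightarrow \Rep(\mathbf{Tube}({_{\EuScript{S}}}\EuScript{S}))$ recalled in Section~\ref{sec:reconstrucWHA}. A secondary subtlety is that for a genuine weak Hopf algebra $\Delta(1_{H_{\EuScript{S}}}) \neq 1 \otimes 1$, so the global MPO for $1_{H_{\EuScript{S}}}$ is not the identity; this must be handled by projecting onto the image of $\Delta(1)$ in order to obtain a well-defined and non-degenerate ground space whenever $\EuScript{S}$ admits a fiber functor.
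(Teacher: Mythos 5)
Your three-step plan (reconstruction, smooth--rough construction, matching) is essentially the route the paper takes: Section~\ref{sec:reconstrucWHA} supplies $H_{\EuScript{S}}$, Section~\ref{sec:latticeI} builds $\Hbb_{\rm cluster}[H_{\EuScript{S}}]$ from the Haar-integral vertex and face operators and solves it by the weak Hopf tensor network of Section~\ref{sec:tensor-network}, and the symmetry matching proceeds exactly through the character algebra $\Cocom(\hat{H}_{\EuScript{S}})\cong R(\EuScript{S})$ of Proposition~\ref{prop:characterBasis}. Your two closing caveats (fusion ring versus full categorical action, and $W_{1_H}\neq I$) are also caveats the paper itself makes.

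The one substantive discrepancy is your treatment of the physical boundary. You assign the comodule algebra $N=H_{\EuScript{S},L}$ to the physical row and then assert that ``the rough side collapses'' as in the finite-group case. These two statements are compatible only when $H_{\EuScript{S},L}\cong\Cbb$, i.e.\ only when $H_{\EuScript{S}}$ is an honest Hopf algebra (equivalently, when $\EuScript{S}$ admits a fiber functor); for a genuine weak Hopf algebra $\dim H_L>1$ and the boundary qudits do not disappear. The paper makes a different choice: its rough boundary is by definition the literal removal of the boundary degrees of freedom, which always collapses but is \emph{not} a comodule-algebra (hence not a genuine topological) boundary condition when $\EuScript{S}$ is anomalous --- and the paper explicitly concedes that in that case the smooth--rough model realizes only a partially symmetry-broken phase rather than an SPT. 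You should commit to one of the two options: keep $N=H_{\EuScript{S},L}$ and carry the extra boundary degrees of freedom through the construction (your model then becomes a special case of the general ladder of Section~\ref{sec:latticeII} rather than the cluster state model of Section~\ref{sec:latticeI}), or adopt the paper's removal convention and state the anomaly caveat. Relatedly, your final sentence proposes projecting onto the image of $\Delta(1)$ ``whenever $\EuScript{S}$ admits a fiber functor,'' but that is precisely the case in which the problem is absent (one may then take $H_{\EuScript{S}}$ to be a Hopf algebra, so $\Delta(1)=1\otimes 1$); the nontrivial issue lives in the anomalous case, where such a projection does not by itself restore a non-degenerate $\EuScript{S}$-symmetric ground state.
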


\begin{figure}
    \centering
    \includegraphics[width=0.65\linewidth]{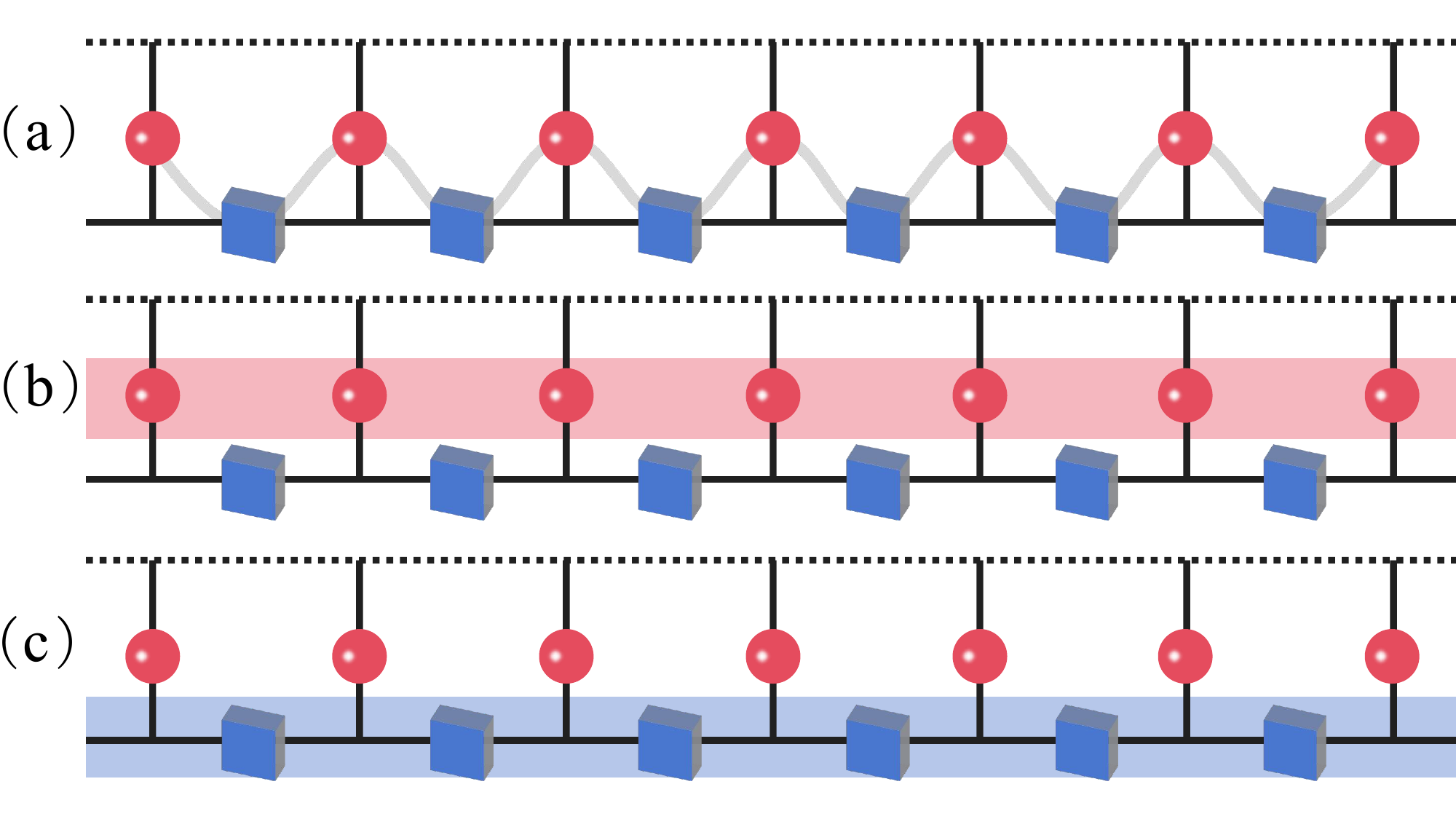}
\caption{Correspondence between the cluster state model and the quantum double model with one smooth boundary and one rough boundary.  
(a) The black lattice denotes the quantum double lattice, where physical degrees of freedom reside on solid edges. In the corresponding cluster state model, qubits are placed on vertices, with odd vertices shown in red, even vertices in blue, and links between them in gray.  
(b) The magnetic $X$-string operator on the rough boundary corresponds to the $\Cocom(H)$ symmetry operator of the cluster state model.  
(c) The electric $Z$-string operator on the smooth boundary corresponds to the $\Cocom(\hat{H})$ symmetry operator of the cluster state model.}
\label{fig:Z2ladder} 
\end{figure}

\subsection{Weak Hopf cluster state model}

As shown in Ref.~\cite{jia2024generalized} for the Hopf algebra case (and in the finite group case~\cite{fechisin2023noninvertible}), 
the CSS-type cluster state model can be regarded as a quantum double model with one smooth boundary and one rough boundary. 
Inspired by this construction, we define the weak Hopf cluster state model as a weak Hopf quantum lattice gauge theory 
with one smooth boundary and one rough boundary; see Figure~\ref{fig:Z2ladder}. 
In the study of cluster states, the qudits are commonly placed on the vertices, which are divided into two sets, referred to as \emph{odd} and \emph{even} vertices. 
In the quantum double model setting, the qudits are instead placed on the edges, 
where an odd vertex corresponds to a bulk edge and an even vertex corresponds to a boundary edge.

\begin{definition}[Weak Hopf cluster state model] For weak Hopf algebra $H$,  the weak Hopf cluster state model is a weak Hopf lattice gauge theory with one rough boundary and one smooth boundary.
\end{definition}

The smooth boundary here is analogous to the Neumann boundary in quantum field theory. 
In the lattice model, this corresponds to the boundary degrees of freedom being identical to those in the bulk. 
In the weak Hopf setting, the boundary input weak Hopf algebra is the same as the bulk weak Hopf algebra; 
in the fusion category setting, the boundary module category is taken to be the same as that of the bulk fusion category.
The subtlety arises for rough boundaries. The notion of a rough boundary refers to the removal of degrees of freedom at the boundary, 
which has been commonly employed in lattice models since the original work of Ref.~\cite{bravyi1998quantum}. 
This usage differs from another common convention in the literature in the general weak Hopf or fusion category setting, where a rough boundary corresponds to the module category $\Vect$ 
over the input fusion category in the string-net model framework, or equivalently, to the comodule algebra $\mathbb{C}$ 
over the input weak Hopf algebra in the quantum double model framework.
For some fusion categories $\eC$ that are anomalous \cite{thorngren2019fusion}, $\Vect_{\Cbb}$ is not a module category over $\eC$, 
and $\mathbb{C}$ is not a comodule algebra over $H$. 
However, for anomaly-free symmetries, the two notions coincide: 
the rough boundary is equivalent to the Dirichlet boundary in field theory.

We will set symmetry boundary as smooth boundary.
For bulk weak Hopf gauge symmetry $H$, the smooth boundary corresponds to comodule algebra $K_s=H$, namely, choose boundary gauge symmetry as $H$ itself. The corresponding boundary excitation is $\eB_s\cong \Rep(H)$, which is the non-invertible symmetry in the framework of SymTFT.

Starting from a weak Hopf quantum double model with one smooth boundary and one rough boundary, and applying a coarse-graining procedure to the bulk, we ultimately arrive at the following ladder lattice structure (with horizontal periodic boundary conditions):
\begin{equation}\label{eq:ClusterLattice}
\begin{aligned}
\begin{tikzpicture}
    \def\n{5}
    \def\s{1}
    \fill[green!20] (0, 0) rectangle (\n*\s+\s, \s); 
    \foreach \i in {0,...,\n} {
        \draw[-stealth, line width=1.0pt,blue, midway] (\i*\s, 0) -- (\i*\s+\s, 0);
        \draw[dotted, line width=1.0pt,red, midway] (\i*\s, \s) -- (\i*\s+\s, \s);
        \draw[-stealth,line width=1.0pt, midway] (\i*\s, 0) -- (\i*\s, \s);
    }
    \draw[-stealth, midway,line width=1.0pt] (\n*\s+\s, 0) -- (\n*\s+\s, \s);
\end{tikzpicture}
\end{aligned}
\end{equation}
where bulk edges are intentionally depicted in black, while smooth boundary edges are colored blue for clarity.
We assign to each edge a Hilbert space $\mathcal{H}_e = H$, so that the total Hilbert space is given by $\mathcal{H}_{\rm tot} = \bigotimes_{e} \mathcal{H}_e$.

There are two types of stabilizers, one is the smooth boundary vertex operator \(\Av_{v_s}\):
\begin{align}
 A_v^h=\sum_{(h)}\XR_{h^{\cone}} \otimes \XL_{h^{\ctwo}} \otimes \XL_{h^{\cthree}}   
 = \sum_{(h)}\begin{aligned}
           \begin{tikzpicture}
    \begin{scope}[rotate=90] 
        \fill[green!20] (0, 0) rectangle (1,2); 
        \draw[-stealth,blue,line width = 1.6pt] (0,1) -- (0,0);
        \draw[-stealth,blue,line width = 1.6pt] (0,2) -- (0,1); 
        \draw[-stealth,black] (0,1) -- (1,1); 
       	\node[ line width=0.2pt, dashed, draw opacity=0.5] (a) at (-0.4,0.5){$\XL_{h^{\ctwo}} $};
    		\node[ line width=0.2pt, dashed, draw opacity=0.5] (a) at (-0.4,1.5){$\XR_{h^{\cone}}$};
            \node[ line width=0.2pt, dashed, draw opacity=0.5] (a) at (0.6,0.5){$\XL_{h^{\cthree}}$};
    \end{scope}
\end{tikzpicture}
 \end{aligned},
\end{align}
where we have assumed a starting site $s=(v_s,f)$ and then apply the comultiplication the obtain the local operator. Since generalized X-operators form a regular representation of weak Hopf algebra $H$ and $\Delta(gh)=\Delta(g)\Delta(h)$, we have $A_{v_s}^gA_{v_s}^h=A^{gh}_{v_s}$ at any vertex $v_s$ (notice that these local operators are assumed to have the same starting site, see \cite{Jia2023weak} for detail).

Another type of local operators are the face operators $\Bf_f^{\psi}=\sum_{(\psi)}\ZR_{\psi^{\cone}}\otimes \ZL_{\psi^{\ctwo}} \otimes \ZL_{\psi^{\cthree}}$:
\begin{equation}
\begin{aligned}
   &   \Bf_f^{\psi} \big{|}\begin{aligned}
    \begin{tikzpicture}
    \begin{scope}[rotate=-90] 
        \fill[green!20] (0, 0) rectangle (1,1); 
        \draw[-stealth,blue,line width = 1.6pt] (1,0) -- (1,1);
        \draw[-stealth,black] (1,1) -- (0,1); 
                \draw[-stealth,black] (1,0) -- (0,0); 
        \node[ line width=0.2pt, dashed, draw opacity=0.5] (a) at (1.4,0.5){$s$};
        \node[ line width=0.2pt, dashed, draw opacity=0.5] (a) at (0.5,1.25){$t$};
            \node[ line width=0.2pt, dashed, draw opacity=0.5] (a) at (0.5,-.25){$r$};
    \end{scope}
\end{tikzpicture}   
\end{aligned}\big{\rangle}
= \sum_{(r),(s),(t)} \psi(r^{\ctwo} S(s^{\cone}) S(t^{(1)})) 
\big{|}\begin{aligned}
    \begin{tikzpicture}
    \begin{scope}[rotate=-90] 
        \fill[green!20] (0, 0) rectangle (1,1); 
        \draw[-stealth,blue,line width = 1.6pt] (1,0) -- (1,1);
        \draw[-stealth,black] (1,1) -- (0,1); 
                \draw[-stealth,black] (1,0) -- (0,0); 
        \node[ line width=0.2pt, dashed, draw opacity=0.5] (a) at (1.4,0.5){$s^{\ctwo}$};
        \node[ line width=0.2pt, dashed, draw opacity=0.5] (a) at (0.5,1.3){$t^{\ctwo}$};
            \node[ line width=0.2pt, dashed, draw opacity=0.5] (a) at (0.5,-.3){$r^{\cone}$};
    \end{scope}
\end{tikzpicture}   
\end{aligned}\big{\rangle}.
\end{aligned}
\end{equation}
For any $h \in \Cocom(H)$ and $\psi \in \Cocom(\hat{H})$, the vertex and face operators commute, i.e., $[\Av_v^{h}, \Bf_f^{\psi}] = 0$ for all vertices $v$ and faces $f$. In addition, we have $[\Av_v^g, \Av_{v'}^h] = 0$ and $[\Bf_f^{\psi}, \Bf_{f'}^{\phi}] = 0$ for all $v \neq v'$ and $f \neq f'$, and for all $g, h \in H$ and $\psi, \phi \in \hat{H}$. This holds either because the operators act on disjoint sets of edges, or—when they act on overlapping edges—their actions commute due to acting from opposite sides, namely, $[\XR_g, \XL_h] = 0$ and $[\ZR_\psi, \ZL_\phi] = 0$.
The local vertex operator is defined as $\Av_v = \Av_v^{\lambda}$, where $\lambda$ is the Haar integral of $H$, and the face operator is given by $\Bf_f = \Bf_f^{\Lambda}$, where $\Lambda$ is the Haar integral of $\hat{H}$.
The Hamiltonian reads
\begin{equation}\label{eq:ClusterStateModel}
    \Hbb_{\rm cluster} = -\sum_{v_s: \text{smooth}} \Av_{v_s} - \sum_f \Bf_f.
\end{equation}
Since $\lambda \in \Cocom(H)$ and $\Lambda \in \Cocom(\hat{H})$, all local stabilizers commute with each other and are projectors.

Since for any (multi)fusion category symmetry $\eS$, there exists a corresponding weak Hopf algebra $H_{\eS}$ that realizes it, the weak Hopf cluster state model provides a lattice realization of all $\eS$-symmetric theories (the symmetry will be discussed later in Section~\ref{sec:symmetryCluster}). 
When there is no anyonic tunneling channels between the smooth and rough boundaries, the model realizes an $\eS$-protected SPT phase.

For a general weak Hopf symmetry $H$, the rough boundary (recall that by rough boundary we simply mean the removal of all degrees of freedom at the boundary) need not correspond to the $\Vect_{\Cbb}$-module category of $\Rep(H)$. The reason is that, unlike in the Hopf algebra case, a general weak Hopf algebra does not admit a fiber functor from $\Rep(H)$ to $\Vect_{\Cbb}$ (see \cite{thorngren2019fusion} for this characterization of anomalies in non-invertible symmetries). Equivalently, while in the Hopf case the rough boundary can be characterized by the $H$-comodule algebra $\Cbb$, in the weak Hopf case $\Cbb$ fails to carry such an $H$-comodule structure. In this sense, the weak Hopf symmetry $H$ is anomalous, and the associated cluster state model realizes only a partial SSB phase.
From this perspective, the anomaly indicator of a weak Hopf symmetry can be formulated equivalently in any of the following ways:
\begin{equation}
\begin{aligned}
    \Delta(1_H) = 1_H \otimes 1_H;
\quad \varepsilon(xy) = \varepsilon(x)\varepsilon(y);\\
\sum_{(x)}S(x_{(1)})x_{(2)} = 1_H\varepsilon(x),\, \forall x;
\quad \sum_{(x)} x_{(1)}S(x_{(2)}) = 1_H\varepsilon(x), \,\forall x.
\end{aligned}
\end{equation}
Among these, the first condition is particularly useful: it implies that the symmetry operator $W_{1_H}$ fails to act as the identity in the anomalous weak Hopf case—a point we will return to in later discussion.

As we treat $\mathbb{H}_{\rm cluster}$ as a $(1+1)$D model, there is a natural notion of locality on the lattice.  
The local operators $\Av^g_{v_s}$ and $\Bf_f^{\psi}$ generate local algebras at sites $s=(v_s,f)$,  
\begin{equation}
    \mathcal{A}_{(v_s,f)}^{\rm loc} = \langle \Av^g_{v_s}, \Bf_f^{\psi} \mid g \in H, \, \psi \in \hat{H} \rangle,
\end{equation}
which is isomorphic to the quantum double algebra via the identification $\Bf_f^{\psi}\Av_{v_s}^g \leftrightarrow [\psi\otimes g]$.  
For non-invertible weak symmetries, the locality structure must still be preserved, a point that will be addressed in Section~\ref{sec:symmetryCluster}.

\begin{remark}
In Ref.~\cite{jia2024generalized}, we adopt a convention different from the one used here, where the local ordering around each vertex and face is clockwise. Consequently, the vertex operator and face operator must be modified accordingly. 
Different conventions lead to different lattice models, but they result in the same quantum phase. Additionally, when constructing the lattice model, it is necessary to set an initial configuration of orientations for all edges. Different choices of this configuration also yield distinct lattice models; however, their quantum phases remain identical.
In the case of $\mathbb{C}[\mathbb{Z}_2]$, all these different models are the same due to the fact that the inverse of $1$ is itself, and they correspond to the CSS-type cluster state model \cite{brell2015generalized}. For a general Abelian group, varying the local ordering around vertices and faces does not alter the lattice model, but different initial configurations still result in distinct models. In contrast, for a non-Abelian group, both the local ordering and the initial configurations lead to different models \cite{fechisin2023noninvertible,jia2024generalized}.
\end{remark}

\subsection{Weak Hopf cluster state}

The cluster state model can be solved using the weak Hopf tensor network introduced in Section~\ref{sec:tensor-network}. Assuming periodic boundary conditions for the cluster model, we only need to consider two types of edges: the bottom smooth boundary edge (blue in Eq.~\ref{eq:ClusterLattice}) and the bulk edge (black in Eq.~\ref{eq:ClusterLattice}).

For boundary edge (even vertex of cluster lattice), the corresponding local tensor is the comultiplication of Haar integral $\lambda$:
\begin{equation}
     \Delta(\lambda)= \begin{aligned}			
\begin{tikzpicture}
    \node[draw, fill=blue!20, minimum width=0.6cm, minimum height=0.6cm] (center) at (0,0) {$\scriptstyle\lambda$};
    \draw[line width=1.0pt] (center.north) -- ++(0,.5) node[above] {$\scriptstyle\lambda^{\ctwo}$};
    \draw[line width=1.0pt ] (center.south) -- ++(0,-.5) node[below] {$\scriptstyle \lambda^{\cone}$};
\end{tikzpicture}
\end{aligned} =  \begin{aligned}
				\begin{tikzpicture}
					\draw[black, line width=1.0pt]  (-0.5, 1) .. controls (-0.4, 0) and (0.4, 0) .. (0.5, 1);
					\draw[black, line width=1.0pt]  (0,0.23)--(0,-0.23);
         \filldraw[green!30, draw=black] 
       (0,-0.2) circle (0.2);
    \node at (0, -0.2) {\textcolor{black}{$\scriptstyle \lambda$}};
                   \node at (-0.5, 1.3) {$\scriptstyle \lambda^{\cone}$};
	                   \node at (0.5, 1.3) {$\scriptstyle \lambda^{\ctwo}$};			
    \end{tikzpicture}
\end{aligned}.
\end{equation}
Similarly, for bulk edge (odd vertex of cluster lattice), the local tensor corresponds to $\Delta_2(\lambda)$:
\begin{equation}
     \Delta_2(\lambda)= \begin{aligned}			
\begin{tikzpicture}
    \draw[line width=1.0pt] (center.north) (0,0) -- ++(0.8,0) node[right] {$\scriptstyle\lambda^{\ctwo}$};
    \draw[line width=1.0pt] (center.north) (0,0) -- ++(-0.8,0) node[left] {$\scriptstyle\lambda^{\cthree}$};
    \draw[line width=1.0pt ] (center.south) -- ++(0,-.5) node[below] {$\scriptstyle \lambda^{\cone}$};
    \node[draw, fill=gray!20, minimum width=0.6cm, minimum height=0.6cm] (center) at (0,0) {$\scriptstyle\lambda$};
\end{tikzpicture}
\end{aligned} =  \begin{aligned}
				\begin{tikzpicture}
					\draw[black, line width=1.0pt]  (-0.5, 1) .. controls (-0.4, 0) and (0.4, 0) .. (0.5, 1);
					\draw[black, line width=1.0pt]  (0,1)--(0,-0.23);
				\filldraw[green!30, draw=black] 
       (0,-0.2) circle (0.2);
    \node at (0, -0.2) {\textcolor{black}{$\scriptstyle \lambda$}};
                   \node at (-0.7, 1.3) {$\scriptstyle \lambda^{\cone}$};
	                   \node at (0.7, 1.3) {$\scriptstyle \lambda^{\cthree}$};	
                     \node at (0, 1.3) {$\scriptstyle \lambda^{\ctwo}$};	
    \end{tikzpicture}
\end{aligned}.
\end{equation}
Since Haar integral \(\lambda\) is cocommutative, we use a simplified notation in which the comultiplication components \(\cone, \ctwo, \dots\) can be cyclically permuted. For example, \(\lambda^{\cone} \otimes \lambda^{\ctwo} \otimes \lambda^{\cthree} = \lambda^{\cthree} \otimes \lambda^{\cone} \otimes \lambda^{\ctwo}\).

For face, a local tensor corresponds to the Haar measure $\Lambda\in \hat{H}$ is needed to glue the edge tensors together:
\begin{equation}
    ( \id \otimes \id\otimes \hat{S}) \circ \hat{\Delta}(\Lambda)=
    \begin{aligned}			
\begin{tikzpicture}
\draw[line width=1.0pt] (center.north) (0,0) -- ++(0.8,0) node[right] {$\scriptstyle\Lambda^{\ctwo}$};
\draw[line width=1.0pt] (center.north) (0,0) -- ++(-0.8,0) node[left] {$\scriptstyle \hat{S}(\Lambda^{\cthree})$};
\draw[line width=1.0pt ] (center.south) -- ++(0,-.5) node[below] {$\scriptstyle \Lambda^{\cone}$};
\node[draw, fill=yellow!20, minimum size=0.6cm, shape=circle] (center) at (0,0) {$\scriptstyle\Lambda$};
 \filldraw[black] (-0.5,0) circle (2pt);  
\end{tikzpicture}
\end{aligned} 
    =
    \begin{aligned}
			\begin{tikzpicture}
				 \draw[black, line width=1.0pt]  (-0.5, 0) .. controls (-0.4, 1) and (0.4, 1) .. (0.5, 0);
				 \draw[black, line width=1.0pt]  (0,-0)--(0,1.15);
        (-0.3, 1) -- (0.3, 1) -- (0, 1.4) -- cycle;
         \filldraw[red!30, draw=black] 
        (0,1.2) circle (0.2);
         \node at (0, 1.2) {$\scriptstyle \Lambda$};
          \node at (-0.7, -.3) {$\scriptstyle \Lambda^{\cone}$};
           \node at (0, -.3) {$\scriptstyle \Lambda^{\ctwo}$};
              \node at (0.9, -.3) {$\scriptstyle \hat{S}( \Lambda^{\cthree})$};
               \filldraw[black] (0.4,0.375) circle (2pt);  
				\end{tikzpicture}
			\end{aligned}.
\end{equation}
The antipode \(\hat{S}\) is introduced because the direction of the lattice is chosen as in Eq.~\eqref{eq:ClusterLattice}. If a different lattice configuration is chosen, the antipode may be placed on different legs of the tensor.

The cluster state is defined as the following tensor network state:
\begin{equation}
    \begin{aligned}			
\begin{tikzpicture}
    \draw[line width=1.0pt] (-0.5,0) -- ++(11.3,0) ;
    \draw[line width=1.0pt, red] (0,0) -- ++(0,-.8);
        \draw[line width=1.0pt, red] (2,0) -- ++(0,-.8);
    \draw[line width=1.0pt, red] (4,0) -- ++(0,-.8);
    \draw[line width=1.0pt, red] (6,0) -- ++(0,-.8);
    \draw[line width=1.0pt, red] (8,0) -- ++(0,-.8);
    \draw[line width=1.0pt, red] (10,0) -- ++(0,-.8);
     \filldraw[black] (0.5,0) circle (2pt);  
      \filldraw[black] (2.5,0) circle (2pt);
       \filldraw[black] (4.5,0) circle (2pt);
        \filldraw[black] (6.5,0) circle (2pt);
         \filldraw[black] (8.5,0) circle (2pt);
          \filldraw[black] (10.5,0) circle (2pt);
    \draw[line width=1.0pt] (1,-1) -- ++(0,1) ;
    \draw[line width=1.0pt, red] (1,-1) -- ++(0,-.8);
    \node[draw, fill=blue!20, minimum width=0.6cm, minimum height=0.6cm] (center) at (1,-1) {$\scriptstyle\lambda$};
     \draw[line width=1.0pt] (3,-1) -- ++(0,1) ;
    \draw[line width=1.0pt, red] (3,-1) -- ++(0,-.8);
    \node[draw, fill=blue!20, minimum width=0.6cm, minimum height=0.6cm] (center) at (3,-1) {$\scriptstyle\lambda$};
         \draw[line width=1.0pt] (5,-1) -- ++(0,1) ;
    \draw[line width=1.0pt, red] (5,-1) -- ++(0,-.8);
    \node[draw, fill=blue!20, minimum width=0.6cm, minimum height=0.6cm] (center) at (5,-1) {$\scriptstyle\lambda$};
         \draw[line width=1.0pt] (7,-1) -- ++(0,1) ;
    \draw[line width=1.0pt, red] (7,-1) -- ++(0,-.8);
    \node[draw, fill=blue!20, minimum width=0.6cm, minimum height=0.6cm] (center) at (7,-1) {$\scriptstyle\lambda$};
         \draw[line width=1.0pt] (9,-1) -- ++(0,1) ;
    \draw[line width=1.0pt, red] (9,-1) -- ++(0,-.8);
    \node[draw, fill=blue!20, minimum width=0.6cm, minimum height=0.6cm] (center) at (9,-1) {$\scriptstyle\lambda$};
    \node[draw, fill=gray!20, minimum width=0.6cm, minimum height=0.6cm] (center) at (0,0) {$\scriptstyle\lambda$};
    \node[draw, fill=yellow!20, minimum size=0.6cm, shape=circle] (center) at (1,0) {$\scriptstyle\Lambda$};
     \node[draw, fill=gray!20, minimum width=0.6cm, minimum height=0.6cm] (center) at (2,0) {$\scriptstyle\lambda$};
      \node[draw, fill=yellow!20, minimum size=0.6cm, shape=circle] (center) at (3,0) {$\scriptstyle\Lambda$};
        \node[draw, fill=gray!20, minimum width=0.6cm, minimum height=0.6cm] (center) at (4,0) {$\scriptstyle\lambda$};
      \node[draw, fill=yellow!20, minimum size=0.6cm, shape=circle] (center) at (5,0) {$\scriptstyle\Lambda$};
        \node[draw, fill=gray!20, minimum width=0.6cm, minimum height=0.6cm] (center) at (6,0) {$\scriptstyle\lambda$};
      \node[draw, fill=yellow!20, minimum size=0.6cm, shape=circle] (center) at (7,0) {$\scriptstyle\Lambda$};
        \node[draw, fill=gray!20, minimum width=0.6cm, minimum height=0.6cm] (center) at (8,0) {$\scriptstyle\lambda$};
      \node[draw, fill=yellow!20, minimum size=0.6cm, shape=circle] (center) at (9,0) {$\scriptstyle\Lambda$};
        \node[draw, fill=gray!20, minimum width=0.6cm, minimum height=0.6cm] (center) at (10,0) {$\scriptstyle\lambda$};
\end{tikzpicture}
\end{aligned} 
\end{equation}
where we have drawn the free leg (corresponding to physical degrees of freedom) in red.
The cluster state is the ground state of the cluster state model defined in Eq.~\eqref{eq:ClusterStateModel}.

The tensor network representation shown above is a simplified version. 
To directly extract the entanglement structure, as well as the multiplication and comultiplication, 
one needs to use the tensor network constructed from the structure constant tensors 
defined in Section~\ref{sec:tensor-network}, as shown in Figure~\ref{fig:WHAclusterTNrep}. 
The local tensors corresponding to the odd vertices (bulk edges) and even vertices (boundary edges) 
of the cluster lattice can be packed together (drawn in light red and light teal in Figure~\ref{fig:WHAclusterTNrep}), rendering the network equivalent to a standard MPS form.

\begin{figure}
    \centering
    \includegraphics[width=0.5\linewidth]{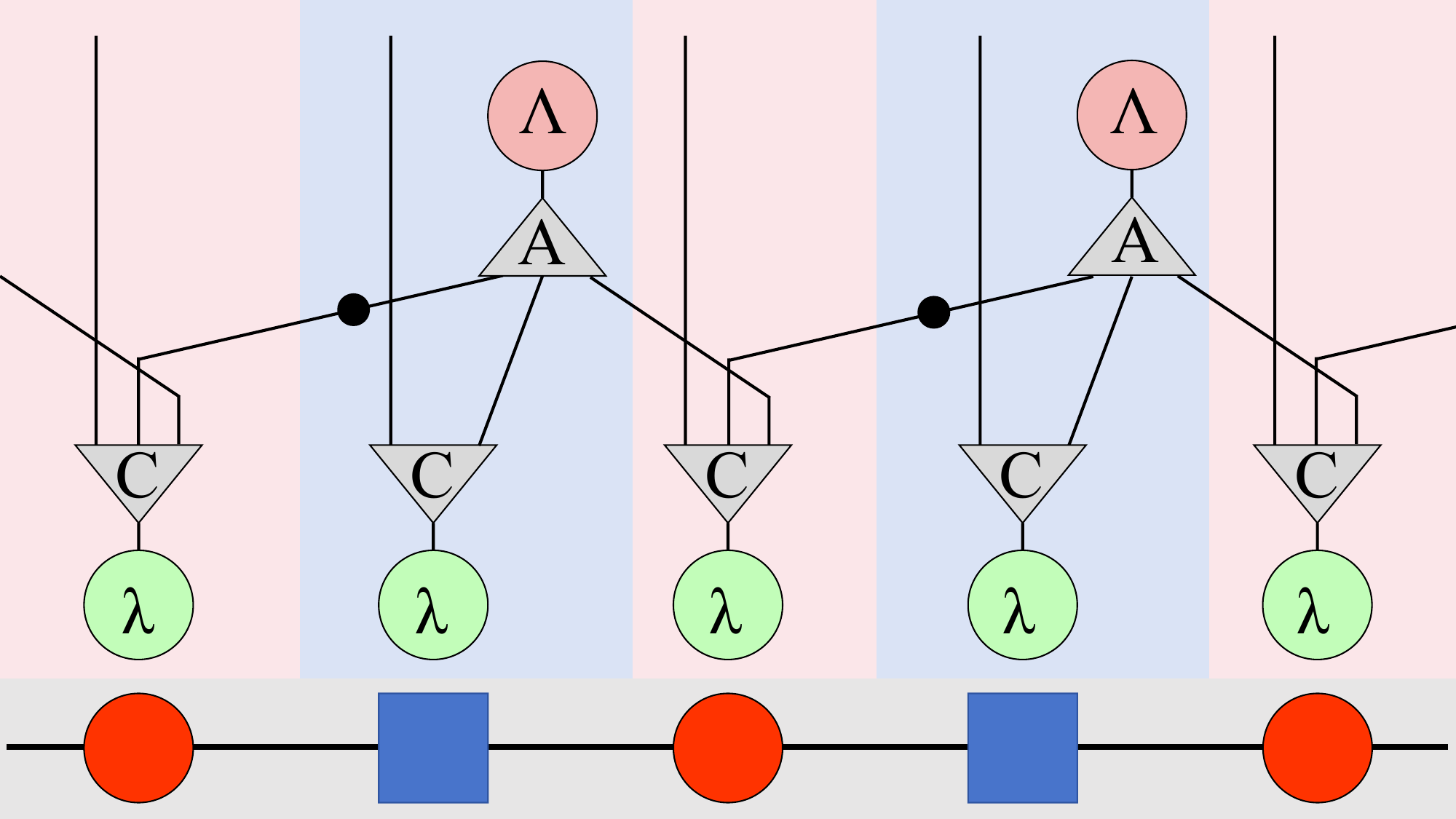}
\caption{Tensor network representation of weak Hopf cluster states. 
The bottom layer represents the cluster lattice, where red circles denote odd vertices 
and blue rectangles denote even vertices. 
The tensor network above the lattice is expressed in terms of the structure constants 
of the weak Hopf algebra.}
    \label{fig:WHAclusterTNrep}
\end{figure}

\subsection{Symmetry of the model}
\label{sec:symmetryCluster}

The correspondence between the cluster state model and the quantum double model can also be extended to their symmetry operators. 
For example, in the $\Zbb_2$ CSS-type cluster state, the symmetry operators are given by the $X$-string and $Z$-string operators. 
This correspondence can be further generalized to group-valued CSS-type cluster states and Hopf-qudit CSS-type cluster states~\cite{fechisin2023noninvertible,jia2024generalized}. 
We propose that the symmetries of the cluster state model can be characterized by ribbon operators, of which the $X$-string and $Z$-string operators are special cases. 
In particular, on an open manifold, the boundary modes are created by ribbon operators.

Ribbon operators are fundamental tools for studying topological excitations. 
In this subsection, following the works of~\cite{Kitaev2003,Bombin2008family,Cong2017,chen2021ribbon,jia2023boundary,Jia2023weak}, 
we first recall the construction of ribbon operators (see also~\cite{jia2023boundary,Jia2023weak} for more details). 
We then discuss their role in characterizing symmetry.

A direct triangle $\tau = (s_0, s_1, e)$ consists of two adjacent face-vertex links, $s_0$ and $s_1$, connected by a directed edge $e$. Similarly, a dual triangle $\tilde{\tau} = (s_0, s_1, \tilde{e})$ consists of two adjacent links $s_0$ and $s_1$ connected by a dual edge $\tilde{e}$ (by ``dual edge'' we mean an edge in dual lattice, we will draw dual edge as dashed line, the orientation of dual edge is obtained by rotating the direct edge $\pi/2$ clockwise). The triangle’s direction is defined as the direction from $s_0$ to $s_1$, which may or may not align with the edge direction.
A triangle is left-handed if its edge lies on the left when traversing in its positive direction, and right-handed otherwise. Left-handed and right-handed direct triangles are denoted by $\tau_L$ and $\tau_R$, respectively, with dual counterparts $\tilde{\tau}_L$ and $\tilde{\tau}_R$.
A ribbon $\rho$ is an ordered sequence of triangles $\tau_1, \dots, \tau_n$ with a consistent direction, such that $\partial_1 \tau_j = \partial_0 \tau_{j+1}$ for all $j$, and with no self-intersections. A ribbon is closed if it forms a loop, i.e., $\partial_1 \tau_n = \partial_0 \tau_1$. Direct and dual triangles in a directed ribbon must have opposite chiralities.
Ribbons are classified as type-A or type-B \cite{jia2023boundary,Jia2023weak}, depending on the chirality of their triangles. Type-A ribbons, denoted by $\rho_A$, consist of left-handed direct triangles and right-handed dual triangles. Conversely, type-B ribbons, denoted by $\rho_B$, consist of right-handed direct triangles and left-handed dual triangles.

The ribbon operator is constructed recursively, starting with the definition of the triangle operator and extending it through a recursive relation.
For triangles, we have (for $h,x\in H$ and $\varphi \in \hat{H}$)
\begin{align}
	&	\begin{aligned}
		\begin{tikzpicture}
			\draw[-latex,black] (1,0) -- (-1,0); 
			\node[ line width=0.2pt, dashed, draw opacity=0.5] (a) at (0,0.2){$x$};
			\draw[dotted, cyan, line width=0.5pt] (-1,0) -- (0,-1);
			\node[ line width=0.2pt, dashed, draw opacity=0.5] (a) at (-0.6,-0.7){$s_1$};
			\draw[dotted, cyan, line width=0.5pt] (1,0) -- (0,-1);
			\node[ line width=0.2pt, dashed, draw opacity=0.5] (a) at (0.6,-0.7){$s_0$};
			\draw[-stealth,gray, line width=3pt] (0.5,-0.4) -- (-0.5,-0.4); 
		\end{tikzpicture}
	\end{aligned}
	\quad
	\begin{aligned}
		F^{h,\varphi}( \tau_R)|x\rangle =  
		\varepsilon(h)  \ZL_{\varphi} |x\rangle
		=  \varepsilon(h) | x \leftharpoonup \hat{S}( \varphi) \rangle,
	\end{aligned} \label{eq:tri1} \\
	&\begin{aligned}
		\begin{tikzpicture}
			\draw[-latex,black] (-1,0) -- (1,0); 
			\node[ line width=0.2pt, dashed, draw opacity=0.5] (a) at (0,0.2){$x$};
			\draw[dotted, cyan, line width=0.5pt] (-1,0) -- (0,-1);
			\node[ line width=0.2pt, dashed, draw opacity=0.5] (a) at (-0.6,-0.7){$s_1$};
			\draw[dotted, cyan, line width=0.5pt] (1,0) -- (0,-1);
			\node[ line width=0.2pt, dashed, draw opacity=0.5] (a) at (0.6,-0.7){$s_0$};
			\draw[-stealth,gray, line width=3pt] (0.5,-0.4) -- (-0.5,-0.4); 
		\end{tikzpicture}
	\end{aligned}
	\quad
	\begin{aligned}
		F^{h,\varphi}( \tau_R)|x\rangle = 	\varepsilon(h)  \ZR_{\varphi} |x\rangle
		= \varepsilon(h) | \varphi \rightharpoonup x \rangle.
	\end{aligned}
\end{align}
	\begin{align}
		&	\begin{aligned}
			\begin{tikzpicture}
				\draw[-latex,dashed,black] (1,0) -- (-1,0); 
				\node[ line width=0.2pt, dashed, draw opacity=0.5] (a) at (0,0.2){$x$};
				\draw[dotted, cyan, line width=0.5pt] (-1,0) -- (0,-1);
				\node[ line width=0.2pt, dashed, draw opacity=0.5] (a) at (-0.6,-0.7){$s_1$};
				\draw[dotted, cyan, line width=0.5pt] (1,0) -- (0,-1);
				\node[ line width=0.2pt, dashed, draw opacity=0.5] (a) at (0.6,-0.7){$s_0$};
				\draw[-stealth,gray, line width=3pt] (0.5,-0.4) -- (-0.5,-0.4); 
			\end{tikzpicture}
		\end{aligned}
		\quad
		\begin{aligned}
			F^{h,\varphi}( \tilde{\tau}_R)|x\rangle =   \hat{\varepsilon}(\varphi) \XL_h |x\rangle
			= \hat{\varepsilon}(\varphi) |  x \triangleleft  S(h) \rangle  ,
		\end{aligned}\\
		&\begin{aligned}
			\begin{tikzpicture}
				\draw[-latex,dashed,black] (-1,0) -- (1,0); 
				\node[ line width=0.2pt, dashed, draw opacity=0.5] (a) at (0,0.2){$x$};
				\draw[dotted, cyan, line width=0.5pt] (-1,0) -- (0,-1);
				\node[ line width=0.2pt, dashed, draw opacity=0.5] (a) at (-0.6,-0.7){$s_1$};
				\draw[dotted, cyan, line width=0.5pt] (1,0) -- (0,-1);
				\node[ line width=0.2pt, dashed, draw opacity=0.5] (a) at (0.6,-0.7){$s_0$};
				\draw[-stealth,gray, line width=3pt] (0.5,-0.4) -- (-0.5,-0.4); 
			\end{tikzpicture}
		\end{aligned}
		\quad
		\begin{aligned}
			F^{h,\varphi}( \tilde{\tau}_R)|x\rangle =  
			\hat{\varepsilon}(\varphi) \XR_h |x\rangle= \hat{\varepsilon}(\varphi) | h\triangleright x \rangle.
		\end{aligned} \label{eq:tri8}
	\end{align}
\begin{align}
		&	\begin{aligned}
			\begin{tikzpicture}
				\draw[-latex,dashed,black] (1,0) -- (-1,0); 
				\node[ line width=0.2pt, dashed, draw opacity=0.5] (a) at (0,0.2){$x$};
				\draw[dotted, cyan, line width=0.5pt] (-1,0) -- (0,-1);
				\node[ line width=0.2pt, dashed, draw opacity=0.5] (a) at (-0.6,-0.7){$s_0$};
				\draw[dotted, cyan, line width=0.5pt] (1,0) -- (0,-1);
				\node[ line width=0.2pt, dashed, draw opacity=0.5] (a) at (0.6,-0.7){$s_1$};
				\draw[-stealth,gray, line width=3pt] (-0.5,-0.4) -- (0.5,-0.4); 
			\end{tikzpicture}
		\end{aligned}
		\quad
		\begin{aligned}
			F^{h,\varphi}( \tilde{\tau}_L)|x\rangle =  \hat{\varepsilon}(\varphi)   \tilde{\XL}_h |x\rangle = \hat{\varepsilon}(\varphi)  
			| x\triangleleft  h \rangle.
		\end{aligned}\\
		&\begin{aligned}
			\begin{tikzpicture}
				\draw[-latex,dashed,black] (-1,0) -- (1,0); 
				\node[ line width=0.2pt, dashed, draw opacity=0.5] (a) at (0,0.2){$x$};
				\draw[dotted, cyan, line width=0.5pt] (-1,0) -- (0,-1);
				\node[ line width=0.2pt, dashed, draw opacity=0.5] (a) at (-0.6,-0.7){$s_0$};
				\draw[dotted, cyan, line width=0.5pt] (1,0) -- (0,-1);
				\node[ line width=0.2pt, dashed, draw opacity=0.5] (a) at (0.6,-0.7){$s_1$};
				\draw[-stealth,gray, line width=3pt] (-0.5,-0.4) -- (0.5,-0.4); 
			\end{tikzpicture}
		\end{aligned}
		\quad
		\begin{aligned}
			F^{h,\varphi}( \tilde{\tau}_L)|x\rangle =   \hat{\varepsilon}(\varphi) \tilde{\XR}_h |x\rangle
			= \hat{\varepsilon}(\varphi)    	
			|S(h)\triangleright  x\rangle.
		\end{aligned}
	\end{align}
	\begin{align}
		&	\begin{aligned}
			\begin{tikzpicture}
				\draw[-latex,black] (1,0) -- (-1,0); 
				\node[ line width=0.2pt, dashed, draw opacity=0.5] (a) at (0,0.2){$x$};
				\draw[dotted, cyan, line width=0.5pt] (-1,0) -- (0,-1);
				\node[ line width=0.2pt, dashed, draw opacity=0.5] (a) at (-0.6,-0.7){$s_0$};
				\draw[dotted, cyan, line width=0.5pt] (1,0) -- (0,-1);
				\node[ line width=0.2pt, dashed, draw opacity=0.5] (a) at (0.6,-0.7){$s_1$};
				\draw[-stealth,gray, line width=3pt] (-0.5,-0.4) -- (0.5,-0.4); 
			\end{tikzpicture}
		\end{aligned}
		\quad
		\begin{aligned}
			F^{h,\varphi}( \tau_L)|x\rangle =  
			\varepsilon(h)  \tilde{\ZL}_{\varphi} |x\rangle
			=  \varepsilon(h) 
			| x \leftharpoonup \varphi \rangle ,
		\end{aligned}\\
		&\begin{aligned}
			\begin{tikzpicture}
				\draw[-latex,black] (-1,0) -- (1,0); 
				\node[ line width=0.2pt, dashed, draw opacity=0.5] (a) at (0,0.2){$x$};
				\draw[dotted, cyan, line width=0.5pt] (-1,0) -- (0,-1);
				\node[ line width=0.2pt, dashed, draw opacity=0.5] (a) at (-0.6,-0.7){$s_0$};
				\draw[dotted, cyan, line width=0.5pt] (1,0) -- (0,-1);
				\node[ line width=0.2pt, dashed, draw opacity=0.5] (a) at (0.6,-0.7){$s_1$};
				\draw[-stealth,gray, line width=3pt] (-0.5,-0.4) -- (0.5,-0.4); 
			\end{tikzpicture}
		\end{aligned}
		\quad
		\begin{aligned}
			F^{h,\varphi}( \tau_L)|x\rangle = 
				\varepsilon(h)  \tilde{\ZR}_{\varphi} |x\rangle
			= \varepsilon(h) 	|  \hat{S}( \varphi) \rightharpoonup  x \rangle.
		\end{aligned}  \label{eq:tri16}
	\end{align}

For a type-B ribbon $\rho = \rho_B$ and an element $h \otimes \varphi \in H \otimes \hat{H}$, the ribbon operator is defined as $F^{h,\varphi}(\rho) = F^{h \otimes \varphi}(\rho_B)$. The operator acts non-trivially only on the edges of $\rho_B$. Consider the decomposition $\rho = \rho_B = \rho_1 \cup \rho_2$, where both $\rho_1$ and $\rho_2$ are oriented in the same direction as $\rho$, and $\partial_1 \rho_1 = \partial_0 \rho_2$. For $h \otimes \varphi \in H^{\rm op} \otimes \hat{H}$, the ribbon operator on the composite ribbon is defined as
\begin{equation}
\begin{aligned}
    F^{h,\varphi}(\rho) = &\sum_{(h \otimes \varphi)} F^{(h \otimes \varphi)^{(1)}}(\rho_1) F^{(h \otimes \varphi)^{(2)}}(\rho_2) \\
= & \sum_k \sum_{(k), (h)} F^{h^{(1)}, \hat{k}}(\rho_1) F^{S(k^{(3)}) h^{(2)} k^{(1)}, \varphi(k^{(2)} \bullet)}(\rho_2).
\end{aligned}
\end{equation}
 where $\{k\}$ is an orthogonal basis of $H$ with $\{\hat{k}\}$ its dual basis and the comultiplication is taken in the dual weak Hopf algebra $D(H)^{\vee}$.
This definition is independent of the decomposition $\rho = \rho_1 \cup \rho_2$. A similar construction applies for type-A ribbons.

\paragraph{Symmetry for closed 1d ladder}
For closed 1d ladder on closed manifold $\mathbb{S}^1$, we claim that there are two symmetries comes from rough and smooth boundaries respectively:
\begin{equation}
    \Cocom(H) \times \Cocom(\hat{H}).
\end{equation}
For any $h \in \Cocom(H)$, the associated symmetry operator is given by $X$-string operator
\begin{equation}
    S_h = \sum_{(h)}\XR_{h^{\cone}} \otimes \XR_{h^{\ctwo}} \otimes \cdots \otimes \XR_{h^{(n)}} \\
    = \sum_{(h)}
    \begin{aligned}
\begin{tikzpicture}
    \def\n{5}
    \def\s{1}
    \fill[green!20] (0, 0) rectangle (\n*\s+\s, \s); 
    \foreach \i in {0,...,\n} {
        \draw[-stealth, line width=1.0pt,blue, midway] (\i*\s, 0) -- (\i*\s+\s, 0);
        \draw[dotted, line width=1.0pt,red, midway] (\i*\s, \s) -- (\i*\s+\s, \s);
        \draw[-stealth,line width=1.0pt, midway] (\i*\s, 0) -- (\i*\s, \s);
    }
    \draw[-stealth, midway,line width=1.0pt] (\n*\s+\s, 0) -- (\n*\s+\s, \s);
    \node[ line width=0.2pt, dashed, draw opacity=0.5] (a) at (.45,0.5){$\XR_{h^{\cone}}$};
    \node[ line width=0.2pt, dashed, draw opacity=0.5] (a) at (1.45,0.5){$\XR_{h^{\ctwo}}$};
    \node[ line width=0.2pt, dashed, draw opacity=0.5] (a) at (2.45,0.5){$\XR_{h^{\cthree}}$};
        \node[ line width=0.2pt, dashed, draw opacity=0.5] (a) at (4.45,0.5){$\cdots$};
     \node[ line width=0.2pt, dashed, draw opacity=0.5] (a) at (5.45,0.5){$\XR_{h^{(n)}}$};
      \draw[-stealth, white, line width=2pt, midway] (6, 0) -- (6, 1.02);
\end{tikzpicture}
\end{aligned}
\end{equation}
which acts on all vertex edges of the lattice (notice that we have assume the periodic boundary condition in the above lattice, i.e. $(n+1=1 )\operatorname{mod} n$ ).
It can be verified that this operator commutes with both the vertex and face terms of the Hamiltonian, i.e.,
\begin{equation}
    [S_h, \Av_{v_s}^{\lambda}] = [S_h, \Bf_f^{\Lambda}] = 0 \quad \forall v_s, f.
\end{equation}
For $\psi\in \Cocom(\hat{H})$, the symmetry operator is given by 
\begin{equation}
    T_{\psi}=\sum_{(\psi)} \ZR_{\psi^{\cone}} \otimes \ZR_{\psi^{\ctwo}} \otimes \cdots \otimes \ZR_{\psi^{(n)}}
    = \sum_{(\psi)}
    \begin{aligned}
\begin{tikzpicture}
    \def\n{5}
    \def\s{1}
    \fill[green!20] (0, 0) rectangle (\n*\s+\s, \s); 
    \foreach \i in {0,...,\n} {
        \draw[-stealth, line width=1.0pt,blue, midway] (\i*\s, 0) -- (\i*\s+\s, 0);
        \draw[dotted, line width=1.0pt,red, midway] (\i*\s, \s) -- (\i*\s+\s, \s);
        \draw[-stealth,line width=1.0pt, midway] (\i*\s, 0) -- (\i*\s, \s);
    }
    \draw[-stealth, midway,line width=1.0pt] (\n*\s+\s, 0) -- (\n*\s+\s, \s);
    \node[ line width=0.2pt, dashed, draw opacity=0.5] (a) at (.5,-0.5){\color{blue}$\ZR_{\psi^{\cone}}$};
    \node[ line width=0.2pt, dashed, draw opacity=0.5] (a) at (1.5,-0.5){\color{blue}$\ZR_{\psi^{\ctwo}}$};
    \node[ line width=0.2pt, dashed, draw opacity=0.5] (a) at (2.5,-0.5){\color{blue}$\ZR_{\psi^{\cthree}}$};
    \node[ line width=0.2pt, dashed, draw opacity=0.5] (a) at (4.5,-0.5){\color{blue}$\cdots$};
     \node[ line width=0.2pt, dashed, draw opacity=0.5] (a) at (5.5,-0.5){\color{blue}$\ZR_{\psi^{(n)}}$};
      \draw[-stealth, white, line width=2pt, midway] (6, 0) -- (6, 1.02);
\end{tikzpicture}
\end{aligned}
\end{equation}
which acts on all horizontal edges.
It can be proved that this operator commutes with both the vertex and face terms of the Hamiltonian, i.e.,
\begin{equation}
    [T_{\psi}, \Av_{v_s}^{\lambda}] = [T_{\psi}, \Bf_f^{\Lambda}] = 0 \quad \forall v, f.
\end{equation}
For $\Gamma \in \Rep(H)$, its character $\chi_{\Gamma}$ lies in $\Cocom(\hat{H})$. 
Therefore, the $H$ cluster state model also possesses $\Rep(H)$ as a sub-symmetry. 
While the fusion ring of $\Rep(H)$ has non-negative integer coefficients and $\Cocom(\hat{H})$ may involve complex coefficients, 
they share the same basis consisting of the characters of all irreducible representations (see Proposition~\ref{prop:characterBasis}). 
With a slight abuse of terminology, one may also regard $\Cocom(\hat{H})$ as being equivalent to $\Rep(H)$ as symmetries of the model.

There is another way to derive the symmetry operators and to understand the symmetries of the model.
An 1d closed cluster lattice can also be regarded as a bicycle-wheel quantum double lattice, which is a specific cellulation of a disk. We assign a weak Hopf qudit to each edge, corresponding to the sites of the cluster lattice. The boundary edges correspond to even sites, while the internal bulk edges (drawn in red here) correspond to odd sites. For example, we have
\begin{equation}
\begin{aligned}
		\begin{tikzpicture}
               \filldraw[color=blue!60, fill=green!20, very thick](0,0) circle (1.5);
                \draw[dotted,gray,very thick] (0.95,0.95) -- (0,0.7) -- (-0.95,0.95)-- (-0.7,0) --(-0.95,-0.95)--(0,-0.7)--(0.95,-0.95)--(0.7,0)--(0.95,0.95);
			\draw[latex-,red!60,very thick] (0,0) -- (0,-1.5); 
			\draw[latex-,red!60,very thick] (0,0) -- (0,1.5);  
   			\draw[latex-,red!60,very thick] (0,0) -- (1.5,0);  
         	\draw[latex-,red!60,very thick] (0,0) -- (-1.5,0); 
                \draw[blue!60, very thick, -latex] (1.5,0) arc (0:20:1.5);
                \draw[blue!60, very thick, -latex] (0,1.5) arc (90:110:1.5);
                \draw[blue!60, very thick, -latex] (-1.5,0) arc (180:200:1.5);
                \draw[blue!60, very thick, -latex] (0,-1.5) arc (270:290:1.5);
                \draw[blue!60, fill=blue!60,very thick] (0.95,0.95) rectangle ++(0.2,.2);
                \draw[blue!60, fill=blue!60,very thick] (0.95,0.95) rectangle ++(0.2,.2);
                \draw[blue!60, fill=blue!60,very thick] (-0.95,0.95) rectangle ++(-0.2,.2);
                \draw[blue!60, fill=blue!60,very thick] (-0.95,-0.95) rectangle ++(-0.2,-.2);
                \draw[blue!60, fill=blue!60,very thick] (0.95,-0.95) rectangle ++(0.2,-.2);
			\draw [fill = red!60,red!60] (0.7,0) circle (0.15);
               \draw [fill = red!60,red!60] (-0.7,0) circle (0.15);
               \draw [fill = red!60,red!60] (0,0.7) circle (0.15);
               \draw [fill = red!60,red!60] (0,-0.7) circle (0.15);
		\end{tikzpicture}
	\end{aligned} \quad \quad  \Leftrightarrow \quad 
     \begin{aligned}
		\begin{tikzpicture}
                \draw[dotted,gray,very thick] (0.95,0.95) -- (0,0.7) -- (-0.95,0.95)-- (-0.7,0) --(-0.95,-0.95)--(0,-0.7)--(0.95,-0.95)--(0.7,0)--(0.95,0.95);
                \draw[blue!60, fill=blue!60,very thick] (0.95,0.95) rectangle ++(0.2,.2);
                \draw[blue!60, fill=blue!60,very thick] (0.95,0.95) rectangle ++(0.2,.2);
                \draw[blue!60, fill=blue!60,very thick] (-0.95,0.95) rectangle ++(-0.2,.2);
                \draw[blue!60, fill=blue!60,very thick] (-0.95,-0.95) rectangle ++(-0.2,-.2);
                \draw[blue!60, fill=blue!60,very thick] (0.95,-0.95) rectangle ++(0.2,-.2);
			\draw [fill = red!60,red!60] (0.7,0) circle (0.15);
               \draw [fill = red!60,red!60] (-0.7,0) circle (0.15);
               \draw [fill = red!60,red!60] (0,0.7) circle (0.15);
               \draw [fill = red!60,red!60] (0,-0.7) circle (0.15);
		\end{tikzpicture}
	\end{aligned}
\end{equation}
The solid lattice represents the quantum double lattice, while the dotted lattice represents the cluster lattice, where the qudits are placed on the vertices. In the context of the cluster state model, qudits are typically placed on the vertices of the cluster lattice. To distinguish these from the vertices of the quantum double lattice, we refer to them as sites. Even sites are depicted as blue squares, and odd sites as red dots (see also Figure~5 of Ref.~\cite{jia2024generalized} for an illustration of how the 1d cluster lattice is folded into a 2d configuration).
Each bicycle-wheel quantum double lattice corresponds to a 1d cluster lattice, and every 1d cluster lattice can be transformed into a bicycle-wheel quantum double lattice. For convenience, we will use these two lattices interchangeably.

The symmetry operators can be easily understood from the perspective of the bicycle-wheel lattice. In this framework, the rough boundary corresponds to a vertex in the bicycle-wheel lattice. Consider a closed ribbon $\sigma$ surrounding the vertex $w$, consisting exclusively of dual triangles (illustrated in gray as follows):

\begin{equation*}
    \begin{aligned}
\begin{tikzpicture}
    \filldraw[color=blue!60, fill=green!20, very thick](0,0) circle (1.5);
    
    \filldraw[fill=gray!20, draw=black, dotted, thick] (-0.6,-0.6) rectangle (0.6,0.6);
    
    \draw[dotted, cyan, thick] (-0.6,-0.6) -- (0,0);
    \draw[dotted, cyan, thick] (-0.6,0.6) -- (0,0);
    \draw[dotted, cyan, thick] (0.6,0.6) -- (0,0);
    \draw[dotted, cyan, thick] (0.6,-0.6) -- (0,0);
    
    \draw[latex-,red!60,very thick] (0,0) -- (0,-1.5); 
    \draw[latex-,red!60,very thick] (0,0) -- (0,1.5);  
    \draw[latex-,red!60,very thick] (0,0) -- (1.5,0);  
    \draw[latex-,red!60,very thick] (0,0) -- (-1.5,0); 
    
    \draw[blue!60, very thick, -latex] (1.5,0) arc (0:20:1.5);
    \draw[blue!60, very thick, -latex] (0,1.5) arc (90:110:1.5);
    \draw[blue!60, very thick, -latex] (-1.5,0) arc (180:200:1.5);
    \draw[blue!60, very thick, -latex] (0,-1.5) arc (270:290:1.5);
    
    \draw[gray, very thick, -to] (0.4,0) arc[start angle=0, end angle=290, radius=0.4];
\end{tikzpicture}
	\end{aligned}
\end{equation*}
The corresponding ribbon operator is simply the vertex operator 
\begin{equation}
    F^{h, \hat{1}}(\sigma) = \Av_w^h,
\end{equation}
as there is no direct triangle. Thus, the operator is independent of the unit $\hat{1} \in \hat{H}$ (we retain this $\hat{1}$ to emphasize that this is a special case of a more general situation, which will be discussed later). In the finite group case, this operator reduces to the Pauli $X$-string operator. Note that this symmetry operator acts only on the red edges, which correspond to the odd sites of the cluster lattice.

It is clear $[\Av_w^h, \Av_v] = 0$ for all vertex operators $A_v$ on the smooth boundary (depicted as the blue circle). This is because, for the two endpoints of an edge, one operator acts as $\XR$ and the other as $\XL$, which commute with each other. 
Using a result from \cite[Corollary 1]{chang2014kitaev}, if $h$ is cocommutative, then $[\Av_w^h, \Bf_f] = 0$ for all face operators. This implies that the symmetry arising from the rough boundary is 
\begin{equation}
    \operatorname{Sym}_{\mathrm{rough}} = \operatorname{Cocom}(H),
\end{equation}
namely, the set of all cocommutative elements in $H$. $h\in \operatorname{Cocom}(H)$
For $H = \mathbb{C}[G]$, it follows that $\operatorname{Cocom}(\mathbb{C}[G]) = \mathbb{C}[G]$, and thus the symmetry is $\mathbb{C}[G]$. This conclusion also aligns with a macroscopic perspective, where we assert that the rough boundary symmetry is $\operatorname{Rep}(\hat{H})$, which can be embedded into $H$. Note that all characters are cocommutative, as $\chi_\Gamma(fg) = \chi_\Gamma(gf)$ for all $\Gamma \in \operatorname{Rep}(\hat{H})$. Here, the comultiplication of $\chi_\Gamma$ is defined in Definition~\ref{def:pairingDef}.

For the smooth boundary, we can further deform the bicycle-wheel lattice into a cone as follows:
\begin{equation*}
    \begin{tikzpicture}
    \fill[gray!70] (0,0) ellipse (1.5 and 0.5); 
   
    \draw[blue!60, very thick, -latex] (1.5,0) arc[start angle=0, end angle=60, x radius=1.5, y radius=0.5];

    \draw[blue!60, very thick, -latex] (1.5,0) arc[start angle=0, end angle=130, x radius=1.5, y radius=0.5];
    \draw[blue!60, very thick, -latex] (1.5,0) arc[start angle=0, end angle=220, x radius=1.5, y radius=0.5];
    \draw[blue!60, very thick, -latex] (1.5,0) arc[start angle=0, end angle=320, x radius=1.5, y radius=0.5];
    \draw[blue!60, very thick] (0,0) ellipse (1.5 and 0.5); 

    \coordinate (Apex) at (0,1.5); 

    \fill[green!20, opacity=0.7] (Apex) -- (-1.5,0) arc[start angle=180, end angle=360, x radius=1.5, y radius=0.5] -- cycle;

    \draw[thick, red!60] (1.52,0) -- node[midway] {} (Apex);  
    \draw[thick, red!60] (-1.52,0) -- node[midway] {} (Apex); 
    \draw[thick, red!60] (0.3,0.45) -- node[midway] {} (Apex);   
    \draw[thick, red!60] (-0.3,-0.53) -- node[midway] {} (Apex);  

    \coordinate (RightMid) at (1.52,0); 
    \draw[thick, red!60, -latex] (RightMid) -- node[midway, left] {} ($ (Apex)!0.5!(1.52,0) $);  

    \coordinate (LeftMid) at (-1.52,0); 
    \draw[thick, red!60, -latex] (LeftMid) -- node[midway, left] {} ($ (Apex)!0.5!(-1.52,0) $);  

    \coordinate (TopMid) at (0.3,0.45); 
    \draw[thick, red!60, -latex] (TopMid) -- node[midway, right] {} ($ (Apex)!0.5!(0.3,0.45) $);  

    \coordinate (BottomMid) at (-0.3,-0.53); 
    \draw[thick, red!60, -latex] (BottomMid) -- node[midway, right] {} ($ (Apex)!0.5!(-0.3,-0.53) $);  

    \draw[cyan, dotted] (0,0) -- (1.52,0);  
    \draw[cyan, dotted] (0,0) -- (-1.52,0); 
    \draw[cyan, dotted] (0,0) -- (0.3,0.53);  
    \draw[cyan, dotted] (0,0) -- (-0.3,-0.53);  
    \end{tikzpicture}
\end{equation*}
On the base face, we also have a ribbon $\xi$ consisting only of direct triangles surrounding the face $f_s$. The corresponding symmetry ribbon operator is simply the face operator 
\begin{equation}
    F^{1,\varphi}(\xi) = \Bf_{f_s}^\varphi,
\end{equation}
as there are no dual triangles. Thus, the operator does not depend on elements in $H$. 

The symmetry operator $\Bf_{f_s}^\varphi$ commutes with all face operators of the cluster state model because, on the two sides of an edge, the operators are chosen as $\ZR$ and $\ZL$ respectively, which always commute. Moreover, if $\varphi$ is cocommutative, a result from Ref.~\cite[Corollary 1]{chang2014kitaev} shows that $\Bf_{f_s}^\varphi$ also commutes with all vertex operators of the cluster state model. 

This implies that the symmetry arising from the smooth boundary is 
\begin{equation}
    \operatorname{Sym}_{\mathrm{smooth}} = \Cocom(\hat{H}),
\end{equation}
namely, the subalgebra of all cocommutative elements in $\hat{H}$. 
From the macroscopic analysis, all characters in $\operatorname{Rep}(H)$ are cocommutative elements in $\hat{H}$, which matches well with the result obtained from the lattice here.

The symmetry operators arising from the rough and smooth boundaries commute with each other, as they act non-trivially on different weak Hopf qudits. This is analogous to the $\mathbb{Z}_2$ cluster state (as well as finite-group or Hopf cluster states), where one symmetry operator acts on odd sites and the other acts on even sites.

\begin{theorem}
    The weak Hopf cluster state model on a closed manifold exhibits the symmetry
    \begin{equation}
        \operatorname{Sym}_{\rm closed\, cluster} = \Cocom(H) \times \Cocom(\hat{H}),
    \end{equation}
    which is a subset of \( H \times \hat{H} \). 
    Moreover, note that \( \operatorname{Gr}(\Rep(H)) \) is a subalgebra of \( \Cocom(\hat{H}) \) in the sense that fusion coefficients are non-negative integers.  Consequently, this symmetry contains a weak Hopf sub-symmetry \( \Cocom(H) \times \Rep(H) \).
\end{theorem}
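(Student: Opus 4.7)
The plan is to establish the symmetry claim by exhibiting explicit string operators $S_h$ and $T_\psi$ for $h\in\Cocom(H)$, $\psi\in\Cocom(\hat{H})$, verifying they commute with every local stabilizer, and showing these two families mutually commute and generate independent algebras. I will leverage the bicycle-wheel reinterpretation of the closed 1d ladder: the rough boundary collapses to a single interior vertex $w$ while the smooth boundary closes to a single outer face $f_s$. Under this identification, $S_h$ becomes the ribbon operator $F^{h,\hat{1}}(\sigma)$ for a closed dual ribbon $\sigma$ encircling $w$, which is literally the quantum double vertex operator $\Av_w^h$; and $T_\psi$ becomes $F^{1,\psi}(\xi) = \Bf_{f_s}^\psi$ for a closed direct ribbon $\xi$ around $f_s$. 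Recasting the symmetry operators as standard ribbon operators reduces the problem to known facts about the weak Hopf quantum double.

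First I would check locality and commutation with the cluster stabilizers. Each Hamiltonian term $\Av_{v_s}$ and $\Bf_f$ sits at a site $(v_s,f)$ on the smooth boundary / bulk, and is by construction the Haar-integral specialization of $\Av_{v_s}^\lambda$ and $\Bf_f^\Lambda$ of the quantum double. By Corollary 1 of Ref.~\cite{chang2014kitaev}, a vertex ribbon operator $\Av_w^h$ commutes with every face operator $\Bf_f^{\Lambda}$ precisely when $h\in\Cocom(H)$; likewise $\Bf_{f_s}^\psi$ commutes with every vertex operator $\Av_{v_s}^\lambda$ precisely when $\psi\in\Cocom(\hat{H})$. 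The commutation with the same-type stabilizers is automatic since the two endpoints of any shared edge contribute $\XR$ versus $\XL$ (respectively $\ZR$ vs $\ZL$), which commute. This yields $[S_h, \Hbb_{\rm cluster}] = [T_\psi, \Hbb_{\rm cluster}] = 0$ for all $h\in\Cocom(H)$ and $\psi\in\Cocom(\hat{H})$.

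Next, I would verify that $S_h$ and $T_\psi$ commute with each other and define a genuine $\Cocom(H)\times\Cocom(\hat{H})$ action. In the cluster lattice picture, $S_h$ acts only on the vertical (odd-site) edges while $T_\psi$ acts only on the horizontal smooth-boundary (even-site) edges; since these are disjoint sets of qudits, the commutator vanishes immediately. To confirm the algebra structure, I would use that cocommutativity makes $\Delta_{n-1}(h)$ and $\hat{\Delta}_{n-1}(\psi)$ invariant under cyclic permutation of tensor factors, which in turn implies $S_{h}S_{h'} = S_{hh'}$ and $T_\psi T_{\psi'} = T_{\psi\psi'}$ on the lattice via multiplicativity of $\Delta$ and $\hat{\Delta}$. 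Faithfulness of the representation on the ground space is established by a standard argument: nontrivial elements of $\Cocom(H)$ produce nontrivial ribbon operators in the quantum double, and the quantum-double ground state carries a faithful representation of the boundary symmetry algebra.

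The last step, the embedding $\operatorname{Gr}(\Rep(H))\hookrightarrow\Cocom(\hat{H})$, follows from Proposition~\ref{prop:characterBasis}: the irreducible characters $\chi_\Gamma$ span $\Cocom(\hat{H})$ over $\Cbb$, and they satisfy $\chi_\Gamma\cdot\chi_\Phi = \sum_\Psi N_{\Gamma\Phi}^\Psi\chi_\Psi$ under the convolution product in $\hat{H}$, matching the fusion ring. Setting $T_{\chi_\Gamma} =: W_\Gamma$ gives operators obeying the $\Rep(H)$ fusion rules, and combined with the $\Cocom(H)$ factor provides the sub-symmetry $\Cocom(H)\times\Rep(H)$. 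The main obstacle I anticipate is the commutation of $\Av_w^h$ with bulk face terms $\Bf_f^\Lambda$ for non-cocommutative $h$; proving that cocommutativity is not only sufficient but essentially forced requires a careful analysis of Eqs.~\eqref{eq:rep-comm-rel-1}--\eqref{eq:rep-comm-rel-2}, and I would need to invoke the straightening relation in $D(H)$ together with the defining property of the Haar measure $\Lambda$ to pin down the exact commutant. A subtlety worth flagging is that when $H$ is anomalous (i.e.\ $\Delta(1_H)\neq 1_H\otimes 1_H$), the operator $W_{1_H}$ is not the identity, so the $\Rep(H)$ action is that of the fusion ring rather than a group, consistent with the non-invertible character of the symmetry.
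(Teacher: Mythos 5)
Your proposal follows essentially the same route as the paper: the same string operators $S_h$ and $T_\psi$, the same bicycle-wheel reinterpretation identifying them with the ribbon operators $F^{h,\hat 1}(\sigma)=\Av_w^h$ and $F^{1,\psi}(\xi)=\Bf_{f_s}^\psi$, the same appeal to Corollary 1 of Ref.~\cite{chang2014kitaev} for commutation with the stabilizers when $h$ and $\psi$ are cocommutative, the same disjoint-support argument for mutual commutation, and the same use of Proposition~\ref{prop:characterBasis} to embed $\operatorname{Gr}(\Rep(H))$ into $\Cocom(\hat H)$. The extra items you flag (faithfulness on the ground space, and whether cocommutativity is \emph{necessary}) go beyond what the theorem asserts and are not addressed in the paper either, but they do not affect the correctness of your argument.
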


We should also consider the locality of the symmetry operators, namely, they should preserve locality by mapping local operators to nearby local operators. 
In our model, this locality condition is indeed satisfied. 
This follows directly from the fact that the symmetry operators are a special class of ribbon operators. 
According to Ref.~\cite[Lemma~5]{Jia2023weak}, ribbon operators map local operators to local operators. 
A subtlety arises from the fact that the symmetry operator is not on-site in the usual sense; its locality-preserving property should be understood in a generalized sense, involving the comultiplication structure.

\paragraph{Symmetry for open 1d ladder}
For an open 1d manifold, the corresponding weak Hopf cluster state model exhibits a larger weak Hopf symmetry: $H \times \hat{H}$. Since the symmetry ribbons on the rough and smooth boundaries are not required to be closed ribbons, the elements assigned to the ribbon operators are no longer constrained to be cocommutative. Nonetheless, the symmetry ribbon operators still commute with the vertex and face operators in the cluster state model \cite{Jia2023weak,chang2014kitaev}. 

In the case of a finite group, this is analogous to the group symmetry and dual group symmetry $G \times \hat{G}$ \cite{Ji2020categoricalsym, Kong2020algebraic}. Using our notation, $\Cocom(\mathbb{C}[G]) = \mathbb{C}[G]$ and $\Cocom(\mathbb{C}[G]^{\vee}) = \Rep(G)$, indicating that both the closed ladder and open ladder share the same symmetry, given by $\mathbb{C}[G] \times \Rep(G)$.

When acting on the ground state of the cluster state model, the symmetry ribbon operators create edge modes at the two ends of the open 1D lattice. Consider a more general ribbon operator as illustrated below:

\begin{equation*}
   \begin{aligned}
\begin{tikzpicture}
    \def\n{5}
    \def\s{1}
    \fill[green!20] (0, 0) rectangle (\n*\s+\s, \s); 
    \foreach \i in {0,...,\n} {
        \draw[-stealth, line width=1.0pt,blue, midway] (\i*\s, 0) -- (\i*\s+\s, 0);
        \draw[dotted, line width=1.0pt,red, midway] (\i*\s, \s) -- (\i*\s+\s, \s);
        \draw[-stealth,line width=1.0pt, midway] (\i*\s, 0) -- (\i*\s, \s);
    }
    \draw[-stealth, midway,line width=1.0pt] (\n*\s+\s, 0) -- (\n*\s+\s, \s);

\filldraw[gray!10, opacity=0.7] (0,0) -- (-0.5,0.5) -- (6.5, 0.5) -- (6,0) -- cycle;

   \draw[black,dashed, line width = 1pt] (-0.5,0.5) -- (6.5,0.5);
   \draw[cyan,dotted, line width = 1pt] (0,0) -- (-0.5,0.5);
   \draw[cyan,dotted, line width = 1pt] (0,0) -- (0.5,0.5);
  \draw[cyan,dotted, line width = 1pt] (1,0) -- (0.5,0.5);
  \draw[cyan,dotted, line width = 1pt] (1,0) -- (1.5,0.5);
    \draw[cyan,dotted, line width = 1pt] (2,0) -- (1.5,0.5);
        \draw[cyan,dotted, line width = 1pt] (2,0) -- (2.5,0.5);
          \draw[cyan,dotted, line width = 1pt] (3,0) -- (2.5,0.5);
                \draw[cyan,dotted, line width = 1pt] (3,0) -- (3.5,0.5);
 \draw[cyan,dotted, line width = 1pt] (4,0) -- (3.5,0.5);
\draw[cyan,dotted, line width = 1pt] (4,0) -- (4.5,0.5);
\draw[cyan,dotted, line width = 1pt] (5,0) -- (4.5,0.5);
\draw[cyan,dotted, line width = 1pt] (5,0) -- (5.5,0.5);
\draw[cyan,dotted, line width = 1pt] (6,0) -- (5.5,0.5);
\draw[cyan,dotted, line width = 1pt] (6,0) -- (6.5,0.5);
\end{tikzpicture}
\end{aligned} 
\end{equation*}
In this case, the ribbon operator $F^{h,\varphi}(\rho)$ is obtained. A result in \cite[Proposition 3]{Jia2023weak} guarantees that $F^{h,\varphi}(\rho)$ commutes with all vertex and face operators of the cluster state model. At the two ends of the boundary, it creates edge modes.

\begin{theorem}
    The weak Hopf cluster state model on an open manifold exhibits the weak Hopf symmetry
    \begin{equation}
        \operatorname{Sym}_{\rm open\, cluster} = H \times \hat{H}.
    \end{equation}
    This symmetry contains a weak Hopf sub-symmetry \( H \times \Rep(H) \).
\end{theorem}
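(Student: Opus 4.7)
The plan is to exhibit explicit representations of $H$ and $\hat{H}$ by \emph{open} ribbon operators along the rough and smooth boundaries of the open ladder, and to verify that these operators commute with all local stabilizers of $\Hbb_{\rm cluster}$. First, for each $h \in H$, I would define the $X$-type symmetry operator
\[
S_h = \sum_{(h)} \XR_{h^{(1)}} \otimes \XR_{h^{(2)}} \otimes \cdots \otimes \XR_{h^{(n)}},
\]
acting on the $n$ vertical (bulk) edges. This is the open ribbon operator $F^{h,\hat{1}}(\rho)$ with $\rho$ running along the rough boundary and consisting purely of dual triangles. By Ref.~\cite[Prop.~3]{Jia2023weak}, such a ribbon operator commutes with every smooth-boundary vertex operator $\Av_{v_s}$ and every face operator $\Bf_f$ in the bulk. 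The cocommutativity of $h$ that was required in the closed case was needed only to enforce wrap-around consistency at the single link where the ribbon closes; on an open manifold, no such closure constraint arises, and arbitrary $h \in H$ are permitted.

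Second, for each $\varphi \in \hat{H}$, define the $Z$-type symmetry operator
\[
T_\varphi = \sum_{(\varphi)} \ZR_{\varphi^{(1)}} \otimes \ZR_{\varphi^{(2)}} \otimes \cdots \otimes \ZR_{\varphi^{(n)}},
\]
acting on the smooth boundary edges. This is the open ribbon operator $F^{1,\varphi}(\xi)$ on a ribbon $\xi$ of direct triangles along the smooth boundary. The same ribbon commutation lemma gives $[T_\varphi,\Av_{v_s}] = [T_\varphi,\Bf_f] = 0$, and again no cocommutativity of $\varphi$ is needed since the ribbon terminates at the two open ends. The algebraic structure is then immediate: $S_g S_h = S_{gh}$ because $\Delta_{n-1}:H \to H^{\otimes n}$ is an algebra homomorphism, and $T_\varphi T_\psi = T_{\varphi\psi}$ follows dually from the multiplication on $\hat{H}$ being transported through $\hat{\Delta}_{n-1}$. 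Since $S_h$ acts on vertical edges only and $T_\varphi$ on smooth-boundary horizontal edges, they act on disjoint qudits, so $[S_h,T_\varphi]=0$, establishing the $H \times \hat{H}$ representation commuting with $\Hbb_{\rm cluster}$.

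Finally, the $H \times \Rep(H)$ sub-symmetry follows from the embedding $\operatorname{Gr}(\Rep(H)) \hookrightarrow \Cocom(\hat{H}) \subset \hat{H}$ via irreducible characters established in Proposition~\ref{prop:characterBasis}. The operators $\{T_{\chi_\Gamma}\}_{\Gamma \in \Irr(H)}$ then satisfy the fusion rule $T_{\chi_\Gamma} T_{\chi_\Phi} = \sum_\Psi N^\Psi_{\Gamma\Phi}\, T_{\chi_\Psi}$, and together with the $S_h$'s they realize the claimed sub-symmetry $H \times \Rep(H)$.

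The main obstacle will be a careful verification at the two open ends of the ladder: while Ref.~\cite[Prop.~3]{Jia2023weak} gives the general bulk commutation, one must check that no residual consistency condition is generated at the boundary vertices/faces where the ribbon terminates. Concretely, this amounts to confirming that the vertex and face operators absent from the Hamiltonian at the open ends are precisely those whose commutation relation with $S_h$ (resp.\ $T_\varphi$) would have forced $h \in \Cocom(H)$ (resp.\ $\varphi \in \Cocom(\hat{H})$). Once this bookkeeping is done, the weak Hopf sub-symmetry statement follows automatically from the character embedding.
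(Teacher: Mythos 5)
Your proposal is correct and follows essentially the same route as the paper: both arguments rest on identifying the symmetry operators with \emph{open} ribbon operators, invoking \cite[Proposition 3]{Jia2023weak} for commutation with the bulk vertex and face operators, and observing that the cocommutativity constraint of the closed case disappears because an open ribbon need not close up (the paper phrases this via a single combined ribbon $F^{h,\varphi}(\rho)$ through the ladder, whereas you split it into separate $X$-type and $Z$-type strings, which is only a presentational difference). Your closing remark about bookkeeping at the two open ends is at the same level of rigor as the paper's own treatment, which simply notes that the ribbon endpoints create edge modes there.
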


\begin{remark}[String-net realization]
In the multifusion string-net framework \cite{Levin2005, Kitaev2012boundary, jia2024weakTube}, the input data is a unitary multifusion category $\eC = \Rep(H)$ associated with a weak Hopf gauge symmetry $H$. The smooth boundary is described by the $\eC$-module category $\eM_s = \eC$, whose boundary excitations are given by 
\[
\eB_s = \mathsf{Fun}_{\eC}(\eM_s, \eM_s) = \eC,
\]
where $\mathsf{Fun}_{\eC}(\eC, \eC)$ denotes the category of all $\eC$-module functors from $\eM_s$ to itself.  

For the ordinary Hopf case, the rough boundary is described by the $\eC$-module category $\eM = \mathsf{Vect}_{\Cbb}$, where $\mathsf{Vect}_{\Cbb}$ denotes the category of all finite-dimensional complex vector spaces. The corresponding rough boundary excitations are
\[
\eB_r = \mathsf{Fun}_{\eC}(\mathsf{Vect}_{\Cbb}, \mathsf{Vect}_{\Cbb}) = \Rep(\hat{H}).
\]
Using the Kitaev–Kong construction, one can also obtain a lattice realization for the cluster state model within this framework.  

However, for a general anomalous weak Hopf symmetry, $\mathsf{Vect}_{\Cbb}$ is not a module category over $\Rep(H)$—equivalently, $\Cbb$ is not a comodule algebra over $\Rep(H)$. Nevertheless, the cluster state model can still be constructed by removing the degrees of freedom on the rough boundary. The resulting phase can then be understood as a partially symmetry-broken phase.
\end{remark}


\section{Weak Hopf cluster ladder model II: general construction}
\label{sec:latticeII}
In this section, we extend the cluster state model to the more general cluster ladder model, where the topological boundary conditions on the two boundaries can be chosen arbitrarily.

\subsection{Weak Hopf cluster ladder model}

For the most general case, consider a sandwich manifold \(\Sigma \times [0,1]\). For the bulk \(\Sigma \times (0,1)\), we assign a weak Hopf algebra. On the symmetry boundary \(\Sigma \times \{0\}\), we assign a weak Hopf comodule algebra \(K\). Depending on the orientation of the boundary, we can choose either left or right comodule algebras as input data \cite{jia2023boundary, Jia2023weak}. Similarly, on the physical boundary \(\Sigma \times \{1\}\), we assign a weak Hopf comodule algebra \(J\).

The compactification over the interval \([0,1]\) can be realized by considering an ultra-thin sandwich lattice, which is a ladder lattice denoted by \(\mathbb{M}^1\):
\begin{equation}\label{eq:ClusterLadder}
\begin{aligned}
\begin{tikzpicture}
    \def\n{5}
    \def\s{1}
    
    \fill[green!20] (0, 0) rectangle (\n*\s+\s, \s); 

    \foreach \i in {0,...,\n} {
        \draw[-stealth, line width=1.0pt,blue, midway] (\i*\s, 0) -- (\i*\s+\s, 0);
        \draw[-stealth, line width=1.0pt,red, midway] (\i*\s, \s) -- (\i*\s+\s, \s);
        \draw[-stealth,line width=1.0pt, midway] (\i*\s, 0) -- (\i*\s, \s);
    }
    \draw[-stealth, midway,line width=1.0pt] (\n*\s+\s, 0) -- (\n*\s+\s, \s);
\end{tikzpicture}
\end{aligned}
\end{equation}
The edges for the symmetry boundary are drawn in blue, the edges for the physical boundary are drawn in red, and the bulk edges are drawn in black. Note that in the above lattice, the bulk is positioned on the left side of the symmetry boundary and on the right side of the physical boundary when traversing along the positive direction of the boundary. Consequently, the corresponding comodule algebras \(K\) and \(J\) must be chosen as a left \(H\)-comodule algebra and a right \(H\)-comodule algebra, respectively. 

We will adopt Sweedler's notation for the left $H$-comodule algebra \(K\) as \(\beta_K(x) = \sum_{[x]} x^{[-1]} \otimes x^{[0]}\), where \(x^{[-1]} \in H\) and \(x^{[0]} \in K\). Similarly, for the right $H$-comodule algebra \(J\), we write \(\beta_J(x) = \sum_{[x]} x^{[0]} \otimes x^{[1]}\), where \(x^{[0]} \in J\) and \(x^{[1]} \in H\).

For the symmetry boundary, we assign comodule algebra $\mathcal{H}_{e_s}=K$ on each boundary edge.
To build the boundary local stabilizers, we need the notion of \emph{symmetric separability idempotent} of an algebra $K$ (see \cite{aguiar2000note,koppen2020defects,jia2023boundary,Jia2023weak}).
By definition, it is an element $\zeta = \sum_{\langle \zeta \rangle} \zeta^{\langle 1 \rangle} \otimes \zeta^{\langle 2 \rangle} \in K \otimes K$ such that
\begin{enumerate}
    \item $\sum_{\langle \zeta \rangle} x \zeta^{\langle 1 \rangle} \otimes \zeta^{\langle 2 \rangle} = \sum_{\langle \zeta \rangle} \zeta^{\langle 1 \rangle} \otimes \zeta^{\langle 2 \rangle} x$ for all $x \in K$;
    \item $\sum_{\langle \zeta \rangle} \zeta^{\langle 1 \rangle} \zeta^{\langle 2 \rangle} = 1$;
    \item $\sum_{\langle \zeta \rangle} \zeta^{\langle 1 \rangle} \otimes \zeta^{\langle 2 \rangle} = \sum_{\langle \zeta \rangle} \zeta^{\langle 2 \rangle} \otimes \zeta^{\langle 1 \rangle}$.
\end{enumerate}
We have used ``$\sum_{\langle \zeta \rangle}$'' to represent that $\zeta$ is a linear combination of elements in $K \otimes K$, in order to make the notation consistent with that for comultiplication.
It can be shown that $\zeta$ is an idempotent element of the enveloping algebra $K \otimes K^{\rm op}$. If $K$ is a weak Kac algebra \footnote{Recall that weak Kac algebra is a weak Hopf algebra such that $S^2=\id$, see Refs.~\cite{yamanouchi1994duality,bohm2000weakII}. All finite dimensional $C^*$ Hopf algebras are weak Kac algebras.} with Haar integral $\lambda_K$, then it is straightforward to verify that $\zeta = \sum_{(\lambda_K)} \lambda_K^{(1)} \otimes S(\lambda_K^{(2)})$ serves as a symmetric separability idempotent. The existence and uniqueness of the symmetric separability idempotent for a finite-dimensional semisimple algebra over an algebraically closed field of characteristic zero are established in \cite[Corollary 3.1]{aguiar2000note}.

For the symmetry boundary, as depicted in Eq.~\eqref{eq:ClusterLadder}, with the bulk on the left-hand side of the boundary, we define the following operator for \(z \otimes w \in K \otimes K^{\mathrm{op}}\):
\begin{align}
       \begin{aligned}
           \begin{tikzpicture}
    \begin{scope}[rotate=90] 
        \fill[green!20] (0, 0) rectangle (1,2); 
        \draw[-stealth,blue,line width = 1.6pt] (0,1) -- (0,0);
        \draw[-stealth,blue,line width = 1.6pt] (0,2) -- (0,1); 
        \draw[-stealth,black] (0,1) -- (1,1); 
       	\node[ line width=0.2pt, dashed, draw opacity=0.5] (a) at (-0.4,0.5){$y$};
    		\node[ line width=0.2pt, dashed, draw opacity=0.5] (a) at (-0.4,1.5){$x$};
            \node[ line width=0.2pt, dashed, draw opacity=0.5] (a) at (0.5,0.8){$h$};
    \end{scope}
\end{tikzpicture}
 \end{aligned} & \quad \quad \Av^{z\otimes w}_{v_s}|x,y,h\rangle=\sum_{[z]} | w x, yz^{[0]}, hz^{[-1]}\rangle. \label{eq:Avs2}
\end{align}
Now we define the boundary operator as 
\begin{equation}
    \Av^K_{v_s} = \Av^{\zeta_K}_{v_s} = \sum_{\langle \zeta_K \rangle} \Av^{\zeta_K^{\langle 1 \rangle} \otimes \zeta_K^{\langle 2 \rangle}}_{v_s}
\end{equation}
where \(\zeta_K\) is the symmetric separability idempotent of \( K \). 
The Hamiltonian for the symmetry boundary is
\begin{equation}
    \Hbb_{\rm sym} = -\sum_{v_s: \rm sym} \Av_{v_s}^K.
\end{equation}

\begin{remark}
When \( K \) is a weak Kac subalgebra of \( H \), it is also a left \( H \)-comodule algebra. In this case, the symmetric separability idempotent of \( K \) is given by
\[
\zeta_K = \sum_{(\lambda_K)} \lambda_K^{(1)} \otimes S(\lambda_K^{(2)}) = \sum_{(\lambda_K)} \lambda_K^{(2)} \otimes S(\lambda_K^{(1)}) = \sum_{(\lambda_K)} S^{-1}(\lambda_K^{(2)}) \otimes \lambda_K^{(1)} = \sum_{(\lambda_K)} S^{-1}(\lambda_K^{(1)}) \otimes \lambda_K^{(2)}
\]
where \( \lambda_K \) denotes the Haar integral of \( K \). Here, we have used the properties \( S(\lambda_K) = S^{-1}(\lambda_K) = \lambda_K \), the cocommutativity of \( \lambda_K \), and the fact that the antipode \( S: K \to K \) acts as an anti-homomorphism on the coalgebra.
In this case, the boundary vertex operator in Eq.~\eqref{eq:Avs2} becomes:
\begin{equation}
    \begin{aligned}
     \begin{tikzpicture}
    \begin{scope}[rotate=90] 
        \fill[green!20] (0, 0) rectangle (1,2); 
        \draw[-stealth,blue,line width = 1.6pt] (0,1) -- (0,0);
        \draw[-stealth,blue,line width = 1.6pt] (0,2) -- (0,1); 
        \draw[-stealth,black] (0,1) -- (1,1); 
       	\node[ line width=0.2pt, dashed, draw opacity=0.5] (a) at (-0.4,0.5){$y$};
    		\node[ line width=0.2pt, dashed, draw opacity=0.5] (a) at (-0.4,1.5){$x$};
            \node[ line width=0.2pt, dashed, draw opacity=0.5] (a) at (0.5,0.8){$h$};
    \end{scope}
\end{tikzpicture} \end{aligned} 
 \quad 
 \begin{aligned}
    & \Av^K_{v_s}=\XR_{\lambda_K^{\cone}}\otimes \XL_{\lambda_K^{\ctwo}}\otimes \XL_{\lambda_K^{\cthree}},\\
    & \Av^K_{v_s}|x,y,h\rangle= |\lambda_K^{\cone}x, yS^{-1}(\lambda_K^{\ctwo}), hS^{-1}(\lambda_K^{\cthree})\rangle.
 \end{aligned}
\end{equation} 
The cocommutativity of \(\lambda\) implies that the operator \(\Av^K_{v_s}\) is independent of the specific starting link \(s = (v_s, f)\) chosen at the vertex \(v_s\). Consequently, \(\Av^K_{v_s}\) depends solely on the vertex \(v_s\) itself.
\end{remark}

For the physical boundary as drawn in Eq.~\eqref{eq:ClusterLadder}, the bulk is on the right-hand side of the boundary, and \( J \) is a right \( H \)-comodule algebra. We define the following operator for \( z \otimes w \in J \otimes J^{\rm op} \):
\begin{align}
       \begin{aligned}
\begin{tikzpicture}
    \begin{scope}[rotate=-90] 
        \fill[green!20] (0, 0) rectangle (1,2); 
        \draw[-stealth,red,line width = 1.6pt] (0,0) -- (0,1);
        \draw[-stealth,red, line width = 1.6pt] (0,1) -- (0,2); 
        \draw[-stealth,black] (1,1) -- (0,1); 
        \node[ line width=0.2pt, dashed, draw opacity=0.5] (a) at (-0.4,0.5){$x$};
        \node[ line width=0.2pt, dashed, draw opacity=0.5] (a) at (-0.4,1.5){$y$};
        \node[ line width=0.2pt, dashed, draw opacity=0.5] (a) at (0.5,0.75){$h$};
    \end{scope}
\end{tikzpicture}
\end{aligned} & \quad \quad \Av^{z\otimes w}_{v_p}|x,h,y\rangle=\sum_{[z]} | w^{[0]}x, w^{[1]}h, y z\rangle.  \label{eq:bdd-sta-A2}
\end{align}
The boundary operator is defined similarly as that for symmetry boundary 
\begin{equation}
    \Av^J_{v_p} = \Av^{\zeta_J}_{v_p} = \sum_{\langle \zeta_J \rangle} \Av^{\zeta_J^{\langle 1 \rangle} \otimes \zeta_J^{\langle 2 \rangle}}_{v_p}
\end{equation}
where \(\zeta_J\) is the symmetric separability idempotent of \( J \). 
The Hamiltonian for the symmetry boundary is
\begin{equation}
    \Hbb_{\rm phys} = -\sum_{v_p:\rm phys} \Av_{v_s}^J.
\end{equation}

\begin{remark}
When \( J \) is chosen as a weak Kac subalgebra of \( H \), it is also a right \( H \)-comodule algebra. In this case, the symmetric separability idempotent of \( J \) is given by
$\zeta_J = \sum_{(\lambda_J)} \lambda_J^{(1)} \otimes S(\lambda_J^{(2)})$ with \( \lambda_J \) denotes the Haar integral of \( J \).
In this case, the boundary vertex operator becomes:
\begin{equation}
    \begin{aligned}
    \begin{tikzpicture}
    \begin{scope}[rotate=-90] 
        \fill[green!20] (0, 0) rectangle (1,2); 
        \draw[-stealth,red,line width = 1.6pt] (0,0) -- (0,1);
        \draw[-stealth,red, line width = 1.6pt] (0,1) -- (0,2); 
        \draw[-stealth,black] (1,1) -- (0,1); 
        \node[ line width=0.2pt, dashed, draw opacity=0.5] (a) at (-0.4,0.5){$x$};
        \node[ line width=0.2pt, dashed, draw opacity=0.5] (a) at (-0.4,1.5){$y$};
        \node[ line width=0.2pt, dashed, draw opacity=0.5] (a) at (0.5,0.75){$h$};
    \end{scope}
\end{tikzpicture} \end{aligned} 
 \quad 
 \begin{aligned}
    & \Av^J_{v_p}=\XR_{\lambda_J^{\cone}}\otimes \XR_{\lambda_J^{\ctwo}}\otimes \XL_{\lambda_J^{\cthree}},\\
    & \Av^J_{v_p}|x,h,y\rangle= |\lambda_J^{\cone}x, \lambda_J^{\ctwo} h, yS^{-1}(\lambda_J^{\cthree})\rangle.
 \end{aligned}
\end{equation} 
The cocommutativity of \(\lambda\) implies that the operator \(\Av^J_{v_p}\) is independent of the specific starting link \(s = (v_p, f)\) chosen at the vertex \(v_s\). Consequently, \(\Av^J_{v_p}\) depends solely on the vertex \(v_s\) itself.
\end{remark}

The bulk stabilizers are face operators \( \Bf_f^{H} \), whose expression is the same as that in Eq.~\eqref{eq:ABdef} but with a different interpretation. For the symmetry boundary \( K \), we need to introduce
\[
\ZL^K_{\psi} | y \rangle = \sum_{[y]} \psi(S(y^{[-1]})) |y^{[0]}\rangle , \quad y\in K,
\]
where there is no direct pairing between \( K \) and \( \hat{H} \); instead, we use the pairing between \( H \) and \( \hat{H} \), with \( y^{[-1]} \in H \).
For the physical boundary, we similarly introduce \( \ZR^J_{\psi} \) as
\[
\ZR^J_{\psi} | x \rangle = \sum_{[x]} \psi(x^{[1]}) | x^{[0]} \rangle, \quad x\in J.
\]
Using these two edge operators, we can define face operator as 
\begin{equation}
\begin{aligned}
    \begin{tikzpicture}
    \begin{scope}[rotate=-90] 
        \fill[green!20] (0, 0) rectangle (1,1); 
        \draw[-stealth,red,line width = 1.6pt] (0,0) -- (0,1);
       \draw[cyan,dotted, line width = 1pt] (0,0) -- (0.5,0.5); 
        \draw[-stealth,blue,line width = 1.6pt] (1,0) -- (1,1);
        \draw[-stealth,black] (1,1) -- (0,1); 
                \draw[-stealth,black] (1,0) -- (0,0); 
        \node[ line width=0.2pt, dashed, draw opacity=0.5] (a) at (-0.3,0.5){$x$};
        \node[ line width=0.2pt, dashed, draw opacity=0.5] (a) at (1.4,0.5){$y$};
        \node[ line width=0.2pt, dashed, draw opacity=0.5] (a) at (0.5,1.25){$h$};
            \node[ line width=0.2pt, dashed, draw opacity=0.5] (a) at (0.5,-.25){$g$};
    \end{scope}
\end{tikzpicture}   
\end{aligned}
\quad
\begin{aligned}
   & \Bf_f^{\psi}=\ZR_{\psi^{\cone}}\otimes \ZL^K_{\psi^{\ctwo}} \otimes \ZL_{\psi^{\cthree}} \otimes \ZR_{\psi^{\cfour}}^J\\
   &   \Bf_f^{\psi} |g,y,h,x\rangle= \sum \psi(g^{\ctwo} S(y^{[-1]}) S(h^{(1)}) x^{[1]}) |g^{\cone},y^{[0]},h^{\ctwo},x^{[0]}\rangle
\end{aligned}
\end{equation}
By setting $\psi$ as Haar measure $\Lambda\in \hat{H}$, we obtain face stabilizer operator $\Bf_f^H=\Bf_f^{\Lambda}$, the corresponding bulk Hamiltonian reads
\begin{equation}
    \Hbb_{\rm bk} =-\sum_f \Bf_f^{H}.
\end{equation}

The total weak Hopf cluster ladder Hamiltonian is of the form
\begin{equation}\label{eq:clusterLadderHam}
    \mathbb{H}[H,K,J]=\mathbb{H}_{\rm bk}+\mathbb{H}_{\rm sym} + \mathbb{H}_{\rm phys} 
\end{equation}
which realize the SymTFT for a given weak Hopf gauge symmetry $H$.

The bulk phase is given by \( \Rep(D(H)) \simeq \mathcal{Z}(\Rep(H)) \) as we have discussed before. The symmetry boundary is equivalently characterized by the category \( {_{K}\Mod} \) of all left \( K \)-modules, and it can be checked that this forms a left module category over \( \Rep(H) \).
The symmetry is thus given by the category of endofunctor category
\begin{equation}
    \eS =\eB_{\rm sym} \simeq \Fun_{\Rep(H)}({_{K}\Mod}, {_{K}\Mod}). 
\end{equation}
For the physical boundary, since \( J \) is a right \( H \)-comodule algebra, the boundary is equivalently characterized by \( \Mod_{J} \), which is a right module category over \( \Rep(H) \). The physical boundary excitation is given by  the category of endofunctor category
\begin{equation}
    \eB_{\rm phys} = \Fun_{\Rep(H)}(\Mod_{J}, \Mod_{J}).
\end{equation}

\begin{remark}[String-net realization]
  From a macroscopic point of view, the bulk of the ladder is characterized by the multifusion category \( \Rep(H) \). The symmetry boundary is characterized by the category of left \( K \)-modules \( \eM_K = {_K}\mathsf{Mod} \), which is a left module category over \( \Rep(H) \). Similarly, the physical boundary is given by the module category \( \eM_J = \Mod_J \) over \( \Rep(H) \). The category \( \eC = \Rep(H) \) can be treated as the bulk input data, while \( \eM_K \) and \( \eM_J \) serve as two boundary input data. We can also construct a string-net lattice model that realizes the corresponding cluster ladder model.
\end{remark}

\subsection{Ground state degeneracy}

On a ladder lattice with periodic boundary condition, the ground state degeneracy can be calculated using a macroscopic argument. 
The bulk topological phase is given by the Drinfeld center $\mathsf{Bulk}(H)=\mathcal{Z}(\Rep(H))$ of $\Rep(H)$, which is braided monoidal equivalent to the representation category of charge symmetry, Drinfeld double $D(H)$ of gauge symmetry $H$.
The bulk phase is a unitary modular tensor category, from the anyon condensation point of view, the topological boundary is determined by Lagrangian algebras in $\mathsf{Bulk}(H)$.
For each boundary labeled by $H$ comodule algebra $K$, there is a corresponding Lagrangian algebra $\mathcal{A}_K$ in $\mathsf{Bulk}(H)$.

Before discussing the ground state degeneracy of the weak Hopf ladder model, let us first consider a more general case. Take a closed sphere $\mathbb{S}^2$ and punch $n$ holes in it. For each hole, assign a topological boundary condition denoted by $K_i$, where $i = 1, \ldots, n$.
For example, when $n=3$ we have the following manifold:  
\begin{equation*}
\begin{tikzpicture}[
  tqft,
  every outgoing boundary component/.style={fill=blue!50},
  outgoing boundary component 3/.style={fill=none,draw=black},
  every incoming boundary component/.style={fill=yellow!10},
  every lower boundary component/.style={draw,black},
  every upper boundary component/.style={draw,black},
  cobordism/.style={fill=gray!40},
  cobordism edge/.style={draw,black},
  genus=0,
  hole 2/.style={draw},
  view from=incoming,
  anchor=between incoming 1 and 2
]
\pic[name=a,tqft,
    incoming boundary components=3,
    outgoing boundary components=0,
    offset=-.5];
\end{tikzpicture}
\end{equation*}
Then for each boundary there is a Lagrangian algebra $A_{K_i}$, which can be regarded as a charge in the bulk. The sphere with $n$ gapped holes is equivalent to a sphere with $n$ charges. In this case the ground state degeneracy is just the number of independent fusion channel between these charges and the vacuum charge $\mathbb{1}$ of the bulk phase:
\begin{equation}\label{eq:GSD-Lag}
  \mathrm{GSD}= \operatorname{dim}  \operatorname{Hom}(\mathbb{1},\mathcal{A}_{K_1}\otimes \cdots \otimes \mathcal{A}_{K_n}).
\end{equation}

If we replace the sphere with a more complicated $g$-genus surface and punch $n$ holes, the ground state degeneracy can also be derived from the Moore-Seiberg formula \cite{Moore1989,ritzzwilling2023topological}.
Consider an orientable $g$-genus surface $\mathbb{M}_{g,n}$ with $n$ holes, the anyons for there holes are still $\cA_{K_1},\cdots,\cA_{K_k}$. 
This surface can be decomposed into surfaces of pants, which are then glued together. It is important to note that this decomposition is generally not unique, but the associativity of anyon fusion ensures that the final ground state degeneracy remains consistent:
\begin{equation}\label{eq:GSD-TQFT}
    \operatorname{GSD}[\mathbb{M}_{g,n},\cA_{K_1},\cdots,\cA_{K_k}]=\sum_{Y\in \Irr(\Rep(D(H)))}\left(\prod_{j=1}^k S_{\cA_{K_j},Y}\right)S_{\mathbb{1},Y}^{2-2g-n},
\end{equation}
where $S$ is the S-matrix for the bulk phase.

The weak Hopf ladder model corresponds to the case where the manifold is a cylinder, which can be thought of as a sphere with two punctures. In this model, the boundary conditions are labeled as follows: the symmetry boundary is associated with $\cA_K$, and the physical boundary is associated with $\cA_J$. By applying the Eq.~\eqref{eq:GSD-Lag} or Eq.~\eqref{eq:GSD-TQFT}, one can determine the ground state degeneracy for this configuration.

\subsection{Solving the model via tensor network}

The cluster ladder model can be solved using the weak Hopf tensor network introduced in Section~\ref{sec:tensor-network}.
There will be four types of local tensors we need to construct the cluster ladder ground state based on a tensor network construction:

(1) For symmetry boundary edge, the corresponding local tensor is the comultiplication of Haar integral $\lambda_K$ of weak Kac subalgebra $K\leq H$:
\begin{equation}
     \Delta(\lambda_K)= \begin{aligned}			
\begin{tikzpicture}
    \node[draw, fill=blue!20, minimum width=0.6cm, minimum height=0.6cm] (center) at (0,0) {$\scriptstyle\lambda_K$};
    \draw[line width=1.0pt] (center.north) -- ++(0,.5) node[above] {$\scriptstyle\lambda_K^{\ctwo}$};
    \draw[line width=1.0pt ] (center.south) -- ++(0,-.5) node[below] {$\scriptstyle \lambda_K^{\cone}$};
\end{tikzpicture}
\end{aligned} :=  \begin{aligned}
				\begin{tikzpicture}
					\draw[black, line width=1.0pt]  (-0.5, 1) .. controls (-0.4, 0) and (0.4, 0) .. (0.5, 1);
					\draw[black, line width=1.0pt]  (0,0.23)--(0,-0.23);
        \filldraw[green!30, draw=black] 
       (0,-0.2) circle (0.2);
    \node at (0, -0.2) {\textcolor{black}{$\scriptstyle \lambda_K$}};
                   \node at (-0.5, 1.3) {$\scriptstyle \lambda_K^{\cone}$};
	                   \node at (0.5, 1.3) {$\scriptstyle \lambda_K^{\ctwo}$};			
    \end{tikzpicture}
\end{aligned}.
\end{equation}
The leg of the tensor that will be contracted in the tensor network  is \( \lambda_K^{\ctwo} \).

(2) For physical boundary edge, the corresponding local tensor is the comultiplication of Haar integral $\lambda_J$ of weak Kac subalgebra $J\leq H$:
\begin{equation}
     \Delta(\lambda_J)= \begin{aligned}			
\begin{tikzpicture}
    \node[draw, fill=cyan!20, minimum width=0.6cm, minimum height=0.6cm] (center) at (0,0) {$\scriptstyle\lambda_J$};
    \draw[line width=1.0pt] (center.north) -- ++(0,.5) node[above] {$\scriptstyle\lambda_J^{\cone}$};
    \draw[line width=1.0pt ] (center.south) -- ++(0,-.5) node[below] {$\scriptstyle \lambda_J^{\ctwo}$};
\end{tikzpicture}
\end{aligned} :=  \begin{aligned}
				\begin{tikzpicture}
					\draw[black, line width=1.0pt]  (-0.5, 1) .. controls (-0.4, 0) and (0.4, 0) .. (0.5, 1);
					\draw[black, line width=1.0pt]  (0,0.23)--(0,-0.23);
        \filldraw[green!30, draw=black] 
       (0,-0.2) circle (0.2);
    \node at (0, -0.2) {\textcolor{black}{$\scriptstyle \lambda_J$}};
                   \node at (-0.5, 1.3) {$\scriptstyle \lambda_J^{\cone}$};
	                   \node at (0.5, 1.3) {$\scriptstyle \lambda_J^{\ctwo}$};			
    \end{tikzpicture}
\end{aligned}.
\end{equation}
Notice that we have put \( \lambda_J^{\cone} \) at the top; the leg of the tensor that will be contracted in the tensor network is \( \lambda_J^{\ctwo} \).

(3) For the bulk edge, the local tensor corresponds to $\Delta_2(\lambda)$:
\begin{equation}
     \Delta_2(\lambda)= \begin{aligned}			
\begin{tikzpicture}
    \draw[line width=1.0pt] (center.north) (0,0) -- ++(0.8,0) node[right] {$\scriptstyle\lambda^{\ctwo}$};
    \draw[line width=1.0pt] (center.north) (0,0) -- ++(-0.8,0) node[left] {$\scriptstyle\lambda^{\cthree}$};
    \draw[line width=1.0pt ] (center.south) -- ++(0,-.5) node[below] {$\scriptstyle \lambda^{\cone}$};
    \node[draw, fill=gray!20, minimum width=0.6cm, minimum height=0.6cm] (center) at (0,0) {$\scriptstyle\lambda$};
\end{tikzpicture}
\end{aligned} :=  \begin{aligned}
				\begin{tikzpicture}
					\draw[black, line width=1.0pt]  (-0.5, 1) .. controls (-0.4, 0) and (0.4, 0) .. (0.5, 1);
					\draw[black, line width=1.0pt]  (0,1)--(0,-0.23);
             \filldraw[green!30, draw=black] 
       (0,-0.2) circle (0.2);
    \node at (0, -0.2) {\textcolor{black}{$\scriptstyle \lambda$}};
                   \node at (-0.7, 1.3) {$\scriptstyle \lambda^{\cone}$};
	                   \node at (0.7, 1.3) {$\scriptstyle \lambda^{\cthree}$};	
                     \node at (0, 1.3) {$\scriptstyle \lambda^{\ctwo}$};	
    \end{tikzpicture}
\end{aligned}.
\end{equation}
The legs of the tensor that will be contracted in the tensor network  are \( \lambda^{\ctwo}, \lambda^{\cthree} \).

(4) For face, a local tensor corresponds to the Haar measure $\Lambda\in \hat{H}$ is needed to glue the edge tensors together:
\begin{equation}
    ( \id \otimes \id\otimes \hat{S} \otimes \hat{S}) \circ \hat{\Delta}(\Lambda)=
    \begin{aligned}			
\begin{tikzpicture}
\draw[line width=1.0pt] (center.north) (0,0) -- ++(0.8,0) node[right] {$\scriptstyle\Lambda^{\ctwo}$};
\draw[line width=1.0pt] (center.north) (0,0) -- ++(0,0.8) node[above] {$\scriptstyle \hat{S}(\Lambda^{\cthree})$};
\draw[line width=1.0pt] (center.north) (0,0) -- ++(-0.8,0) node[left] {$\scriptstyle \hat{S}(\Lambda^{\cfour})$};
\draw[line width=1.0pt ] (center.south) -- ++(0,-.5) node[below] {$\scriptstyle \Lambda^{\cone}$};
\node[draw, fill=yellow!20, minimum size=0.6cm, shape=circle] (center) at (0,0) {$\scriptstyle\Lambda$};
 \filldraw[black] (-0.5,0) circle (2pt);  
  \filldraw[black] (0,0.5) circle (2pt);  
\end{tikzpicture}
\end{aligned} 
    =
    \begin{aligned}
			\begin{tikzpicture}
				 \draw[black, line width=1.0pt]  (-0.5, 0) .. controls (-0.4, 1) and (0.4, 1) .. (0.5, 0);
                 \draw[black, line width=1.0pt]  (-1.3, 0) .. controls (-1.2, 1) and (1.2, 1) .. (1.3, 0);
				 \draw[black, line width=1.0pt]  (0,0.75)--(0,1);
             \filldraw[red!30, draw=black] 
        (0,1.2) circle (0.2);
         \node at (0, 1.2) {$\scriptstyle \Lambda$};
          \node at (-0.7, -.3) {$\scriptstyle \Lambda^{\ctwo}$};
           \node at (-1.4, -.3) {$\scriptstyle \Lambda^{\cone}$};
              \node at (0.4, -.3) {$\scriptstyle \hat{S}( \Lambda^{\cthree})$};
               \node at (1.5, -.3) {$\scriptstyle \hat{S}(\Lambda^{(4)})$};
               \filldraw[black] (0.4,0.375) circle (2pt);  
             \filldraw[black] (1.12,0.375) circle (2pt);  

				\end{tikzpicture}
			\end{aligned}.
\end{equation}
The antipode \(\hat{S}\) is introduced because the direction of the lattice is chosen as in Eq.~\eqref{eq:ClusterLattice}. If a different lattice configuration is chosen, the antipode may be placed on different legs of the tensor.

\begin{theorem}
The cluster ladder state $|\Psi\rangle_{\rm cluster}$ is defined as the following tensor network state:
\begin{equation}
    \begin{aligned}			
\begin{tikzpicture}
    \draw[line width=1.0pt] (-0.5,0) -- ++(11.3,0) ;
    \draw[line width=1.0pt, red] (0,0) -- ++(0,-.8);
        \draw[line width=1.0pt, red] (2,0) -- ++(0,-.8);
    \draw[line width=1.0pt, red] (4,0) -- ++(0,-.8);
    \draw[line width=1.0pt, red] (6,0) -- ++(0,-.8);
    \draw[line width=1.0pt, red] (8,0) -- ++(0,-.8);
    \draw[line width=1.0pt, red] (10,0) -- ++(0,-.8);
     \filldraw[black] (0.5,0) circle (2pt);  
      \filldraw[black] (2.5,0) circle (2pt);
       \filldraw[black] (4.5,0) circle (2pt);
        \filldraw[black] (6.5,0) circle (2pt);
         \filldraw[black] (8.5,0) circle (2pt);
          \filldraw[black] (10.5,0) circle (2pt);
  \filldraw[black] (1,0.5) circle (2pt);  
  \filldraw[black] (3,0.5) circle (2pt);  
  \filldraw[black] (5,0.5) circle (2pt);  
  \filldraw[black] (7,0.5) circle (2pt);  
  \filldraw[black] (9,0.5) circle (2pt);  
    \draw[line width=1.0pt] (1,-1) -- ++(0,1) ;
    \draw[line width=1.0pt, red] (1,-1) -- ++(0,-.8);
    \node[draw, fill=blue!20, minimum width=0.6cm, minimum height=0.6cm] (center) at (1,-1) {$\scriptstyle\lambda_K$};
     \draw[line width=1.0pt] (3,-1) -- ++(0,1) ;
    \draw[line width=1.0pt, red] (3,-1) -- ++(0,-.8);
    \node[draw, fill=blue!20, minimum width=0.6cm, minimum height=0.6cm] (center) at (3,-1) {$\scriptstyle\lambda_K$};
         \draw[line width=1.0pt] (5,-1) -- ++(0,1) ;
    \draw[line width=1.0pt, red] (5,-1) -- ++(0,-.8);
    \node[draw, fill=blue!20, minimum width=0.6cm, minimum height=0.6cm] (center) at (5,-1) {$\scriptstyle\lambda_K$};
         \draw[line width=1.0pt] (7,-1) -- ++(0,1) ;
    \draw[line width=1.0pt, red] (7,-1) -- ++(0,-.8);
    \node[draw, fill=blue!20, minimum width=0.6cm, minimum height=0.6cm] (center) at (7,-1) {$\scriptstyle\lambda_K$};
         \draw[line width=1.0pt] (9,-1) -- ++(0,1) ;
    \draw[line width=1.0pt, red] (9,-1) -- ++(0,-.8);
    \node[draw, fill=blue!20, minimum width=0.6cm, minimum height=0.6cm] (center) at (9,-1) {$\scriptstyle\lambda_K$};
    \draw[line width=1.0pt] (1,1) -- ++(0,-1) ;
    \draw[line width=1.0pt, red] (1,1) -- ++(0,.8);
    \node[draw, fill=cyan!20, minimum width=0.6cm, minimum height=0.6cm] (center) at (1,1) {$\scriptstyle\lambda_J$};
    \draw[line width=1.0pt] (3,1) -- ++(0,-1) ;
    \draw[line width=1.0pt, red] (3,1) -- ++(0,.8);
    \node[draw, fill=cyan!20, minimum width=0.6cm, minimum height=0.6cm] (center) at (3,1) {$\scriptstyle\lambda_J$};
    \draw[line width=1.0pt] (5,1) -- ++(0,-1) ;
    \draw[line width=1.0pt, red] (5,1) -- ++(0,.8);
    \node[draw, fill=cyan!20, minimum width=0.6cm, minimum height=0.6cm] (center) at (5,1) {$\scriptstyle\lambda_J$};
    \draw[line width=1.0pt] (7,1) -- ++(0,-1) ;
    \draw[line width=1.0pt, red] (7,1) -- ++(0,.8);
    \node[draw, fill=cyan!20, minimum width=0.6cm, minimum height=0.6cm] (center) at (7,1) {$\scriptstyle\lambda_J$};
    \draw[line width=1.0pt] (9,1) -- ++(0,-1) ;
    \draw[line width=1.0pt, red] (9,1) -- ++(0,.8);
    \node[draw, fill=cyan!20, minimum width=0.6cm, minimum height=0.6cm] (center) at (9,1) {$\scriptstyle\lambda_J$};
    
    \node[draw, fill=gray!20, minimum width=0.6cm, minimum height=0.6cm] (center) at (0,0) {$\scriptstyle\lambda$};
    \node[draw, fill=yellow!20, minimum size=0.6cm, shape=circle] (center) at (1,0) {$\scriptstyle\Lambda$};
     \node[draw, fill=gray!20, minimum width=0.6cm, minimum height=0.6cm] (center) at (2,0) {$\scriptstyle\lambda$};
      \node[draw, fill=yellow!20, minimum size=0.6cm, shape=circle] (center) at (3,0) {$\scriptstyle\Lambda$};
        \node[draw, fill=gray!20, minimum width=0.6cm, minimum height=0.6cm] (center) at (4,0) {$\scriptstyle\lambda$};
      \node[draw, fill=yellow!20, minimum size=0.6cm, shape=circle] (center) at (5,0) {$\scriptstyle\Lambda$};
        \node[draw, fill=gray!20, minimum width=0.6cm, minimum height=0.6cm] (center) at (6,0) {$\scriptstyle\lambda$};
      \node[draw, fill=yellow!20, minimum size=0.6cm, shape=circle] (center) at (7,0) {$\scriptstyle\Lambda$};
        \node[draw, fill=gray!20, minimum width=0.6cm, minimum height=0.6cm] (center) at (8,0) {$\scriptstyle\lambda$};
      \node[draw, fill=yellow!20, minimum size=0.6cm, shape=circle] (center) at (9,0) {$\scriptstyle\Lambda$};
        \node[draw, fill=gray!20, minimum width=0.6cm, minimum height=0.6cm] (center) at (10,0) {$\scriptstyle\lambda$};
\end{tikzpicture}
\end{aligned} 
\end{equation}
where we have drawn the free leg (corresponding to physical degrees of freedom) in red.
The cluster state $|\Psi\rangle_{\rm cluster}$  is the ground state of the cluster ladder model defined in Eq.~\eqref{eq:clusterLadderHam}, namely, 
\begin{equation}
    \Av_{v_s}^{K}|\Psi\rangle_{\rm cluster} =   \Av_{v_p}^{J}|\Psi\rangle_{\rm cluster} =|\Psi\rangle_{\rm cluster}= \Bf_{f}^{H}|\Psi\rangle_{\rm cluster},
\end{equation}
for all $v_s,v_p,f$.
\end{theorem}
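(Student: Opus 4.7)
The plan is to verify the three stabilizer conditions $\Av^K_{v_s}|\Psi\rangle_{\rm cluster}=\Av^J_{v_p}|\Psi\rangle_{\rm cluster}=\Bf^H_f|\Psi\rangle_{\rm cluster}=|\Psi\rangle_{\rm cluster}$ separately by pushing each local operator into the tensor network and absorbing it using the defining properties of the Haar integrals $\lambda,\lambda_K,\lambda_J$, the Haar measure $\Lambda$, the symmetric separability idempotents $\zeta_K,\zeta_J$, and the comodule axioms for $K$ and $J$. The essential idea is the same as in the Hopf cluster state model of Ref.~\cite{jia2024generalized}: each edge tensor of $|\Psi\rangle_{\rm cluster}$ carries a comultiplicated Haar integral, and each face tensor a comultiplicated Haar measure, so the stabilizer action reduces to local multiplications of $\lambda_*$ or $\Lambda$ against themselves, and these are idempotent.

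First I would treat the face stabilizer $\Bf^H_f=\Bf^\Lambda_f=\ZR_{\Lambda^{\cone}}\otimes\ZL^K_{\Lambda^{\ctwo}}\otimes\ZL_{\Lambda^{\cthree}}\otimes\ZR^J_{\Lambda^{\cfour}}$. Each of the four $Z$-type operators, by the definitions in Section~\ref{sec:tensor-network}, pairs the physical leg of the corresponding edge against a fragment of $\Lambda$ via the pairing $\langle\cdot,\cdot\rangle:\hat{H}\otimes H\to\Cbb$ (or its comodule analogues through $\beta_K,\beta_J$). On the physical legs the network places $\lambda^{\cone}$, $\lambda_K^{\cone}$, or $\lambda_J^{\cone}$; the remaining copies of these integrals are already contracted against the face $\Lambda$-tensor $(\id\otimes\id\otimes\hat{S}\otimes\hat{S})\circ\hat\Delta_3(\Lambda)$. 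After reshuffling the pairings, the new $\Lambda$-fragments multiply (in $\hat{H}$) with the fragments of the face tensor, and the identity $\Lambda\cdot\Lambda=\Lambda$ for the Haar measure, together with the cocommutativity of $\Lambda$ used to cancel the $\hat S$ insertions dictated by the chosen orientation in Eq.~\eqref{eq:ClusterLadder}, returns exactly the original face tensor.

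Next I would handle the boundary vertex stabilizers. For $\Av^K_{v_s}=\Av^{\zeta_K}_{v_s}$, use Eq.~\eqref{eq:Avs2}: it multiplies the two adjacent symmetry-boundary edges (carrying $\Delta(\lambda_K)$) by $\zeta^{\langle 2\rangle}_K$ on the left and $(\zeta^{\langle 1\rangle}_K)^{[0]}$ on the right, and simultaneously acts on the adjacent bulk edge (carrying $\Delta_2(\lambda)$) via the coaction component $(\zeta^{\langle 1\rangle}_K)^{[-1]}\in H$. The invariance $x\lambda_K=\varepsilon_L(x)\lambda_K$, $\lambda_K y=\lambda_K\varepsilon_R(y)$ of the Haar integral, combined with the symmetric separability property $\sum x\zeta^{\langle 1\rangle}_K\otimes\zeta^{\langle 2\rangle}_K=\sum\zeta^{\langle 1\rangle}_K\otimes\zeta^{\langle 2\rangle}_Kx$ and $\sum\zeta^{\langle 1\rangle}_K\zeta^{\langle 2\rangle}_K=1_K$, absorbs the left/right multiplications back into $\Delta(\lambda_K)$; the coaction piece is absorbed into $\Delta_2(\lambda)$ by the $H$-equivariance of $\beta_K$ (recall $\beta_K$ is an algebra map, so coacting by $\lambda_K^{[-1]}$ effectively multiplies a bulk leg by a component of $\lambda$ which is already present in $\Delta_2(\lambda)$). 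The argument for $\Av^J_{v_p}$ is strictly symmetric, exchanging left and right coactions and replacing $K$ by $J$.

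The main obstacle will be the bookkeeping at the interface between the bulk tensor $\Delta_2(\lambda)$ and the boundary comodule coactions $\beta_K,\beta_J$, because for a genuine weak Hopf algebra $\Delta(1_H)\neq 1_H\otimes 1_H$ and consequently $\beta_K(1_K)\in H_R\otimes K$ rather than $1_H\otimes 1_K$. One must invoke the weak comodule identity $1_K^{[-2]}\otimes 1_K^{[-1]}\otimes 1_K^{[0]}=1_H^{\cone}\otimes 1_H^{\ctwo}1_K^{[-1]}\otimes 1_K^{[0]}$ recorded in the remark following Definition~\ref{defintion:comoduleAlge}, so that the ``weak-identity'' factors produced by expanding $\Delta(1_H)$ inside $\Delta_2(\lambda)$ cancel against the corresponding factors generated by $\beta_K(\lambda_K^{[0]})$ and the weak counit identity $\sum S(\lambda^{\cone})\lambda^{\ctwo}=\varepsilon_R(\lambda)=1_H$. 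Once this is set up, the remaining computations are straightforward applications of the pairing axioms of Definition~\ref{def:pairingDef} and the weak Hopf axioms rewritten through the structure tensors of Section~\ref{sec:tensor-network}.
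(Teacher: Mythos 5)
Your proposal is correct and follows essentially the same route as the paper's proof, which simply invokes the techniques of Ref.~\cite[Section 7]{Jia2023weak} together with the pairing axioms, the idempotency $\lambda^2=\lambda$, $\Lambda^2=\Lambda$, and the cocommutativity of the Haar integral and Haar measure — precisely the ingredients you identify for absorbing each stabilizer into the local tensors. Your explicit flagging of the weak-identity subtlety ($\Delta(1_H)\neq 1_H\otimes 1_H$ and the weak comodule condition $\beta_K(1_K)\in H_R\otimes K$) is a useful elaboration of a point the paper leaves implicit, but it does not change the argument.
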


\begin{proof}
This can be proved using the techniques developed in Ref.~\cite[Section 7]{Jia2023weak} and the property of pairing, $\lambda^2=\lambda$, $\Lambda^2=\Lambda$, along with the cyclic property of the comultiplication of the Haar integral and Haar measure.
\end{proof}

\section{Examples of weak Hopf cluster ladder model}
\label{sec:exampleCluster}
In this section, let us examine some examples of our lattice model.
An example for Haagerup fusion category symmetry is given in Ref.~\cite{jia2024quantumclusterstatemodel} and a the model with Ising fusion category symmetry is given in Ref.~\cite{jia2025Ising}. Here we provide more examples.

\subsection{$H_8$ model}

The first example we consider is the Kac-Paljutkin algebra $H_8$ \cite{kac1966finite}, which is generated by three elements $x, y, z$ with the constraints 
\begin{gather*}
        x^2 = y^2 = 1, \quad z^2 = \frac{1}{2}(1+x+y-xy), \\
        xy = yx,\quad zx=yz,\quad zy = xz. 
\end{gather*}
It's clear that $\dim H_8 =8$ and the basis is $\{1,x,y,xy,z,zx,zy,zxy\}$. 
The coalgebra structure and the antipode are determined by 
\begin{gather*}
    \Delta(x) = x\otimes x,\quad \Delta(y) = y\otimes y, \\
    \Delta(z) = \frac{1}{2}(1\otimes 1+y\otimes 1 + 1\otimes x - y\otimes x)(z\otimes z), \\
    \varepsilon(x) = \varepsilon(y) = \varepsilon(z) = 1, \\
    S(x) = x,\quad S(y) = y, \quad S(z) = z,
\end{gather*}
with linear extension. 
$H_8$ is a $C^*$ Hopf algebra with $*$-operation given by
\begin{equation}
    x^*=x,\quad y^*=y,\quad z^*=z^{-1}=z^3=\frac{1}{2}(z+zx + zy -zxy).
\end{equation}
The Haar integral is given by 
\begin{equation}
    \lambda =\frac{1}{8}(1 + x + y + xy + z + zx + zy + zxy).
\end{equation}
And Haar measure (Haar integral of $\hat{H}_8$) is given by
\begin{equation}
    \Lambda =\delta_1(\bullet).
\end{equation}
Since $H_8$ is neither commutative nor cocommutative, it can not be represented as a group algebra.

The quantum group $H_8$ has four invertible one-dimensional irreducible representations $\Gamma_{\one}, \Gamma_{a}, \Gamma_{b}, \Gamma_{ab}$ and one two-dimensional irreducible representation $\Gamma_{\sigma}$.  
Since $\Rep(H_8)$ is equivalent to the Tambara--Yamagami fusion category $\mathsf{TY}(\Zbb_2\times \Zbb_2, \chi_{\rm diag}, \epsilon=+1)$ \cite{tambara1998tensor}, they share the same fusion rules:
\begin{equation}
\begin{aligned}
     &\Gamma_{a}\otimes \Gamma_{b} = \Gamma_{ab}, \qquad 
     \Gamma_{\sigma} \otimes \Gamma_{\sigma} = \Gamma_{\one} \oplus \Gamma_a \oplus \Gamma_b \oplus \Gamma_{ab},\\
     &\Gamma_a \otimes \Gamma_{\sigma} = \Gamma_b \otimes \Gamma_{\sigma} = \Gamma_{ab} \otimes \Gamma_{\sigma} = \Gamma_{\sigma},
\end{aligned}
\end{equation}
where only the nontrivial fusion rules are written explicitly. We also note that the fusion in $\Rep(H_8)$ is commutative.

Consider a ladder lattice (Eq.~\ref{eq:ClusterLattice}), on smooth boundary, the vertex operator is given by 
\begin{equation}
    \mathbb{A}_v^{\lambda}=\frac{1}{8} ( \mathbb{A}_v^{1}+ \mathbb{A}_v^{x}+\mathbb{A}_v^{y} + \mathbb{A}_v^{xy} +\mathbb{A}_v^{z} +\mathbb{A}_v^{zx} +\mathbb{A}_v^{zy} +\mathbb{A}_v^{zxy})
\end{equation}
where the sum runs over all basis elements. For each basis element $g$, $\Av^g= \XR_{g^{\cone}} \otimes \XL_{g^{\ctwo}} \otimes \XL_{g^{\cthree}}$ (we omit the summation for comultiplication there):
\begin{align}
       \begin{aligned}
           \begin{tikzpicture}
    \begin{scope}[rotate=90] 
        \fill[green!20] (0, 0) rectangle (1,2); 
        \draw[-stealth,blue,line width = 1.6pt] (0,1) -- (0,0);
        \draw[-stealth,blue,line width = 1.6pt] (0,2) -- (0,1); 
        \draw[-stealth,black] (0,1) -- (1,1); 
       	\node[ line width=0.2pt, dashed, draw opacity=0.5] (a) at (-0.4,0.5){$b$};
    		\node[ line width=0.2pt, dashed, draw opacity=0.5] (a) at (-0.4,1.5){$a$};
            \node[ line width=0.2pt, dashed, draw opacity=0.5] (a) at (0.5,0.8){$c$};
    \end{scope}
\end{tikzpicture}
 \end{aligned} & \quad \quad \Av^{g}_{v}|a,b,c\rangle=\sum_{(g)} | g^{(1)} a, b S(g^{(2)}), c S(g^{(3)})\rangle. 
\end{align}
The face operator is given by $\Bf_f^{\Lambda}=\ZR_{\Lambda^{\cone}}\otimes \ZL_{\hat{S}(\Lambda^{\ctwo} )}  \otimes \ZL_{\hat{S}(\Lambda^{\ctwo} )}$, it's action reads
\begin{equation}
\begin{aligned}
    \begin{tikzpicture}
    \begin{scope}[rotate=-90] 
        \fill[green!20] (0, 0) rectangle (1,1); 
        \draw[-stealth,blue,line width = 1.6pt] (1,0) -- (1,1);
        \draw[-stealth,black] (1,1) -- (0,1); 
                \draw[-stealth,black] (1,0) -- (0,0); 
        \node[ line width=0.2pt, dashed, draw opacity=0.5] (a) at (1.4,0.5){$s$};
        \node[ line width=0.2pt, dashed, draw opacity=0.5] (a) at (0.5,1.25){$t$};
            \node[ line width=0.2pt, dashed, draw opacity=0.5] (a) at (0.5,-.25){$r$};
    \end{scope}
\end{tikzpicture}   
\end{aligned}
\quad
\begin{aligned}
   &   \Bf_f^{\Lambda} |r,s,t\rangle= \sum \delta_1(r^{\ctwo} S(s^{\cone}) S(t^{(1)})) |r^{\cone},s^{\ctwo},t^{\ctwo}\rangle
\end{aligned}.
\end{equation}

On a closed manifold $\mathbb{S}^1$, the corresponding cluster state model possesses the symmetry
\begin{equation}
    \Cocom(H_8) \times \Cocom(\hat{H}_8).
\end{equation}
For any $h \in \Cocom(H_8)$, the associated symmetry operator is given by
\begin{equation}
    S_h = \XR_{h^{\cone}} \otimes \XR_{h^{\ctwo}} \otimes \cdots \otimes \XR_{h^{(n)}}= 
    \begin{aligned}
\begin{tikzpicture}
    \def\n{5}
    \def\s{1}
    \fill[green!20] (0, 0) rectangle (\n*\s+\s, \s); 
    \foreach \i in {0,...,\n} {
        \draw[-stealth, line width=1.0pt,blue, midway] (\i*\s, 0) -- (\i*\s+\s, 0);
        \draw[dotted, line width=1.0pt,red, midway] (\i*\s, \s) -- (\i*\s+\s, \s);
        \draw[-stealth,line width=1.0pt, midway] (\i*\s, 0) -- (\i*\s, \s);
    }
    \draw[-stealth, midway,line width=1.0pt] (\n*\s+\s, 0) -- (\n*\s+\s, \s);
    \node[ line width=0.2pt, dashed, draw opacity=0.5] (a) at (.45,0.5){$\XR_{h^{\cone}}$};
    \node[ line width=0.2pt, dashed, draw opacity=0.5] (a) at (1.45,0.5){$\XR_{h^{\ctwo}}$};
    \node[ line width=0.2pt, dashed, draw opacity=0.5] (a) at (2.45,0.5){$\XR_{h^{\cthree}}$};
        \node[ line width=0.2pt, dashed, draw opacity=0.5] (a) at (4.45,0.5){$\cdots$};
     \node[ line width=0.2pt, dashed, draw opacity=0.5] (a) at (5.45,0.5){$\XR_{h^{(n)}}$};
      \draw[-stealth, white, line width=2pt, midway] (6, 0) -- (6, 1.02);
\end{tikzpicture}
\end{aligned}
\end{equation}
which acts on all vertex edges of the lattice (notice that we have assume the periodic boundary condition in the above lattice).
For $\psi\in \Cocom(\hat{H}_8)$, the symmetry operator is given by 
\begin{equation}
    T_{\psi}= \ZR_{\psi^{\cone}} \otimes \ZR_{\psi^{\ctwo}} \otimes \cdots \otimes \ZR_{\psi^{(n)}}= 
    \begin{aligned}
\begin{tikzpicture}
    \def\n{5}
    \def\s{1}
    \fill[green!20] (0, 0) rectangle (\n*\s+\s, \s); 
    \foreach \i in {0,...,\n} {
        \draw[-stealth, line width=1.0pt,blue, midway] (\i*\s, 0) -- (\i*\s+\s, 0);
        \draw[dotted, line width=1.0pt,red, midway] (\i*\s, \s) -- (\i*\s+\s, \s);
        \draw[-stealth,line width=1.0pt, midway] (\i*\s, 0) -- (\i*\s, \s);
    }
    \draw[-stealth, midway,line width=1.0pt] (\n*\s+\s, 0) -- (\n*\s+\s, \s);
    \node[ line width=0.2pt, dashed, draw opacity=0.5] (a) at (.5,-0.5){\color{blue}$\ZR_{\psi^{\cone}}$};
    \node[ line width=0.2pt, dashed, draw opacity=0.5] (a) at (1.5,-0.5){\color{blue}$\ZR_{\psi^{\ctwo}}$};
    \node[ line width=0.2pt, dashed, draw opacity=0.5] (a) at (2.5,-0.5){\color{blue}$\ZR_{\psi^{\cthree}}$};
    \node[ line width=0.2pt, dashed, draw opacity=0.5] (a) at (4.5,-0.5){\color{blue}$\cdots$};
     \node[ line width=0.2pt, dashed, draw opacity=0.5] (a) at (5.5,-0.5){\color{blue}$\ZR_{\psi^{(n)}}$};
      \draw[-stealth, white, line width=2pt, midway] (6, 0) -- (6, 1.02);
\end{tikzpicture}
\end{aligned}
\end{equation}
which acts on all horizontal edges.
When for $\Gamma\in \Rep(H_8)$, its character $\chi_{\Gamma} \in \Cocom(H_8)$, thus the $H_8$ cluster state model also have $\Rep(H_8)$ as a sub-symmetry.

\subsection{Model from a boundary tube algebra $\mathbf{Tube}({_{\eC}}\eM)$}

The \emph{boundary tube algebra} provides a general method for constructing weak Hopf cluster state models with non-invertible symmetries.  
Starting from a $\eC$-module category $\eM$, the corresponding boundary tube algebra $\mathbf{Tube}({_{\eC}}\eM)$ satisfies  
\begin{equation}
    \Rep\!\left(\mathbf{Tube}({_{\eC}}\eM)\right) \;\simeq\; \Fun_{\eC}(\eM,\eM)^{\rm rev},
\end{equation}
where the right-hand side is the unitary fusion category of $\eC$-module functors from $\eM$ to itself, which characterizes the boundary topological excitations.  
The superscript ``rev'' denotes the reversed tensor product, defined by $X \otimes_{\rm rev} Y := Y \otimes X$.  
This reversal arises because the comultiplication of the boundary tube algebra is conventionally taken downward.  
One can instead adopt the coopposite comultiplication to remove this reversal.  
Since different authors follow different conventions, both versions---with and without the ``rev''---appear in the literature.

If we set $\eM=\eC$, we obtain the \emph{smooth boundary} in the Levin--Wen string-net model.  
In this case, one finds
\begin{equation}
    \Rep\!\left(\mathbf{Tube}({_{\eC}}\eC)\right) 
    \;\simeq\; \Fun_{\eC}(\eC,\eC)^{\rm rev} 
    \;\simeq\; \eC.
\end{equation}
Thus, 
\(
    \mathcal{T}_{\eC} := \mathbf{Tube}({_{\eC}}\eC)
\)
is the weak Hopf symmetry underlying the symmetry boundary, and we take this as the input data for the cluster state model.  
An explicit construction for the Haagerup fusion category symmetry is given in Ref.~\cite{jia2024quantumclusterstatemodel}, and a model with Ising fusion category symmetry is presented in Ref.~\cite{jia2025Ising}.  
Here, we provide an example for the Fibonacci fusion category symmetry.

\subsubsection{Fibonacci fusion category symmetry}

Notice that the Fibonacci category is given by 
\(
\mathsf{Fib}=\{\one,\tau\},
\) 
with the (non-trivial) fusion rule
\(
\tau \otimes \tau \;=\; \one \oplus \tau.
\)   
where $\varphi=\frac{1+\sqrt{5}}{2}$ is the golden ratio.
All nonzero $F$-symbols are equal to $1$, except for
\begin{equation}
   [F^{\tau,\tau,\tau}_{\tau}]_{\one}^{\one}=\varphi^{-1}, 
\qquad 
[F^{\tau,\tau,\tau}_{\tau}]_{\tau}^{\one}
= [F^{\tau,\tau,\tau}_{\tau}]_{\one}^{\tau} = \varphi^{-1/2}, 
\qquad 
[F^{\tau,\tau,\tau}_{\tau}]_{\tau}^{\tau}=-\varphi^{-1}. 
\end{equation}
The topological spins are 
\(
\theta_{\one}=1, \theta_{\tau}=e^{4\pi i/5},
\) 
and the quantum dimensions are 
\(
d_{\one}=1, d_{\tau}=\varphi.
\)

The boundary tube algebra $\mathcal{T}_{\mathsf{Fib}}$ is $13$-dimensional. 
A basis of $\mathcal{T}_{\mathsf{Fib}}$ is given by
\begin{align*}
    &\begin{aligned}
    \begin{tikzpicture}[scale=0.5]
     \begin{scope}
            \fill[gray!20]
                (0,1.5) arc[start angle=90, end angle=270, radius=1.5] -- 
                (0,-0.5) arc[start angle=270, end angle=90, radius=0.5] -- cycle;
        \end{scope}
         \draw[dotted,line width=.7pt,black] (0,0.5)--(0,0.8);
         \draw[dotted,line width=.7pt,black] (0,0.8)--(0,1.2);
         \draw[dotted,line width=.7pt,black] (0,1.2)--(0,1.5);
         \draw[dotted,line width=.7pt,black] (0,-0.5)--(0,-0.8);
         \draw[dotted,line width=.7pt,black] (0,-0.8)--(0,-1.2);
         \draw[dotted,line width=.7pt,black] (0,-1.2)--(0,-1.5);
         \draw[dotted,line width=.7pt,red] (0,0.8) arc[start angle=90, end angle=270, radius=0.8];
        \node[ line width=0.6pt, dashed, draw opacity=0.5] (a) at (-1,0){$\scriptstyle \one$};
        \node[ line width=0.6pt, dashed, draw opacity=0.5] (a) at (-0.3,1.3){$\scriptstyle \one$};
        \node[ line width=0.6pt, dashed, draw opacity=0.5] (a) at (-0.3,-1.3){$\scriptstyle \one$};
        \end{tikzpicture}
    \end{aligned} \;\; 
    \begin{aligned}
    \begin{tikzpicture}[scale=0.5]
     \begin{scope}
            \fill[gray!20]
                (0,1.5) arc[start angle=90, end angle=270, radius=1.5] -- 
                (0,-0.5) arc[start angle=270, end angle=90, radius=0.5] -- cycle;
        \end{scope}
         \draw[line width=.7pt,black] (0,0.5)--(0,1.5);
         \draw[dotted,line width=.7pt,black] (0,-0.5)--(0,-0.8);
         \draw[dotted,line width=.7pt,black] (0,-0.8)--(0,-1.2);
         \draw[dotted,line width=.7pt,black] (0,-1.2)--(0,-1.5);
         \draw[dotted,line width=.7pt,red] (0,0.8) arc[start angle=90, end angle=270, radius=0.8];
         \node[ line width=0.6pt, dashed, draw opacity=0.5] (a) at (-1,0){$\scriptstyle \one$};
        \node[ line width=0.6pt, dashed, draw opacity=0.5] (a) at (-0.3,1.3){$\scriptstyle \tau$};
        \node[ line width=0.6pt, dashed, draw opacity=0.5] (a) at (-0.3,-1.3){$\scriptstyle \one$};
        \end{tikzpicture}
    \end{aligned} \;\; 
     \begin{aligned}
    \begin{tikzpicture}[scale=0.5]
     \begin{scope}
            \fill[gray!20]
                (0,1.5) arc[start angle=90, end angle=270, radius=1.5] -- 
                (0,-0.5) arc[start angle=270, end angle=90, radius=0.5] -- cycle;
        \end{scope}
         \draw[dotted,line width=.7pt,black] (0,0.5)--(0,1.5);
         \draw[line width=.7pt,black] (0,-0.5)--(0,-0.8);
         \draw[line width=.7pt,black] (0,-0.8)--(0,-1.2);
         \draw[line width=.7pt,black] (0,-1.2)--(0,-1.5);
         \draw[dotted,line width=.7pt,red] (0,0.8) arc[start angle=90, end angle=270, radius=0.8];
         \node[ line width=0.6pt, dashed, draw opacity=0.5] (a) at (-1,0){$\scriptstyle \one$};
        \node[ line width=0.6pt, dashed, draw opacity=0.5] (a) at (-0.3,1.3){$\scriptstyle \one$};
        \node[ line width=0.6pt, dashed, draw opacity=0.5] (a) at (-0.3,-1.3){$\scriptstyle \tau$};
        \end{tikzpicture}
    \end{aligned} \;\; 
    \begin{aligned}
    \begin{tikzpicture}[scale=0.5]
     \begin{scope}
            \fill[gray!20]
                (0,1.5) arc[start angle=90, end angle=270, radius=1.5] -- 
                (0,-0.5) arc[start angle=270, end angle=90, radius=0.5] -- cycle;
        \end{scope}
         \draw[line width=.7pt,black] (0,0.5)--(0,1.5);
         \draw[line width=.7pt,black] (0,-0.5)--(0,-0.8);
         \draw[line width=.7pt,black] (0,-0.8)--(0,-1.2);
         \draw[line width=.7pt,black] (0,-1.2)--(0,-1.5);
         \draw[dotted,line width=.7pt,red] (0,0.8) arc[start angle=90, end angle=270, radius=0.8];
         \node[ line width=0.6pt, dashed, draw opacity=0.5] (a) at (-1,0){$\scriptstyle \one$};
        \node[ line width=0.6pt, dashed, draw opacity=0.5] (a) at (-0.3,1.3){$\scriptstyle \tau$};
        \node[ line width=0.6pt, dashed, draw opacity=0.5] (a) at (-0.3,-1.3){$\scriptstyle \tau$};
        \end{tikzpicture}
    \end{aligned}
\end{align*}

\begin{align*}
    &\begin{aligned}
    \begin{tikzpicture}[scale=0.5]
     \begin{scope}
            \fill[gray!20]
                (0,1.5) arc[start angle=90, end angle=270, radius=1.5] -- 
                (0,-0.5) arc[start angle=270, end angle=90, radius=0.5] -- cycle;
        \end{scope}
         \draw[dotted,line width=.7pt,black] (0,0.5)--(0,0.8);
         \draw[line width=.7pt,black] (0,0.8)--(0,1.2);
         \draw[line width=.7pt,black] (0,1.2)--(0,1.5);
         \draw[dotted,line width=.7pt,black] (0,-0.5)--(0,-0.8);
         \draw[line width=.7pt,black] (0,-0.8)--(0,-1.2);
         \draw[line width=.7pt,black] (0,-1.2)--(0,-1.5);
         \draw[line width=.7pt,red] (0,0.8) arc[start angle=90, end angle=270, radius=0.8];
        \node[ line width=0.6pt, dashed, draw opacity=0.5] (a) at (-1,0){$\scriptstyle \tau$};
        \node[ line width=0.6pt, dashed, draw opacity=0.5] (a) at (-0.3,1.3){$\scriptstyle \tau$};
        \node[ line width=0.6pt, dashed, draw opacity=0.5] (a) at (-0.3,0.3){$\scriptstyle \one$}; 
        \node[ line width=0.6pt, dashed, draw opacity=0.5] (a) at (-0.3,-0.3){$\scriptstyle \one$}; 
        \node[ line width=0.6pt, dashed, draw opacity=0.5] (a) at (-0.3,-1.3){$\scriptstyle \tau$};
        \end{tikzpicture}
    \end{aligned} \;\; 
    \begin{aligned}
    \begin{tikzpicture}[scale=0.5]
     \begin{scope}
            \fill[gray!20]
                (0,1.5) arc[start angle=90, end angle=270, radius=1.5] -- 
                (0,-0.5) arc[start angle=270, end angle=90, radius=0.5] -- cycle;
        \end{scope}
         \draw[line width=.7pt,black] (0,0.5)--(0,0.8);
         \draw[dotted,line width=.7pt,black] (0,0.8)--(0,1.2);
         \draw[dotted,line width=.7pt,black] (0,1.2)--(0,1.5);
         \draw[dotted,line width=.7pt,black] (0,-0.5)--(0,-0.8);
         \draw[line width=.7pt,black] (0,-0.8)--(0,-1.2);
         \draw[line width=.7pt,black] (0,-1.2)--(0,-1.5);
         \draw[line width=.7pt,red] (0,0.8) arc[start angle=90, end angle=270, radius=0.8];
        \node[ line width=0.6pt, dashed, draw opacity=0.5] (a) at (-1,0){$\scriptstyle \tau$};
        \node[ line width=0.6pt, dashed, draw opacity=0.5] (a) at (-0.3,1.3){$\scriptstyle \one$};
        \node[ line width=0.6pt, dashed, draw opacity=0.5] (a) at (-0.3,0.3){$\scriptstyle \tau$}; 
        \node[ line width=0.6pt, dashed, draw opacity=0.5] (a) at (-0.3,-0.3){$\scriptstyle \one$}; 
        \node[ line width=0.6pt, dashed, draw opacity=0.5] (a) at (-0.3,-1.3){$\scriptstyle \tau$};
        \end{tikzpicture}
    \end{aligned} \;\; 
        \begin{aligned}
    \begin{tikzpicture}[scale=0.5]
     \begin{scope}
            \fill[gray!20]
                (0,1.5) arc[start angle=90, end angle=270, radius=1.5] -- 
                (0,-0.5) arc[start angle=270, end angle=90, radius=0.5] -- cycle;
        \end{scope}
         \draw[line width=.7pt,black] (0,0.5)--(0,0.8);
         \draw[line width=.7pt,black] (0,0.8)--(0,1.2);
         \draw[line width=.7pt,black] (0,1.2)--(0,1.5);
         \draw[dotted,line width=.7pt,black] (0,-0.5)--(0,-0.8);
         \draw[line width=.7pt,black] (0,-0.8)--(0,-1.2);
         \draw[line width=.7pt,black] (0,-1.2)--(0,-1.5);
         \draw[line width=.7pt,red] (0,0.8) arc[start angle=90, end angle=270, radius=0.8];
        \node[ line width=0.6pt, dashed, draw opacity=0.5] (a) at (-1,0){$\scriptstyle \tau$};
        \node[ line width=0.6pt, dashed, draw opacity=0.5] (a) at (-0.3,1.3){$\scriptstyle \tau$};
        \node[ line width=0.6pt, dashed, draw opacity=0.5] (a) at (-0.3,0.3){$\scriptstyle \tau$}; 
        \node[ line width=0.6pt, dashed, draw opacity=0.5] (a) at (-0.3,-0.3){$\scriptstyle \one$}; 
        \node[ line width=0.6pt, dashed, draw opacity=0.5] (a) at (-0.3,-1.3){$\scriptstyle \tau$};
        \end{tikzpicture}
    \end{aligned} \;\; 
    \begin{aligned}
    \begin{tikzpicture}[scale=0.5]
     \begin{scope}
            \fill[gray!20]
                (0,1.5) arc[start angle=90, end angle=270, radius=1.5] -- 
                (0,-0.5) arc[start angle=270, end angle=90, radius=0.5] -- cycle;
        \end{scope}
         \draw[dotted,line width=.7pt,black] (0,0.5)--(0,0.8);
         \draw[line width=.7pt,black] (0,0.8)--(0,1.2);
         \draw[line width=.7pt,black] (0,1.2)--(0,1.5);
         \draw[line width=.7pt,black] (0,-0.5)--(0,-0.8);
         \draw[dotted,line width=.7pt,black] (0,-0.8)--(0,-1.2);
         \draw[dotted,line width=.7pt,black] (0,-1.2)--(0,-1.5);
         \draw[line width=.7pt,red] (0,0.8) arc[start angle=90, end angle=270, radius=0.8];
        \node[ line width=0.6pt, dashed, draw opacity=0.5] (a) at (-1,0){$\scriptstyle \tau$};
        \node[ line width=0.6pt, dashed, draw opacity=0.5] (a) at (-0.3,1.3){$\scriptstyle \tau$};
        \node[ line width=0.6pt, dashed, draw opacity=0.5] (a) at (-0.3,0.3){$\scriptstyle \one$}; 
        \node[ line width=0.6pt, dashed, draw opacity=0.5] (a) at (-0.3,-0.3){$\scriptstyle \tau$}; 
        \node[ line width=0.6pt, dashed, draw opacity=0.5] (a) at (-0.3,-1.3){$\scriptstyle \one$};
        \end{tikzpicture}
    \end{aligned} \;\; 
    \begin{aligned}
    \begin{tikzpicture}[scale=0.5]
     \begin{scope}
            \fill[gray!20]
                (0,1.5) arc[start angle=90, end angle=270, radius=1.5] -- 
                (0,-0.5) arc[start angle=270, end angle=90, radius=0.5] -- cycle;
        \end{scope}
         \draw[line width=.7pt,black] (0,0.5)--(0,0.8);
         \draw[dotted,line width=.7pt,black] (0,0.8)--(0,1.2);
         \draw[dotted,line width=.7pt,black] (0,1.2)--(0,1.5);
         \draw[line width=.7pt,black] (0,-0.5)--(0,-0.8);
         \draw[dotted,line width=.7pt,black] (0,-0.8)--(0,-1.2);
         \draw[dotted,line width=.7pt,black] (0,-1.2)--(0,-1.5);
         \draw[line width=.7pt,red] (0,0.8) arc[start angle=90, end angle=270, radius=0.8];
        \node[ line width=0.6pt, dashed, draw opacity=0.5] (a) at (-1,0){$\scriptstyle \tau$};
        \node[ line width=0.6pt, dashed, draw opacity=0.5] (a) at (-0.3,1.3){$\scriptstyle \one$};
        \node[ line width=0.6pt, dashed, draw opacity=0.5] (a) at (-0.3,0.3){$\scriptstyle \tau$}; 
        \node[ line width=0.6pt, dashed, draw opacity=0.5] (a) at (-0.3,-0.3){$\scriptstyle \tau$}; 
        \node[ line width=0.6pt, dashed, draw opacity=0.5] (a) at (-0.3,-1.3){$\scriptstyle \one$};
        \end{tikzpicture}
    \end{aligned} \;\; 
        \begin{aligned}
    \begin{tikzpicture}[scale=0.5]
     \begin{scope}
            \fill[gray!20]
                (0,1.5) arc[start angle=90, end angle=270, radius=1.5] -- 
                (0,-0.5) arc[start angle=270, end angle=90, radius=0.5] -- cycle;
        \end{scope}
         \draw[line width=.7pt,black] (0,0.5)--(0,0.8);
         \draw[line width=.7pt,black] (0,0.8)--(0,1.2);
         \draw[line width=.7pt,black] (0,1.2)--(0,1.5);
         \draw[line width=.7pt,black] (0,-0.5)--(0,-0.8);
         \draw[dotted,line width=.7pt,black] (0,-0.8)--(0,-1.2);
         \draw[dotted,line width=.7pt,black] (0,-1.2)--(0,-1.5);
         \draw[line width=.7pt,red] (0,0.8) arc[start angle=90, end angle=270, radius=0.8];
        \node[ line width=0.6pt, dashed, draw opacity=0.5] (a) at (-1,0){$\scriptstyle \tau$};
        \node[ line width=0.6pt, dashed, draw opacity=0.5] (a) at (-0.3,1.3){$\scriptstyle \tau$};
        \node[ line width=0.6pt, dashed, draw opacity=0.5] (a) at (-0.3,0.3){$\scriptstyle \tau$}; 
        \node[ line width=0.6pt, dashed, draw opacity=0.5] (a) at (-0.3,-0.3){$\scriptstyle \tau$}; 
        \node[ line width=0.6pt, dashed, draw opacity=0.5] (a) at (-0.3,-1.3){$\scriptstyle \one$};
        \end{tikzpicture}
    \end{aligned} \;\; 
            \begin{aligned}
    \begin{tikzpicture}[scale=0.5]
     \begin{scope}
            \fill[gray!20]
                (0,1.5) arc[start angle=90, end angle=270, radius=1.5] -- 
                (0,-0.5) arc[start angle=270, end angle=90, radius=0.5] -- cycle;
        \end{scope}
         \draw[dotted,line width=.7pt,black] (0,0.5)--(0,0.8);
         \draw[line width=.7pt,black] (0,0.8)--(0,1.2);
         \draw[line width=.7pt,black] (0,1.2)--(0,1.5);
         \draw[line width=.7pt,black] (0,-0.5)--(0,-0.8);
         \draw[line width=.7pt,black] (0,-0.8)--(0,-1.2);
         \draw[line width=.7pt,black] (0,-1.2)--(0,-1.5);
         \draw[line width=.7pt,red] (0,0.8) arc[start angle=90, end angle=270, radius=0.8];
        \node[line width=0.6pt, dashed, draw opacity=0.5] (a) at (-1,0){$\scriptstyle \tau$};
        \node[ line width=0.6pt, dashed, draw opacity=0.5] (a) at (-0.3,1.3){$\scriptstyle \tau$};
        \node[ line width=0.6pt, dashed, draw opacity=0.5] (a) at (-0.3,0.3){$\scriptstyle \one$}; 
        \node[ line width=0.6pt, dashed, draw opacity=0.5] (a) at (-0.3,-0.3){$\scriptstyle \tau$}; 
        \node[ line width=0.6pt, dashed, draw opacity=0.5] (a) at (-0.3,-1.3){$\scriptstyle \tau$};
        \end{tikzpicture}
    \end{aligned} \;\; 
                \begin{aligned}
    \begin{tikzpicture}[scale=0.5]
     \begin{scope}
            \fill[gray!20]
                (0,1.5) arc[start angle=90, end angle=270, radius=1.5] -- 
                (0,-0.5) arc[start angle=270, end angle=90, radius=0.5] -- cycle;
        \end{scope}
         \draw[line width=.7pt,black] (0,0.5)--(0,0.8);
         \draw[dotted,line width=.7pt,black] (0,0.8)--(0,1.2);
         \draw[dotted,line width=.7pt,black] (0,1.2)--(0,1.5);
         \draw[line width=.7pt,black] (0,-0.5)--(0,-0.8);
         \draw[line width=.7pt,black] (0,-0.8)--(0,-1.2);
         \draw[line width=.7pt,black] (0,-1.2)--(0,-1.5);
         \draw[line width=.7pt,red] (0,0.8) arc[start angle=90, end angle=270, radius=0.8];
        \node[line width=0.6pt, dashed, draw opacity=0.5] (a) at (-1,0){$\scriptstyle \tau$};
        \node[ line width=0.6pt, dashed, draw opacity=0.5] (a) at (-0.3,1.3){$\scriptstyle \one$};
        \node[ line width=0.6pt, dashed, draw opacity=0.5] (a) at (-0.3,0.3){$\scriptstyle \tau$}; 
        \node[ line width=0.6pt, dashed, draw opacity=0.5] (a) at (-0.3,-0.3){$\scriptstyle \tau$}; 
        \node[ line width=0.6pt, dashed, draw opacity=0.5] (a) at (-0.3,-1.3){$\scriptstyle \tau$};
        \end{tikzpicture}
    \end{aligned} \;\; 
                    \begin{aligned}
    \begin{tikzpicture}[scale=0.5]
     \begin{scope}
            \fill[gray!20]
                (0,1.5) arc[start angle=90, end angle=270, radius=1.5] -- 
                (0,-0.5) arc[start angle=270, end angle=90, radius=0.5] -- cycle;
        \end{scope}
         \draw[line width=.7pt,black] (0,0.5)--(0,0.8);
         \draw[line width=.7pt,black] (0,0.8)--(0,1.2);
         \draw[line width=.7pt,black] (0,1.2)--(0,1.5);
         \draw[line width=.7pt,black] (0,-0.5)--(0,-0.8);
         \draw[line width=.7pt,black] (0,-0.8)--(0,-1.2);
         \draw[line width=.7pt,black] (0,-1.2)--(0,-1.5);
         \draw[line width=.7pt,red] (0,0.8) arc[start angle=90, end angle=270, radius=0.8];
        \node[line width=0.6pt, dashed, draw opacity=0.5] (a) at (-1,0){$\scriptstyle \tau$};
        \node[ line width=0.6pt, dashed, draw opacity=0.5] (a) at (-0.3,1.3){$\scriptstyle \tau$};
        \node[ line width=0.6pt, dashed, draw opacity=0.5] (a) at (-0.3,0.3){$\scriptstyle \tau$}; 
        \node[ line width=0.6pt, dashed, draw opacity=0.5] (a) at (-0.3,-0.3){$\scriptstyle \tau$}; 
        \node[ line width=0.6pt, dashed, draw opacity=0.5] (a) at (-0.3,-1.3){$\scriptstyle \tau$};
        \end{tikzpicture}
    \end{aligned} \;\; 
\end{align*}
Haar integral of $\mathcal{T}_{\mathsf{Fib}}$ is of the form (where $\operatorname{rank} \Fib =2$)
\begin{equation}\label{eq:HaarTube}
    \lambda = \frac{1}{\operatorname{rank} \mathsf{Fib}} \sum_{a,x,y,\mu} 
    \sqrt{\frac{d_a}{d_x^3 d_y}}
        \begin{aligned}
    \begin{tikzpicture}
        \begin{scope}
            \fill[gray!20]
                (0,1.1) arc[start angle=90, end angle=270, radius=1.1] -- 
                (0,-0.5) arc[start angle=270, end angle=90, radius=0.5] -- cycle;
        \end{scope}
           \draw[line width=0.6pt,black] (0,0.5)--(0,1.1);
        \draw[line width=.6pt,black] (0,-0.5)--(0,-1.1);
        \draw[red, line width=0.6pt] (0,0.8) arc[start angle=90, end angle=270, radius=0.8];
        \node[line width=0.6pt, dashed, draw opacity=0.5] at (0,1.3) {$y$};
        \node[line width=0.6pt, dashed, draw opacity=0.5] at (0,-1.3) {$y$};
        \node[line width=0.6pt, dashed, draw opacity=0.5] at (-1,0) {$a$};
        \node[line width=0.6pt, dashed, draw opacity=0.5] at (0.3,-0.7) {$\mu$};
        \node[line width=0.6pt, dashed, draw opacity=0.5] at (0,-0.3) {$x$};
        \node[line width=0.6pt, dashed, draw opacity=0.5] at (0,0.3) {$x$};
        \node[line width=0.6pt, dashed, draw opacity=0.5] at (0.3,0.7) {$\mu$};
    \end{tikzpicture}
\end{aligned}.
\end{equation}
A detailed proof can be found in Ref.~\cite{jia2025weakhopftubealgebra}. 
The Haar measure (i.e., the Haar integral of the dual algebra) is given by
\begin{equation}\label{eq:HaarMTube}
    \Lambda = \frac{1}{\operatorname{rank} \Fib} \sum_{a,x,y,\mu} 
    \sqrt{\frac{d_a}{d_x^3 d_y}}
        \begin{aligned}
    \begin{tikzpicture}
        \begin{scope}
            \fill[gray!20]
                (0,1.1) arc[start angle=90, end angle=-90, radius=1.1] -- 
                (0,-0.5) arc[start angle=-90, end angle=90, radius=0.5] -- cycle;
        \end{scope}
           \draw[line width=0.6pt,black] (0,0.5)--(0,1.1);
        \draw[line width=.6pt,black] (0,-0.5)--(0,-1.1);
        \draw[blue, line width=0.6pt] (0,-0.8) arc[start angle=-90, end angle=90, radius=0.8];
        \node[line width=0.6pt, dashed, draw opacity=0.5] at (0,1.3) {$y$};
        \node[line width=0.6pt, dashed, draw opacity=0.5] at (0,-1.3) {$y$};
        \node[line width=0.6pt, dashed, draw opacity=0.5] at (1,0) {$a$};
        \node[line width=0.6pt, dashed, draw opacity=0.5] at (-0.3,-0.7) {$\mu$};
        \node[line width=0.6pt, dashed, draw opacity=0.5] at (0,-0.3) {$x$};
        \node[line width=0.6pt, dashed, draw opacity=0.5] at (0,0.3) {$x$};
        \node[line width=0.6pt, dashed, draw opacity=0.5] at (-0.3,0.7) {$\mu$};
    \end{tikzpicture}
\end{aligned}.
\end{equation}
The Haar measure induces an inner product on $\mathcal{T}_{\Fib}$ defined by
\begin{equation}
    \langle x, y \rangle = \Lambda(x^* y), \quad x, y \in \mathcal{T}_{\Fib},
\end{equation}
where the $*$-operation $x^*$ is given in Eq.~\eqref{eq:starope}, 
the product $x^* y$ refers to the multiplication in $\mathcal{T}_{\Fib}$, 
and the evaluation is defined via the following pairing function
\begin{equation} \label{eq:pairing_def}
\begin{aligned}
   p\left(\begin{aligned}
    \begin{tikzpicture}[scale=0.65]
        \begin{scope}
            \fill[gray!15]
                (0,-1.5) arc[start angle=-90, end angle=90, radius=1.5] -- 
                (0,0.5) arc[start angle=90, end angle=-90, radius=0.5] -- cycle;
        \end{scope}
        \draw[line width=0.6pt,black] (0,0.5)--(0,1.5);
        \draw[line width=.6pt,black] (0,-0.5)--(0,-1.5);
        \draw[blue, line width=0.6pt] (0,1.1) arc[start angle=90, end angle=-90, radius=1.1];
        \node[ line width=0.6pt, dashed, draw opacity=0.5] (a) at (0,-1.7){$\scriptstyle s$};
        \node[ line width=0.6pt, dashed, draw opacity=0.5] (a) at (1.3,0){$\scriptstyle b$};
        \node[ line width=0.6pt, dashed, draw opacity=0.5] (a) at (0.25,-0.7){$\scriptstyle t$};
        \node[ line width=0.6pt, dashed, draw opacity=0.5] (a) at (-0.25,-1.15){$\scriptstyle \mu$};
        \node[ line width=0.6pt, dashed, draw opacity=0.5] (a) at (0.25,0.7){$\scriptstyle u$};
        \node[ line width=0.6pt, dashed, draw opacity=0.5] (a) at (0,1.7){$\scriptstyle v$};
        \node[ line width=0.6pt, dashed, draw opacity=0.5] (a) at (-0.25,1.15){$\scriptstyle \gamma$};
    \end{tikzpicture}
\end{aligned}\;,\;
\begin{aligned}
    \begin{tikzpicture}[scale=0.65]
        \begin{scope}
            \fill[gray!15]
                (0,1.5) arc[start angle=90, end angle=270, radius=1.5] -- 
                (0,-0.5) arc[start angle=270, end angle=90, radius=0.5] -- cycle;
        \end{scope}
        \draw[line width=0.6pt,black] (0,0.5)--(0,1.5);
        \draw[line width=.6pt,black] (0,-0.5)--(0,-1.5);
        \draw[red, line width=0.6pt] (0,1.1) arc[start angle=90, end angle=270, radius=1.1];
        \node[ line width=0.6pt, dashed, draw opacity=0.5] (a) at (-1.3,0){$\scriptstyle a$};
        \node[ line width=0.6pt, dashed, draw opacity=0.5] (a) at (0,-1.7){$\scriptstyle x$};
        \node[ line width=0.6pt, dashed, draw opacity=0.5] (a) at (-0.25,-0.7){$\scriptstyle y$};
        \node[ line width=0.6pt, dashed, draw opacity=0.5] (a) at (0.26,-1.15){$\scriptstyle \nu$};
        \node[ line width=0.6pt, dashed, draw opacity=0.5] (a) at (-0.25,0.7){$\scriptstyle z$};
        \node[ line width=0.6pt, dashed, draw opacity=0.5] (a) at (0,1.7){$\scriptstyle w$};
        \node[ line width=0.6pt, dashed, draw opacity=0.5] (a) at (0.2,1.15){$\scriptstyle \zeta$};
    \end{tikzpicture}
\end{aligned}
    \right) 
     = \frac{\delta_{s,w}\delta_{t,x}\delta_{u,y}\delta_{v,z}}{d_s}\;\begin{aligned}
        \begin{tikzpicture}[scale=0.7]
        \path[black!60, fill=gray!15] (0,0) circle[radius=1.4];
             \draw[line width=.6pt,black] (0,-1.4)--(0,1.4);
             \draw[red, line width=0.6pt] (0,0.95) arc[start angle=90, end angle=270, radius=0.7];
             \draw[blue, line width=0.6pt] (0,-0.95) arc[start angle=-90, end angle=90, radius=0.7];
             \draw[line width=.6pt,black] (0,1.4) arc[start angle=90, end angle=-90, radius=1.4];
             \node[ line width=0.6pt, dashed, draw opacity=0.5] (a) at (1.2,0){$\scriptstyle \bar{s}$};
            \node[ line width=0.6pt, dashed, draw opacity=0.5] (a) at (-0.9,0.4){$\scriptstyle a$};
            \node[ line width=0.6pt, dashed, draw opacity=0.5] (a) at (0.9,-0.4){$\scriptstyle b$};
            \node[ line width=0.6pt, dashed, draw opacity=0.5] (a) at (-0.2,-0.7){$\scriptstyle t$};
            \node[ line width=0.6pt, dashed, draw opacity=0.5] (a) at (-0.2,-1.1){$\scriptstyle \mu$};
            \node[ line width=0.6pt, dashed, draw opacity=0.5] (a) at (0.2,-0.5){$\scriptstyle \nu$};
            \node[ line width=0.6pt, dashed, draw opacity=0.5] (a) at (-0.2,-0.1){$\scriptstyle u$};
            \node[ line width=0.6pt, dashed, draw opacity=0.5] (a) at (-0.2,0.7){$\scriptstyle v$};
            \node[ line width=0.6pt, dashed, draw opacity=0.5] (a) at (-0.2,0.3){$\scriptstyle \gamma$};
            \node[ line width=0.6pt, dashed, draw opacity=0.5] (a) at (0.2,1){$\scriptstyle \zeta$};
        \end{tikzpicture}
    \end{aligned}\;. 
\end{aligned}
\end{equation}
In Ref.~\cite{jia2025weakhopftubealgebra}, we show that this is a weak Hopf skew-pairing.  Using the above data, we can construct the corresponding Fibonacci cluster state model.

\section{Discussion}
\label{sec:Discussion}
In this paper, we propose a general framework for $(1+1)$D lattice models exhibiting non-invertible symmetries. We demonstrate that these models can be analyzed using weak Hopf algebras. 
The weak Hopf cluster state model and its generalizations are discussed in detail, and we show that the underlying symmetry are subalgebras of $H \times \hat{H}$. Thus, dual symmetry plays a crucial role. In the group case, non-invertibility primarily arises due to the existence of dual symmetry. For the general Hopf and weak Hopf cases, non-invertibility emerges from both components of the $H \times \hat{H}$ symmetry.

Despite the progress made, several important open problems remain. A central question is the classification of all $(1+1)$D phases with weak Hopf symmetry. We conjecture that such phases can be classified within a weak Hopf cohomological framework. Another major challenge is the extension of these results to higher dimensions. For $(2+1)$D and beyond, additional categorical structures must be introduced to fully capture the nature of the symmetry.  
From the viewpoint of Tannaka–Krein duality, there should exist an algebraic object corresponding to a fusion $n$-category. Identifying this structure and constructing the associated lattice realization constitute central problems for future investigation.  
The potential applications of this model to measurement-based quantum computation and other quantum information tasks are also of great interest. We plan to explore these directions in future work.

\begin{acknowledgments}
I sincerely thank Dagomir Kaszlikowski for his support, Liang Kong and Sheng Tan for discussions on SymTFT during my visit to BIMSA.
This work is supported by the National Research Foundation in Singapore, the A*STAR under its CQT Bridging Grant, CQT-Return of PIs EOM YR1-10 Funding and  CQT Young Researcher Career Development Grant.

\end{acknowledgments}

\bibliographystyle{apsrev4-1-title}
\bibliography{Jiabib}

\end{document}